\title{Refining Tournament Solutions via Margin of Victory}
\author{Markus Brill \\
Efficient Algorithms \\
TU Berlin 
\And
Ulrike Schmidt-Kraepelin \\
Efficient Algorithms \\
TU Berlin 
\And
Warut Suksompong \\
Department of Computer Science \\
University of Oxford
}
\tikzset{>={Latex[width=2mm,length=2mm]}}
\newtheorem{theorem}{Theorem}
\newtheorem{corollary}[theorem]{Corollary}
\theoremstyle{definition}
\newtheorem{definition}[theorem]{Definition}
\newtheorem{example}[theorem]{Example}
\newtheorem*{claim}{Claim}
\newenvironment{proofsketch}{%
  \proof}{\endproof}
\newcommand{\cp}{\ensuremath{\mathit{CO}}\xspace}
\newcommand{\tc}{\ensuremath{\mathit{TC}}\xspace}
\newcommand{\uc}{\ensuremath{\mathit{UC}}\xspace}
\newcommand{\ba}{\ensuremath{\mathit{BA}}\xspace}
\DeclareMathOperator*{\argmax}{arg\,max}
\newcommand{\ie}{i.e.,\xspace}
\newcommand{\eg}{e.g.,\xspace}
\newcommand{\midd}{\mathrel{:}}
\newcommand{\citet}[1]{\citeauthor{#1}~(\citeyear{#1})} 
\newcommand{\citep}[1]{\cite{#1}}					   
\newcounter{int}
\newcommand{\citen}[1] {\setcounter{int}{0}\@for\tmp:=#1\do{%
\ifnum \value{int}>0; \fi%
\setcounter{int}{1}%
\citeauthor{\tmp} \shortcite{\tmp}}}
\newcommand{\citenp}[1]{\setcounter{int}{0}\@for\tmp:=#1\do{%
\ifnum \value{int}>0; \fi%
\setcounter{int}{1}%
\citeauthor{\tmp}, \citeyear{\tmp}}}
\newcommand{\repeatcaption}[2]{%
  \renewcommand{\thefigure}{\ref{#1}}%
  \captionsetup{list=no}%
  \caption{#2 (repeated from page \pageref{#1})}%
}
\begin{document}

\maketitle

\begin{abstract}
Tournament solutions are frequently used to select winners from a set of alternatives based on pairwise comparisons between alternatives.
Prior work has shown that several common tournament solutions tend to select large winner sets and therefore have low discriminative power.
In this paper, we propose a general framework for refining tournament solutions.
In order to distinguish between winning alternatives, and also between non-winning ones, we introduce the notion of \emph{margin of victory (MoV)} for tournament solutions. 
MoV is a robustness measure for individual alternatives: For winners, the MoV captures the distance from dropping out of the winner set, and for non-winners, the distance from entering the set. In each case, distance is measured in terms of which pairwise comparisons would have to be reversed in order to achieve the desired outcome.
For common tournament solutions, including the top cycle, the uncovered set, and the Banks set, we determine the complexity of computing the MoV and provide worst-case bounds on the MoV for both winners and non-winners.
Our results can also be viewed from the perspective of bribery and manipulation.
\end{abstract}

\section{Introduction}

A number of practical choice scenarios involving pairwise comparisons of alternatives can be modeled using tournaments. For instance, the pairwise comparisons could represent match outcomes when alternatives are teams in a round-robin sports competition, or the results of pairwise majority comparisons when the alternatives are candidates in an election. 
In order to select the set of ``winners'' from a tournament, several methods, known in the literature as \emph{tournament solutions}, have been proposed. Over the past decades, many of these tournament solutions have been extensively studied from both the axiomatic and the computational point of view \citep{Laslier97,BrandtBrHa16}. Due to their generality and wide range of applications, the study of tournament solutions and their properties has attracted considerable attention 
in recent years 
(e.g., \citenp{BBH16a,ABF+15a,MSY15a,Dey17a}).

Although tournament solutions provide a rich supply of procedures for choosing tournament winners according to various criteria, they often exhibit low discriminative power because the chosen winner sets tend to be large.
Indeed, previous work has shown that common tournament solutions such as the top cycle, the uncovered set, the Banks set, and the minimal covering set almost never exclude any alternative in a random tournament \citep{Fey08,ScottFe12}, while the bipartisan set includes on average half of the alternatives in the winner set \citep{FisherRy95}.\footnote{These results assume that tournaments are chosen from the uniform distribution. \citet{BrSe15a} and \citet{SaileSu18} relax this assumption and study the discriminative power of tournament solutions when tournaments are generated according to different stochastic models.}$^{,}$\footnote{\citet{BBSS14a} show that any tournament solution satisfying the property of \emph{stability}, including the top cycle and the bipartisan set, chooses at least half of the alternatives on average.}
This naturally raises the question of how tournament solutions can be refined in order to differentiate among the winners of a given tournament.

In this paper, we propose a general framework for refining tournament solutions and for distinguishing among the winners---as well as among the non-winners---of a tournament.
We introduce the concept of \emph{margin of victory (MoV)} for tournaments, which captures how close a winner is to dropping out of the winner set, and by symmetry how close a non-winner is to entering the winner set.
For a given tournament and weights on the tournament edges, the MoV of a winner is defined as the minimum total weight of edges whose reversals take it out of the winner set.
Analogously, the MoV of a non-winner is defined as the negative of the minimum total weight of edges whose reversals bring it into the winner set.
An important special case is when the edges are unweighted: in this case, the problem reduces to finding the minimum number of edges to be reversed.

The edge weights in our MoV framework can be interpreted in a number of different ways.
Generally speaking, they represent the strength of the edges or the cost that one incurs by reversing them.
In an election, a weight may reflect the proportion of voters who agree with the corresponding pairwise comparison, while in a sports competition, it may indicate the gap between the two teams in the match result.
Alternatively, our refinements can also be viewed through the lens of bribery and manipulation.
In this context, the weights express the amount of bribe that a manipulator needs to pay in order to reverse a pairwise comparison; the recipients of the bribe are voters in the case of an election and teams or referees in the case of a sports competition.

\begin{table*}
\begin{center}
\begin{tabular}{ l c c c c c c c c c }
\toprule
 & \multicolumn{2}{ c }{MoV for Winners} && \multicolumn{2}{ c }{MoV for Non-Winners} && \multicolumn{2}{ c }{Bounds on MoV}  &\\
\cmidrule{2-3} \cmidrule{5-6} \cmidrule{8-9}
& unweighted & weighted && unweighted & weighted && lower bound & upper bound&\\
% \cmidrule{2-9}
\midrule
Copeland (\cp) & P (\ref{thm:destructive-CP-P}) & P (\ref{thm:destructive-CP-P}) && P (\ref{thm:constructive-CL}) & P (\ref{thm:constructive-CL}) && $-(n-2)$ (\ref{thm:constructive-bound-co}) & $\lfloor n/2 \rfloor$ (\ref{thm:destructive-bounds}) &\\
Top Cycle (\tc) & P (\ref{cor:MoV-UC-kings}) & P (\ref{cor:MoV-UC-kings}) && P (\ref{thm:constructive-weighted-tc}) & P (\ref{thm:constructive-weighted-tc}) && $-1$ (\ref{thm:constructive-bound-tc}) & $\lfloor n/2 \rfloor$ (\ref{thm:destructive-bounds}) &\\
Uncovered Set (\uc) & P (\ref{cor:MoV-UC-kings}) & P (\ref{cor:MoV-UC-kings}) && $n^{O(\log n)}$ (\ref{thm:constructive-UC-unweighted}) & NP-h (\ref{thm:constructive-UC-NP}) && $-\lceil \log_2 n\rceil$ (\ref{thm:constructive-bound-uc-ba}) & $\lfloor n/2\rfloor$ (\ref{thm:destructive-bounds}) &\\
$3$-kings & P (\ref{cor:MoV-UC-kings}) & P (\ref{cor:MoV-UC-kings}) && P (\ref{thm:constructive-bound-tc}) & NP-h (\ref{thm:constructive-weighted-kkings-NP}) && $-1$ (\ref{thm:constructive-bound-tc}) & $\lfloor n/2 \rfloor$ (\ref{thm:destructive-bounds}) &\\
$k$-kings (for $k \ge 4$) & NP-h (\ref{thm:destr:kkings:nphard}) & NP-h (\ref{thm:destr:kkings:nphard}) && P (\ref{thm:constructive-bound-tc}) & NP-h (\ref{thm:constructive-weighted-kkings-NP}) && $-1$ (\ref{thm:constructive-bound-tc}) & $\lfloor n/2 \rfloor$ (\ref{thm:destructive-bounds}) &\\
Banks set (\ba) & NP-h (\ref{thm:destructive-banks-NP}) & NP-h (\ref{thm:destructive-banks-NP}) && NP-h (\ref{thm:constructive-banks-NP}) & NP-h (\ref{thm:constructive-banks-NP}) && $-\lceil \log_2 n\rceil$ (\ref{thm:constructive-bound-uc-ba}) & $\lfloor n/2 \rfloor$ (\ref{thm:destructive-bounds}) &\\
\bottomrule
\end{tabular}
\caption{Overview of our results, with $n$ denoting the number of alternatives in the tournament. The computational results for Copeland (first four entries) also follow from \citet{FHHR09c}; for completeness, we give proofs tailored to our setting. The numbers in parentheses refer to the corresponding theorem or corollary numbers.}
\label{tab:results}
\end{center}
\end{table*}

\subsection{Our Results}

We study the computational complexity of the MoV with respect to several common tournament solutions, including the Copeland set, the top cycle, the uncovered set, and the Banks set. 
For each tournament solution, we determine the complexity of computing the MoV for both winners and non-winners, in both the unweighted and weighted setting.
In addition, we derive tight or asymptotically tight lower and upper bounds on the MoV for all of the considered tournament solutions in the unweighted setting. 
An overview of our results can be found in Table~\ref{tab:results}.

\subsection{Related Work} \label{subsec:relatedwork}

\citet{KrAi17b} considered refinements of several tournament solutions based on their \emph{binary tree representations}. This approach can only be applied to tournament solutions that admit such a representation, and different representations may yield different refinements.

While our work is the first to consider a MoV concept for tournament solutions (to the best of our knowledge), a related notion with the same name has been extensively explored in the context of voting.
Unlike in our setting where the MoV serves the purpose of distinguishing among alternatives, in voting the MoV is typically used to measure the robustness of election outcomes \citep{Cary11,MagrinoRiSh11,Xia12x,DeNa15a}. 
As such, the notion there is defined for election outcomes as a whole rather than for individual alternatives. The same holds for the robustness measure of \citet{SYE13a}.

A long line of work has investigated various forms of bribery and manipulation in tournaments.
This includes manipulating the tournament bracket to help a certain candidate win the tournament \citep{VuAlSh09,Vassilevskawilliams10,KSV17a,AzizGaMa18} and bribing players to intentionally lose matches \citep{RussellWa09,MatteiGoKl15,Kiwi15a,KonickiVa19}.
In particular, \citet{RussellWa09} considered a model where only a given subset of edges can be reversed, while other edges are assumed to be fixed. This constitutes a special case of our weighted setting, with sufficiently high weights on fixed edges.

In the context of bribery in voting,
\citet{FHHR09c} considered a ``microbribery'' setting in which voters can be bribed to change individual pairwise comparisons between candidates, even if this results in intransitive preferences of the voter. This corresponds to our weighted setting, with weights given by pairwise majority margins.

Finally, a closely related problem is that of finding possible (resp., necessary) winners of partially specified tournaments: Given a tournament with some missing edges, the goal is to determine whether a certain alternative can be a winner for some (resp., all) completions of the tournament \citep{ABF+15a}.
We observe that both problems can be reduced to computing the MoV in the weighted setting, by considering an arbitrary completion of the partial tournament and making the original edges prohibitively expensive to reverse.

\section{Preliminaries}
\label{sec:prelim}

A \emph{tournament} $T=(V,E)$ is a directed graph such that there is exactly one directed edge between every pair of vertices. The vertices of a tournament $T$, denoted $V(T)$, are often referred to as \textit{alternatives}. Let $n=|V(T)|$. The set of directed edges of $T$, denoted $E(T)$, represents an asymmetric and connex \emph{dominance relation} on the set of alternatives. An alternative $x$ is said to \emph{dominate} another alternative $y$ if $(x,y)\in E(T)$ (i.e., there is a directed edge from~$x$ to~$y$). When the tournament is clear from the context, we often write $x \succ y$ to denote $(x,y) \in E(T)$. By definition, for each pair $x,y$ of distinct alternatives, either $x$ dominates $y$ ($x \succ y$) or $y$ dominates $x$ ($y \succ x$), but not both. 

For a given tournament $T$ and $x \in V(T)$,
the \emph{dominion} of $x$, denoted by $D(x)$, is defined as the set of alternatives $y$ such that $x\succ y$.
Similarly, the set of \emph{dominators} of $x$, denoted by $\overline{D}(x)$, is defined as the set of alternatives $y$ such that $y\succ x$.
An alternative $x \in V(T)$ is said to be a \emph{Condorcet winner} in $T$ if it dominates every other alternative (i.e., $D(x) = V(T)\setminus \{x\}$), and a \emph{Condorcet loser} in $T$ if it is dominated by every other alternative. % (i.e., $\overline{D}(x) =  V(T)\setminus \{x\}$). 
See \Cref{fig:example} for an example tournament. 

The dominance relation can be extended to sets by writing $X\succ Y$ if $x\succ y$ for all $x\in X$ and $y\in Y$.
A set $X \subseteq V(T)$ is called a \emph{dominating set} in $T$ if
every alternative outside of $X$ is dominated by at least one alternative in $X$.

For $U \subseteq V(T)$, $T|_U$ denotes the restriction of $T$ to $U$, and $T_{-x}$ is short for $T|_{V(T)\setminus \{x\}}$.
For an edge $e=(x,y)$, we let $\overline{e}$ denote its reversal, i.e., $\overline{e} = (y,x)$. Similarly, for a set of edges $R \subseteq E(T)$, we define 
$\overline{R} = \{\overline{e}: e \in R\}$.

A \emph{tournament solution} is a function that maps each tournament to a nonempty subset of its alternatives, usually called the set of \emph{winners} or the \emph{choice set}.
The set of winners of a tournament $T$ with respect to a tournament solution $S$ is denoted by $S(T)$.
The tournament solutions considered in this paper are as follows:

\begin{itemize}
\item The \emph{Copeland set} (\cp) is the set of alternatives with the largest dominion, i.e., $\cp(T) = \argmax_{x \in V(T)}|D(x)|$. 
\item The \emph{top cycle} (\tc) is the (unique) smallest set $B$ of alternatives such that $B \succ V(T) \setminus B$.
Equivalently, the top cycle is the set of alternatives that can reach every other alternative via a directed path.
\item The \emph{uncovered set} (\uc), is the set of alternatives that are not ``covered'' by any other alternative. 
An alternative $x$ \emph{covers} another alternative $y$ if $D(y) \subseteq D(x)$. 
Equivalently, \uc is the set of alternatives that can reach every other alternative via a directed path of length at most two.
\item The set of \emph{$k$-kings}, for an integer $k\geq 3$, is the set of alternatives that can reach every other alternative via a directed path of length at most $k$.
\item The \emph{Banks set} (\ba) is the set of alternatives that appear as the maximal element of some inclusion-maximal transitive subtournament.\footnote{A transitive subtournament is \emph{inclusion-maximal} if it is not contained in any other transitive subtournament.
If an alternative $x$ dominates all alternatives in a transitive subtournament $T'$, we say that $x$ \emph{extends} $T'$.}
\end{itemize}

All of these tournament solutions satisfy \emph{Condorcet-consistency}: Whenever a Concorcet winner exists, it is chosen as the unique winner. 

It is clear from the definitions that \uc (the set of ``$2$-kings'') is contained in the set of $k$-kings for any $k\geq 3$, which is in turn a subset of \tc.
Moreover, both \cp and \ba are contained in \uc \citep{BrandtBrHa16}. 

\def\nd{1.5em} 	
\def\ra{0.75em} 

\begin{figure}
\centering
\vspace{-0.1cm}
\begin{tikzpicture}[node distance=\nd,vertex/.style={circle,draw,minimum size=2*\ra,inner sep=0pt}]
		\node[vertex] (a) 	{$a$};
		\node[vertex] (b) [right=of a] {$b$};
		\node[vertex] (c) [right=of b] {$c$};
		\node[vertex] (d) [right=of c] {$d$};
		\node[vertex] (e) [right=of d] {$e$};
		\node[vertex] (f) [right=of e] {$f$};
		\draw [-latex] (c) to[bend left=45] (f);
		\draw [-latex] (b) to[bend right=45] (e);
	\end{tikzpicture}	
	\caption{Tournament $T$ with $V(T)=\{a,b,c,d,e,f\}$. All omitted edges are assumed to point from right to left (\eg $D(f)=\{a,b,d,e\}$ and $a$ is a Condorcet loser in~$T$).} 
	\label{fig:example}
\end{figure}
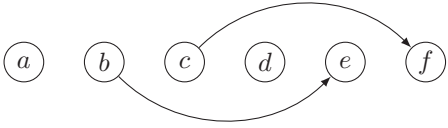

\section{Margin of Victory in Tournaments}
\label{sec:mov}

\newcommand{\mov}{\ensuremath{\mathsf{MoV}}}
\newcommand{\crs}{CRS\xspace}
\newcommand{\drs}{DRS\xspace}

We define the margin of victory (MoV) for a winning (resp., non-winning) alternatives in terms of sets of edges whose reversals result in the alternative becoming a non-winner (resp., winner). Edge sets with this property will be called destructive (resp., constructive) reversal sets. %%FN 
To formally define these notions, we need some notation. 
For a tournament $T$ and a set $R \subseteq E(T)$ of edges, we let $T^R$ denote the tournament that results from $T$ when reversing all edges in~$R$, i.e., 
$V(T^R) = V(T)$ and $E(T^R) = (E(T) \setminus R) \cup \overline{R}$.

Fix a tournament solution $S$ and consider a tournament~$T$. An edge set $R \subseteq E(T)$ is called a \textit{destructive reversal set (\drs)} for $x \in S(T)$ if $x \notin S(T^R)$. Analogously, $R$ is called a \textit{constructive reversal set (\crs)} for $x \in V(T)\setminus S(T)$ if $x \in S(T^R)$.
\footnote{The terms ``destructive'' and ``constructive'' are borrowed from the literature on control and bribery in voting (\eg \citenp{FaRo15a}), where the goal is either to prevent a given candidate from winning (destructive control/bribery) or to make a given candidate a winner (constructive control/bribery).} 

In general, destructive and constructive reversal sets are not unique, and finding \emph{some} \drs or \crs is usually easy. For example, for all Condorcet-consistent tournament solutions $S$, a straightforward CRS for an alternative $x \notin S(T)$ is given by  $R=\{(y,x) \midd y \in \overline{D}(x)\}$. This is because $x$ is a Condorcet winner in $T^R$.

We furthermore assume that we are given a weight function $w:E(T) \rightarrow \mathbb{R}^+$ that assigns a positive weight $w(e)> 0$ to each edge $e \in E(T)$.\footnote{We forbid zero-weight edges for technical reasons. Their existence can be imitated by setting their cost to a small $\epsilon>0$.} 
The weight of an edge can be thought of as the \emph{cost} that is incurred by reversing the edge. 
The cost of a set $R \subseteq E(T)$ is $w(R) = \sum_{e \in R} w(e)$. 

A natural special case is the setting in which reversing is equally costly for all edges. In this \emph{unweighted} setting, we assume $w(e)=1$ for all $e \in E(T)$, and finding a minimum cost reversal set reduces to finding a reversal set of minimum cardinality.

We are now ready to define the main concept of this paper.

\begin{definition}
For a tournament solution $S$ and a tournament~$T$, the \emph{margin of victory} of  $x \in S(T)$ is given by
\[ 
\mov_S(x,T) = \min \{w(R) \midd R \text{ is a \drs for $x$ in $T$} \} \text,
\]
and for an alternative $x \in V(T) \setminus S(T)$, it is given by
\[ 
\mov_S(x,T) = -\min \{w(R): R \text{ is a \crs for $x$ in $T$} \} \text. 
\]
\end{definition}

By definition, $\mov_S(x,T)$ is positive if $x \in S(T)$, and negative otherwise. In the unweighted setting, all MoV values are (positive or negative) integers.

\begin{table}[t]
\begin{center}
\begin{tabular}{r c c c c c c }
\toprule
                  & $a$ & $b$ & $c$ & $d$ & $e$ & $f$  \\ 
\midrule
$\mov_\uc(x,T)$   & $-2$ & $-1$ & $1$ & $1$ & $1$ & $2$  \\ 
min DRS/CRS       & $fa,da$ & $fb$ & $cf$ & $dc$ & $ed$ & $fe,fb$  \\ 
\bottomrule
\end{tabular}
\caption{MoV values and minimal reversal sets with respect to $\uc$ for the tournament $T$ in \Cref{fig:example} (unweighted setting). For improved readability of reversal sets, we omit set brackets and use $xy$ to denote edge $(x,y)$.
}
\label{tab:example}
\end{center}
\end{table}

\begin{example}\label{example}
Consider the tournament $T$ in \Cref{fig:example}. It can be easily verified that $\uc(T)=\{c,d,e,f\}$. For the unweighted setting, \Cref{tab:example} gives the MoV values for this tournament with respect to the uncovered set, together with examples of minimal destructive or constructive reversal sets.  
\end{example}

Note that minimal reversal sets are generally not unique, and that a minimal reversal set for an alternative $x$ may exclusively consist of edges \emph{not} incident to $x$ (\eg $\{(f,e)\}$ is a minimal CRS for $b$ in Example \ref{example}).

\section{Computing the MoV for Winners}\label{sec:compWinners}

We now study the computational complexity of computing the MoV for winners. 
We are given a tournament $T$, a weight function $w: E(T) \rightarrow \mathbb{R}^+$, a tournament solution $S$, and an alternative $x \in S(T)$; the task is to compute $\mov_S(x,T)$. 
Clearly, a polynomial-time algorithm for the weighted setting also applies to the unweighted setting, while a hardness result in the unweighted setting implies one for the weighted setting.
We note that in all cases where we provide a polynomial-time algorithm (\ie table entries ``P'' in \Cref{tab:results}), our algorithm not only determines the MoV value, but also finds a minimum DRS (or CRS when considering non-winners). 
Omitted proofs can be found in the appendix.

\subsection{Copeland}

The MoV for Copeland has already been studied (under different names) in slightly different settings \citep{FHHR09c,RussellWa09}. 
In particular, Theorem~3.7 of \citet{FHHR09c} implies that the MoV for Copeland winners can be computed efficiently whenever the weights correspond to pairwise majority margins resulting from a preference profile. For completeness, we provide a (simpler) proof tailored to our setting.\footnote{For the \textit{unweighted} case, a greedy approach suffices to compute the MoV of a Copeland winner. This case is not particularly interesting, however, as it can be easily verified that $\mov_\cp(x,T)=1$ for all $x \in \cp(T)$ whenever $|\cp(T)|>1$.}

\begin{restatable}{theorem}{restateCopelandDestructive}
\label{thm:destructive-CP-P}
Computing the MoV of a \cp winner in the weighted setting can be done in polynomial time.
\end{restatable}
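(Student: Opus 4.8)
The plan is to compute the MoV of a Copeland winner $x$ by reducing it to a sequence of min-cost edge-reversal computations, one for each potential ``challenger'' $y \neq x$. Recall that $\cp(T) = \argmax_z |D(z)|$, so $x$ being a winner means $|D(x)|$ equals the maximum out-degree, call it $s$. To make $x$ a non-winner, we must reverse edges so that in $T^R$ some other alternative $y$ has strictly larger dominion than $x$; equivalently, after reversals, $|D(x)| < |D(y)|$ for at least one $y$. The key structural observation is that the dominions of different alternatives interact only through the single edge between $x$ and $y$: reversing an edge $(x,z)$ with $z \notin \{x,y\}$ decreases $|D(x)|$ by one without affecting $|D(y)|$, reversing an edge $(z,y)$ increases $|D(y)|$ by one without affecting $|D(x)|$, and the edge between $x$ and $y$ itself can shift the count on one side. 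This decomposition is what makes the problem tractable.

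First I would fix a target challenger $y$ and compute the cheapest reversal set $R_y$ that makes $|D(x)| < |D(y)|$ in $T^R$. Since the two out-degrees can be manipulated almost independently, for any desired final gap the optimal strategy is greedy: to lower $x$'s dominion we reverse its cheapest outgoing edges (those in $D(x) \setminus \{y\}$), and to raise $y$'s dominion we reverse the cheapest incoming edges to $y$ (those from $\overline{D}(y) \setminus \{x\}$); the direct edge between $x$ and $y$ should be reversed if and only if doing so is beneficial, which one checks by comparing the two cases explicitly. For each fixed number $d$ of ``decrements'' applied to $x$ and each number $i$ of ``increments'' applied to $y$ that together achieve $|D(y)| > |D(x)|$, the cheapest way to realize those counts is to pick the $d$ smallest-weight edges out of $x$ and the $i$ smallest-weight edges into $y$, after sorting. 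I would then minimize the total cost over the valid combinations of $(d,i)$ (and the two cases for the $xy$ edge); there are only $O(n)$ such combinations per challenger, so this is polynomial.

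Second, I would take $\mov_\cp(x,T) = \min_{y \neq x} w(R_y)$, since $x$ drops out of the Copeland set exactly when at least one challenger overtakes it, and the cheapest such event is the minimum over all candidate challengers. Sorting the relevant edge weights for each $y$ and scanning over the $O(n)$ combinations gives an $O(n^2 \log n)$ algorithm overall, and by construction it returns an explicit minimum DRS, as promised in the surrounding text.

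The main obstacle will be handling the coupling through the single $xy$ edge and the tie-breaking in the $\argmax$ definition carefully. Reversing $(x,y)$ simultaneously removes an element from $D(x)$ and adds one to $D(y)$, so it cannot be treated as an independent decrement-or-increment; I would handle this cleanly by branching on whether that edge is reversed and optimizing the remaining (now genuinely independent) decrements and increments in each branch. I also need to verify the strictness condition correctly: because $\cp$ selects all maximizers, making $x$ a non-winner requires some $y$ to \emph{strictly} exceed $x$'s dominion, not merely tie it, and I would make sure the greedy count targets this strict inequality. A short exchange argument shows the greedy choice of cheapest edges is optimal for each fixed $(d,i)$, completing the correctness proof.
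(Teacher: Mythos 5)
Your proposal is correct and follows essentially the same route as the paper's proof: iterate over all potential challengers $y$, branch on whether the edge $(x,y)$ is reversed, and then greedily reverse the cheapest outgoing edges of $x$ and incoming edges of $y$ until $|D(y)|$ strictly exceeds $|D(x)|$, taking the minimum cost over all $y$. The paper phrases the key step as a structural claim that any minimal DRS contains only out-edges of $x$ and in-edges of some single $y$, which is exactly the exchange/decomposition argument you give, so the two proofs coincide in substance.
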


\subsection{Uncovered Set, $\boldsymbol{k}$-Kings and Top Cycle}

\begin{figure*}[ht!]
\centering
\scalebox{1}{
\begin{tikzpicture}
\node[circle,fill,inner sep=2pt](s) at (0,0) {};\node[below=2pt] at (s){$x$};
\node[circle,fill,inner sep=2pt](v1) at (1,0) {}; 
\node[circle,fill,inner sep=2pt](v2) at (2,0) {};
\node[circle,fill,inner sep=2pt](v3) at (3,0) {}; 
\node[circle,fill,inner sep=2pt](v4) at (4,0) {};
\node[circle,fill,inner sep=2pt](v5) at (5,0) {}; 
\node[circle,fill,inner sep=2pt](v6) at (6,0) {}; 
\node[circle,fill,inner sep=2pt](v7) at (7,0) {}; 
\node[circle,fill,inner sep=2pt](v8) at (8,0) {}; 
\node[circle,fill,inner sep=.5pt](do1) at (8.4,0) {};
\node[circle,fill,inner sep=.5pt](do2) at (8.5,0) {};
\node[circle,fill,inner sep=.5pt](do3) at (8.6,0) {};
\node[circle,fill,inner sep=2pt](v9) at (9,0) {}; 
\node[circle,fill,inner sep=2pt](v10) at (10,0) {}; 

%SINK
\node[circle,fill,inner sep=2pt](t) at (11,0) {};\node[below=2pt] at (t){$y$}; 

%EDGES 
\draw[->,thick,dashed] (v1) -- (v2) node [near start,above] {$\overline{e}_1$};
\draw[->,thick,dashed] (v4) -- (v5)  node [near start,above] {$\overline{e}_2$};
\draw[->,thick,dashed] (v6) -- (v7) node [near start,above] {$\overline{e}_3$};
\draw[->,thick,dashed] (v9) -- (v10)  node [near start,above] {$\overline{e}_\ell$};

\draw[->,thick] (s) -- (v1) node [midway,below] {$P_0$};
\draw[->,thick] (v2) -- (v3);
\draw[->,thick] (v3) -- (v4);
\draw[->,thick] (v5) -- (v6);
\draw[->,thick] (v7) -- (v8);
\draw[->,thick] (v10) -- (t) node [near start,above] {$P_\ell$};

%Paths

\draw[black!60, thick, -> , decorate, decoration = {snake, amplitude=.3mm, post length=1mm}] (s)  to [bend left=50] (v2);
\draw[black!60, thick, -> , decorate, decoration = {snake, amplitude=.3mm, post length=1mm}] (s)  to [bend left=40](v5);
\draw[black!60, thick, -> , decorate, decoration = {snake, amplitude=.3mm, post length=1mm}] (s)  to [bend left=40](v7);
\draw[black!60, thick, -> , decorate, decoration = {snake, amplitude=.3mm, post length=1mm}] (s)  to [bend left=40] node[near end, right=20pt] {\footnotesize$Q_\ell^{(1)}$}(v10);

\draw[black!60, thick, -> , decorate, decoration = {snake, amplitude=.3mm, post length=1mm}] (v1)  to [bend right=30]node[near start,left=20pt] {\footnotesize$Q_1^{(2)}$}(t);
\draw[black!60, thick, -> , decorate, decoration = {snake, amplitude=.3mm, post length=1mm}] (v4)  to [bend right=30](t);
\draw[black!60, thick, -> , decorate, decoration = {snake, amplitude=.3mm, post length=1mm}] (v6)  to [bend right=30](t);
\draw[black!60, thick, -> , decorate, decoration = {snake, amplitude=.3mm, post length=1mm}] (v9)  to [bend right=30](t);

\draw [decorate,decoration={brace,amplitude=10pt,mirror, raise=4pt},yshift=0pt, thick] (v2) -- (v4) node [black,midway,yshift=-0.6cm, xshift=.4cm] {\footnotesize $P_1$};

\end{tikzpicture}}
\caption{An illustration of a situation in which Lemma \ref{lem:xcut-kkings-destr} would be violated. The path $P$ which occurs after the reversion of edges in $R$ is illustrated by straight arcs, where edges that were reversed are dashed. Paths $Q_i$ with the property that they contain exactly one reversed edge, namely $e_i$, are depicted by bended, curled arrows.}
\label{fig:destructive-kkings}
\end{figure*}
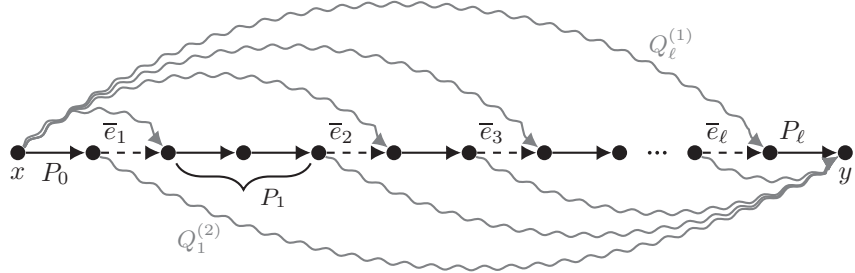

The problems of computing the MoV for \uc, $k$-kings and \tc are not only closely related to each other but also to theory on network flows. Since \uc can be interpreted as $2$-kings and \tc as $(n-1)$-kings, we only refer to $k$-kings and assume that $k$ can be chosen from $\{2, \dots, n-1\}$. A \drs for $x$ is then an edge set $R$ such that $x$ has distance greater than $k$ to at least one alternative $y$ in $T^R$.

Finding a minimum \drs is closely related to finding $\ell$-length bounded $s$-$t$-cuts of minimum capacity. In the latter problem, we are given a directed network $G=(V,E)$ with a capacity function $u:E \rightarrow \mathbb{R}^{+}$, two distinguished nodes $s,t \in V$ and a length bound $\ell \in \mathbb{N}$. An edge set $C \subseteq E$ is an \emph{$\ell$-length bounded $s$-$t$-cut} if all $s$-$t$-paths in $(V,E\setminus C)$ have length greater than $\ell$. The set $C$ is a \emph{minimum $\ell$-length bounded $s$-$t$-cut} if it minimizes the sum of the capacities of edges in $C$. When $\ell \geq |V(G)|-1$, the problem is equivalent to the standard minimum cut problem and can be solved via linear programming due to the well known max-flow min-cut theorem \citep{FordFu56}. However, for general $\ell \in \mathbb{N}$, \citet{AdamekKo71} showed that a generalization of this theorem does not hold. More recently, it was shown by \citet{BaierEr10} that finding a minimum $\ell$-length bounded $s$-$t$-cut is NP-hard for $\ell \in \{4, \dots, n^{1-\epsilon}\}$ for fixed $\epsilon >0$, even if capacities are uniform. By contrast, for $\ell \leq 3$, \citet{MahjoubMc10} showed that there exists a polynomial time algorithm which reduces the problem to a standard cut problem. In the following, we show how we can adjust and apply these results to our setting.

Despite its similarity to our problem 
(which can be observed by setting $G=T$, $u(e)=w(e), \ell=k$, and $s=x$), 
the problem described above differs in three ways from the problem under consideration. First, the node which should be disconnected, in this case $t$, is specified; second, edges are removed instead of reversed; and third, the graph is not restricted to be a tournament. For ease of presentation, we define a new problem which lies in between MoV for $k$-kings winners and minimum $\ell$-length bounded $s$-$t$-cuts. 

For a network $G=(V,E)$, we say that $C \subseteq E$ is an \emph{$\ell$-length bounded $s$-cut} if it is a $\ell$-length bounded $s$-$t$-cut for some $t \in V \setminus \{s\}$. We say that $C$ is a \emph{minimum $\ell$-length bounded $s$-cut}, if it is a minimum $\ell$-length bounded $s$-$t$-cut and capacity-minimizing among all $t \in V \setminus \{s\}$. 
Computing a minimum $\ell$-length bounded $s$-cut can be reduced to computing a minimum $\ell$-length bounded $s$-$t$-cut by iterating over all $t \in V \setminus \{s\}$.

The following lemma formalizes the connection between length bounded cuts and DRSs for $k$-kings. Though intuitive, note that the statement is not obvious because reversing the edges of a cut may create new paths of bounded length.\footnote{In the appendix, we give an example showing that Lemma 4  does not hold for cuts that are not minimal.} We define $\mathcal{P}_{x,y}(k)$ to be the set of $x$-$y$-paths in tournament $T$ of length at most $k$.

\begin{restatable}{lemma}{restateXcutKkingsDestr}\label{lem:xcut-kkings-destr}
A set $R \subseteq E(T)$ is a minimum DRS for $x$ w.r.t. $k$-kings iff $R$ is a minimum $k$-length bounded $x$-cut in $T$. 
\end{restatable}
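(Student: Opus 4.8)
The plan is to prove both directions of the equivalence by carefully relating the graph-distance condition defining $k$-kings to the length-bounded cut condition. Recall that $x$ is a $k$-king iff $x$ can reach every other alternative via a directed path of length at most $k$; thus a \drs for $x$ is precisely an edge set $R$ such that in $T^R$ there is some $y$ with no $x$-$y$-path of length $\le k$. The key subtlety, flagged in the excerpt, is that reversing edges (rather than merely deleting them) can create brand-new short paths, so a naive ``reversal = deletion'' argument fails. The guiding idea is that for a \emph{minimum} reversal set this phenomenon cannot help the adversary: if a reversed edge ever contributed a new short $x$-$y$-path, we could remove it from $R$ and obtain a cheaper valid reversal set, contradicting minimality.

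First I would establish the easier implication: if $R$ is a minimum $k$-length bounded $x$-$t$-cut (minimized over the choice of $t$), then $R$ is a \drs for $x$. Here I would argue that reversing the edges of $R$ leaves no $x$-$t$-path of length $\le k$. Every $x$-$t$-path $P$ in $T^R$ corresponds to a walk in $T$; by minimality of the cut, any original short $x$-$t$-path in $T$ uses at least one edge of $R$ and so is destroyed, and the crux is to show that reversing edges of a minimum cut does not manufacture a new short $x$-$t$-path. This is exactly the content suggested by Figure~\ref{fig:destructive-kkings}: if such a path $P$ in $T^R$ used reversed edges $\overline{e}_1,\dots,\overline{e}_\ell$, one splices the forward sub-segments $P_0,\dots,P_\ell$ of $P$ together with the short ``shortcut'' paths $Q_i$ that each use exactly one reversed edge, producing an $x$-$t$-path in $T^{R'}$ of length $\le k$ for a strictly smaller reversal set $R' \subsetneq R$, contradicting minimality. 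I would formalize the existence of these $Q_i$ and bound the total length of the spliced path, which is the main technical obstacle.

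Conversely, I would show that if $R$ is a minimum \drs for $x$, then $R$ is a minimum $k$-length bounded $x$-cut. Since $R$ is a \drs, there exists $y$ with no $x$-$y$-path of length $\le k$ in $T^R$; I would first argue that such an $R$ must in particular cut all short $x$-$y$-paths of $T$ itself (any short $x$-$y$-path surviving in $T$ and not meeting $R$ would still be present in $T^R$, unless its edges were reversed, in which case the same splicing argument as above reconstructs a short path). Hence $R$ is a $k$-length bounded $x$-$y$-cut, and thus a $k$-length bounded $x$-cut. For minimality, I would argue by contradiction: any $k$-length bounded $x$-cut $C$ of strictly smaller weight disconnects $x$ from some vertex within distance $k$, and by the forward direction $C$ is itself a \drs for $x$, contradicting that $R$ is a \emph{minimum} \drs. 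The two minimality comparisons together pin down the value and establish the ``iff.''

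The main obstacle throughout is the splicing construction that handles newly created short paths after reversal, and ensuring the length bound $k$ is preserved when concatenating the forward segments $P_i$ with the shortcut paths $Q_i$. I would isolate this as the heart of the proof, carefully tracking how many edges each $Q_i$ contributes and verifying that the resulting walk can be pruned to a simple path of length at most $k$; the counting must use that each $Q_i$ contains exactly one reversed edge so that removing the corresponding edges from $R$ yields a strictly cheaper reversal set that still separates $x$ from $y$.
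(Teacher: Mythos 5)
Your overall architecture is the same as the paper's: (i) every \drs is a $k$-length bounded $x$-cut (easy direction), (ii) every \emph{minimum} $x$-$y$-cut is a \drs (hard direction, via decomposing a hypothetical short path $P$ in $T^R$ into unreversed segments $P_0,\dots,P_\ell$ and shortcut paths $Q_i$), and (iii) combining the two cost comparisons to get the iff. But the load-bearing step---how the splicing yields a contradiction---is misassembled, and as you state it, it does not work. The spliced walks $P_0 \cup Q_1^{(2)}$, $Q_i^{(1)} \cup P_i \cup Q_{i+1}^{(2)}$ and $Q_\ell^{(1)} \cup P_\ell$ consist only of edges of $T$ that avoid $R$ (that is precisely why one requires $Q_i \cap R = \{e_i\}$ and splits off $e_i$), so they live in $T$ itself (and also in $T^R$), not in some $T^{R'}$ for a smaller reversal set; and exhibiting a short path in $T^{R'}$ with $R' \subsetneq R$ would contradict nothing---it would only show that $R'$ fails to be a \drs. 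Worse, your plan to ``verify that the resulting walk can be pruned to a simple path of length at most $k$'' is impossible for any \emph{individual} spliced walk: since $R$ is a $k$-length bounded $x$-$y$-cut in $T$ and each spliced walk is an $x$-$y$-walk in $T$ avoiding $R$, every single one of them necessarily has length strictly greater than $k$.

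The correct contradiction is aggregate, not individual. The cut property forces all $\ell+1$ spliced walks to be longer than $k$, while $|P| \le k$ gives $\sum_{i=0}^\ell |P_i| \le k - \ell$ and $|Q_i| \le k$ gives $|Q_i^{(1)}| + |Q_i^{(2)}| \le k - 1$ for each $i$; summing, the total length of all spliced walks is at most $k(\ell+1) - 2\ell$, which is incompatible with each exceeding $k$. (Equivalently, by averaging, some spliced walk has length at most $k$, and the contradiction is with $R$ being a \emph{cut}---not with its minimality. Minimality enters only to furnish the $Q_i$: if no $x$-$y$-path of length at most $k$ met $R$ exactly in $\{e_i\}$, then $R \setminus \{e_i\}$ would already be a cheaper cut; your closing sentence has this backwards when it asserts that removing the $e_i$ from $R$ ``yields a strictly cheaper reversal set that still separates $x$ from $y$.'') A secondary slip: in your converse direction you apply the forward implication to an arbitrary cheaper cut $C$, but that implication holds only for \emph{minimum} cuts---the paper's appendix exhibits a non-minimum $x$-cut that is not a \drs---so you must instead compare $R$ against a minimum cut, which is cheaper than $R$ whenever any cut is.
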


%\restateXcutKkingsDestr*

\begin{proof}
The proof is divided into two parts. First, we show that every destructive reversal set is also an $x$-cut of equal cost, and second, we prove that every minimum $x$-cut is also a destructive reversal set of equal cost. This suffices to prove the claim.

Let $R \subseteq E(T)$ be a \drs for $x$. Assume for contradiction that $R$ is not an $x$-cut. Recall that we assume $x$ to be a $k$-king in $T$. Hence, for every $y$ there exists a path in $P_y \in \mathcal{P}_{x,y}(k)$ such that $P_y \cap R = \emptyset$ and therefore $x$ can reach every $y$ in tournament $T^R$ in at most $k$ steps, a contradiction to the assumption that $R$ is a \drs. We conclude that $R$ is an $x$-cut.

For the second part we show a slightly stronger statement, namely that for every $y$, every minimum $x$-$y$-cut is also a \drs of equal cost. Clearly it follows that every minimum $x$-cut is a \drs of equal cost. Let $R \subseteq E(T)$ be a minimum $x$-$y$-cut for some $y \in V(T) \setminus \{x\}$. We claim that all paths from $x$ to $y$ in $T^{R}$ have length strictly greater than $k$, which would suffice to prove the claim. Assume for contradiction that there exists an $x$-$y$-path $P$ in $T^{R}$ with $|P| \leq k$. Recall that $\overline{R}$ is the reversed counterpart of $R$ and note that $P \cap \overline{R} \neq \emptyset$, since otherwise this contradicts our choice of $R$ and $y$. Let $\{\overline{e}_1, \dots, \overline{e}_\ell\} := P \cap \overline{R}$ such that $\overline{e}_i$ appears before $\overline{e}_j$ in $P$ iff $i < j$. We label the connected components of $P \setminus \overline{R}$ such that $P_0$ is the subpath of $P$ from $x$ to the tail of edge $\overline{e}_1$, $P_\ell$ is the subpath from the head of $\overline{e}_\ell$ to $y$, and for $i \in \{1,\dots, \ell-1\}$, $P_i$ is the subpath starting at the head of $\overline{e}_{i}$ and ending at the tail of $\overline{e}_{i+1}$. Note in particular that $P_i$ can be empty. For every $\overline{e}_i$ recall that $e_i$ is its non-reversed counterpart from the original tournament $T$. We define $\mathcal{Q}_i$ to be the set of $x$-$y$-paths in $T$ which contain edge $e_i$ and are not longer than $k$. We claim that for every $e_i, i \in \{1,\dots,\ell\}$ there exists at least one path $Q_i \in \mathcal{Q}_i$ with $Q_i \cap R = \{e_i\}$. In the contrary case, since $w(e_i)>0$, the set $R \setminus \{e_i\}$ would be a feasible $x$-$y$-cut of smaller cost, a contradiction to the minimality of $R$. See \Cref{fig:destructive-kkings} for an illustration of the situation.

From the existence of $Q_i, i \in \{1, \dots, \ell\}$ we derive the existence of $x$-$y$-paths which appear in both $T$ and $T^{R}$ and therefore need to be longer than $k$. To this end we define $Q_i^{(1)}$ to be the first part of $Q_i$ from $x$ up to the tail of edge $e_i$ and denote by $Q_i^{(2)}$ the second part of $Q_i$, i.e., the part from the head of $e_i$ to $y$.

It is easy to see that $P_0 \dot\cup Q_{1}^{(2)}$, $Q_{\ell}^{(1)} \dot\cup P_\ell$ as well as $Q_{i}^{(1)} \dot\cup P_i \dot\cup Q_{i+1}^{(2)}, \; \forall \; i \in \{1, \dots, \ell-1\}$ are edge progressions from $x$ to $y$, i.e., sequences of edges leading from $x$ to $y$ which might use nodes and edges multiple times. In particular these multisets contain $x$-$y$-paths which do not intersect with $R$ and are therefore existent in both $T$ and $T^{R}$. We obtain 
\begin{align}
|P_0| + |Q_{1}^{(2)}| & > k \label{con:kkings:dest:ub1}\\ 
|Q_{\ell}^{(1)}| + |P_\ell| & > k\label{con:kkings:dest:ub2}\\ 
|Q_{i}^{(1)}| + |P_i| + |Q_{i+1}^{(2)}| & > k && \forall \; i \in \{1, \dots, \ell-1\}. \label{con:kkings:dest:ub3}
\end{align}
On the other hand, we know that $|P| \leq k$ and $|Q_i| \leq k$ for all $i \in \{1,\dots,\ell\}$ and hence 
\begin{align}
\sum_{i = 0}^{\ell} |P_i| + \ell & \leq k \label{con:kkings:dest:lb1} \\ 
|Q_{i}^{(1)}| + 1 + |Q_{i}^{(2)}| & \leq k && \forall \; i \in \{1, \dots, \ell\}. \label{con:kkings:dest:lb2}
\end{align}
Summing up \cref{con:kkings:dest:ub1,con:kkings:dest:ub2,con:kkings:dest:ub3} and \cref{con:kkings:dest:lb1,con:kkings:dest:lb2} results in 
\begin{align*}
k(\ell+1) &\stackrel{(\ref{con:kkings:dest:ub1})-(\ref{con:kkings:dest:ub3})}{<}&&\sum_{i=0}^\ell |P_i| + \sum_{i=1}^\ell (|Q_{i}^{(1)}| +  |Q_{i}^{(2)}|) \\
& \;\;\; < &&\sum_{i=0}^\ell |P_i| + \sum_{i=1}^\ell (|Q_{i}^{(1)}| +  |Q_{i}^{(2)}|) + 2\ell\\& \stackrel{(\ref{con:kkings:dest:lb1}), (\ref{con:kkings:dest:lb2})}{\leq} &&k(\ell+1),
\end{align*}
a contradiction to the existence of $P$. 
\end{proof}

Since there exist polynomial time algorithms for computing minimum $\ell$-length bounded $s$-$t$-cuts for $\ell \leq 3$ and $\ell=n-1$ \cite{MahjoubMc10,FordFu56}, Lemma \ref{lem:xcut-kkings-destr} immediately yields polynomial time algorithms for the minimum $\ell$-length bounded $s$-cut problem for $\ell \in \{2,3,n-1\}$.

\begin{corollary}
\label{cor:MoV-UC-kings}
Computing the MoV of a \uc winner, a $3$-king or a \tc winner in the weighted setting can be done in polynomial time.
\end{corollary}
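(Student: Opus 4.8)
The plan is to read the corollary off Lemma~\ref{lem:xcut-kkings-destr} together with the known polynomial-time algorithms for length-bounded cuts. By that lemma, for any $k$ the computation of $\mov$ for a $k$-kings winner $x$ amounts to computing a minimum $k$-length bounded $x$-cut in $T$, where the tournament is viewed as a capacitated network with $u = w$: the capacity of such a minimum cut equals the cost of a minimum \drs for $x$, and a minimum cut directly yields a minimum \drs. Since the cut algorithms below accept arbitrary positive capacities, this handles the weighted setting (and hence also the unweighted one). So it suffices to exhibit, for the three relevant values of $k$, a polynomial-time algorithm for the minimum $k$-length bounded $x$-cut.

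First I would pin down these values. By the definitions, \uc is the set of $2$-kings (so $k=2$), the $3$-kings correspond to $k=3$, and \tc is the set of alternatives that reach every other alternative along a directed path, necessarily of length at most $n-1$, so \tc is the set of $(n-1)$-kings and $k=n-1$. Thus I need a minimum $\ell$-length bounded $x$-cut only for $\ell \in \{2,3,n-1\}$.

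Next I would invoke the reduction recorded just before the lemma: a minimum $\ell$-length bounded $x$-cut is obtained by solving the minimum $\ell$-length bounded $x$-$t$-cut problem for each of the $n-1$ choices of $t \in V(T)\setminus\{x\}$ and keeping the cheapest outcome. This costs only a factor of $n-1$, so polynomial-time solvability of the $x$-$t$-cut problem transfers to the $x$-cut problem. For $\ell \in \{2,3\}$ the $x$-$t$-cut problem is solved in polynomial time by the reduction to a standard cut problem of \citet{MahjoubMc10}; for $\ell = n-1 \ge |V(T)|-1$ the length bound is vacuous (no simple $x$-$t$-path can exceed length $n-1$, so the cut must disconnect $t$ from $x$ outright), and the problem is an ordinary minimum cut, solvable via max-flow min-cut \citep{FordFu56}. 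Combining the three cases with the reduction and with Lemma~\ref{lem:xcut-kkings-destr} yields the claimed algorithms, each of which also returns a minimum \drs.

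I do not expect a genuine obstacle here: the substantive difficulty---showing that reversing (rather than deleting) the edges of a \emph{minimum} cut cannot create a new short $x$-$y$-path, and hence that minimum \drs{}s and minimum length-bounded $x$-cuts coincide---has already been dispatched in Lemma~\ref{lem:xcut-kkings-destr}. What remains is essentially bookkeeping: correctly matching each tournament solution to its length bound (in particular that \tc corresponds to $\ell = n-1$, not some smaller value), confirming that the cut algorithms of \citet{MahjoubMc10} and \citet{FordFu56} apply verbatim to $T$ as a capacitated network, and observing that the extra factor of $n-1$ from iterating over $t$ preserves polynomial running time.
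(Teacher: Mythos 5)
Your proof is correct and follows essentially the same route as the paper: invoke \Cref{lem:xcut-kkings-destr} to equate minimum DRSs with minimum $k$-length bounded $x$-cuts, reduce the $x$-cut problem to $n-1$ instances of the $x$-$t$-cut problem, and solve those via \citet{MahjoubMc10} for $\ell\in\{2,3\}$ (covering \uc and $3$-kings) and via standard minimum cut \citep{FordFu56} for $\ell=n-1$ (covering \tc). The identification of \uc with $2$-kings and \tc with $(n-1)$-kings also matches the paper's setup, so there is nothing to add.
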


The following result is obtained by carefully adjusting the proof of \citet{BaierEr10} showing that approximating minimum $\ell$-length bounded cuts for $\ell \geq 4$ is NP-hard. We give the entire proof in the appendix, where we also point out deviations from the original construction. 

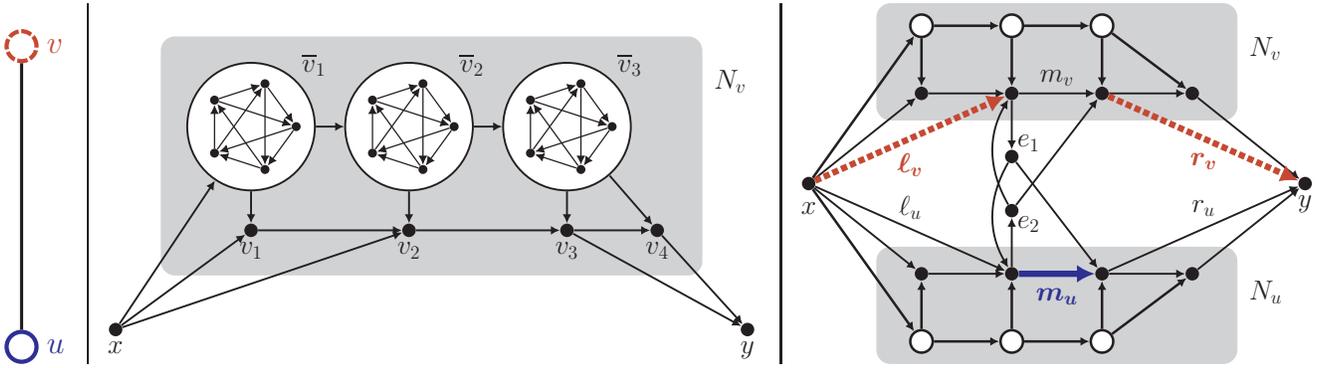
\begin{figure*}[t!]
\centering

\begin{tikzpicture}
\node[ultra thick,draw=red!80!black,circle,fill=white,inner sep=4pt, dash pattern = on 4pt off 1pt](v) at (0,4) {};\node[text=red!80!black,right=6pt] at (v){\Large$v$};

\node[circle,ultra thick,draw=blue,fill=white,inner sep=4pt](u) at (0,0) {};\node[blue,right=6pt] at (u){\Large$u$};

\draw[very thick] (u) -- (v);
\end{tikzpicture}\hspace{.1cm}
\vline \hspace{.1cm} \raisebox{0cm}{\scalebox{.6}{\begin{tikzpicture}

\fill [black!20, rounded corners=2ex] (-2,-1.3) rectangle (10,4);  \node[right=4pt] at (10,3){\huge$N_v$};

\node [circle, very thick, draw=black, fill=white, inner sep=1cm](d) at (0,2){}; \node[above right=1cm and 1cm] at (d){\huge$ \overline{v}_1$};
\node [circle, very thick, draw=black, fill=white, inner sep=1cm](d2) at (3.5,2){}; \node[above right=1cm and 1cm] at (d2){\huge$ \overline{v}_2$};
\node [circle, very thick, draw=black, fill=white, inner sep=1cm](d3) at (7,2){}; \node[above right=1cm and 1cm] at (d3){\huge$ \overline{v}_3$};

\node[circle,fill,inner sep=3pt](v1) at (0,-.3){}; \node[below=4pt] at (v1){\huge$ v_1$};
\node[circle,fill,inner sep=3pt](v2) at (3.5,-.3){}; \node[below=4pt] at (v2){\huge$ v_2$};
\node[circle,fill,inner sep=3pt](v3) at (7,-.3){}; \node[below=4pt] at (v3){\huge$ v_3$};
\node[circle,fill,inner sep=3pt](v4) at (9,-.3){}; \node[below=4pt] at (v4){\huge$ v_4$};

\node[circle,fill,inner sep=3pt](s) at (-3,-2.5){}; \node[below=4pt] at (s){\huge$x$};
\node[circle,fill,inner sep=3pt](t) at (11,-2.5){}; \node[below=4pt] at (t){\huge$y$};

\def \n {5}
\def \radius {1cm}
\def \margin {8} % margin in angles, depends on the radius 
\foreach \s in {1,...,\n}
{ 
      \node[circle,fill,inner sep=2pt,yshift=2cm](\s) at ({360/\n * (\s - 1)}:\radius) {}; 
}

\foreach \s in {6,...,10}
{ 
      \node[circle,fill,inner sep=2pt,xshift=3.5cm,yshift=2cm](\s) at ({360/\n * (\s - 1)}:\radius) {};
}

\foreach \s in {11,...,15}
{ 
      \node[circle,fill,inner sep=2pt,xshift=7cm,yshift=2cm](\s) at ({360/\n * (\s - 1)}:\radius) {};
}

\foreach \s/\t in {1/2,1/3,2/3,2/4,3/4,3/5,4/5,4/1,5/1,5/2,
			6/7,6/8,7/8,7/9,8/9,8/10,9/10,9/6,10/6,10/7,
			11/12,11/13,12/13,12/14,13/14,13/15,14/15,14/11,15/11,15/12}{
	\draw[<-,thick] (\s) -- (\t);}

\draw[->,very thick] (d) -- (v1);
\draw[->,very thick] (d) -- (d2);
\draw[->,very thick] (d2) -- (d3);
\draw[->,very thick] (d2) -- (v2);
\draw[->,very thick] (d3) -- (v3);
\draw[->,very thick] (d3) -- (v4);
\draw[->,very thick] (s) -- (v1);
\draw[->,very thick] (s) -- (d);
\draw[->,very thick] (v1) -- (v2);
\draw[->,very thick] (v2) -- (v3);
\draw[->,very thick] (v3) -- (v4);
\draw[->,very thick] (v4) -- (t);

\draw[->,very thick] (s) -- (v2);
\draw[->,very thick] (v3) -- (t);
 
\end{tikzpicture}}} \hspace{.1cm} \vline \hspace{.1cm} \scalebox{.6}{\begin{tikzpicture}
\tikzstyle{supernode}=[draw, ultra thick, circle,fill,inner sep=5pt,fill=white]
\tikzstyle{superarrow}=[ultra thick]

\fill [black!20, rounded corners=2ex] (1,1.4) rectangle (9,4); \node[right=4pt] at (9,3){\huge$N_v$};
\fill [black!20, rounded corners=2ex] (1,-1.4) rectangle (9,-4); \node[right=4pt] at (9,-2.4){\huge$N_u$};
%LEVEL 0 

\node[circle,fill,inner sep=3pt](s) at (-.5,0) {};\node[below=8pt] at (s){\huge$x$};

%LEVEL 1
\node[circle,fill,inner sep=3pt](v1) at (2,2) {}; \node[above right] at (v1){};%{\huge$v_1$};
\node[supernode](vb1) at (2,3.5) {}; \node[above=3pt] at (vb1){};%{\huge$\overline{v}_1$}; 
\node[circle,fill,inner sep=3pt](u1) at (2,-2) {}; \node[below right] at (u1){};%{\huge$ u_1$};
\node[supernode](ub1) at (2,-3.5) {}; \node[below=3pt] at (ub1){};%{\huge$ \overline{u}_1$};

%LEVEL 2
\node[circle,fill,inner sep=3pt](v2) at (4,2) {}; \node[above right] at (v2){};%{\huge$ v_2$};
\node[supernode](vb2) at (4,3.5) {}; \node[above=3pt] at (vb2){};%{\huge$ \overline{v}_2$}; 
\node[circle,fill,inner sep=3pt](u2) at (4,-2) {}; \node[below right] at (u2){};%{\huge$ u_2$};
\node[supernode](ub2) at (4,-3.5) {}; \node[below=3pt] at (ub2){};%{\huge$ \overline{u}_2$};

\node[circle,fill,inner sep=3pt](e1) at (4,.6) {}; \node[above right] at (e1){\huge$e_1$};
\node[circle,fill,inner sep=3pt](e2) at (4,-.6) {}; \node[below right] at (e2){\huge$e_2$};

%LEVEL 3
\node[circle,fill,inner sep=3pt](v3) at (6,2) {}; \node[above right] at (v3){};%{\huge$ v_3$};
\node[supernode](vb3) at (6,3.5) {}; \node[above=3pt] at (vb3){};%{\huge$ \overline{v}_3$}; 
\node[circle,fill,inner sep=3pt](u3) at (6,-2) {}; \node[below right] at (u3){};%{\huge$ u_3$};
\node[supernode](ub3) at (6,-3.5) {}; \node[below=3pt] at (ub3){};%{\huge$ \overline{u}_3$};

%LEVEL 4
\node[circle,fill,inner sep=3pt](v4) at (8,2) {}; \node[above right] at (v4){};%{\huge$ v_4$};
\node[circle,fill,inner sep=3pt](u4) at (8,-2) {}; \node[below right] at (u4){};%{\huge$ u_4$};

%LEVEL5
\node[circle,fill,inner sep=3pt](t) at (10.5,0) {};\node[below=4pt] at (t){\huge$y$};

%EDGES 
\draw[->,very thick] (s) -- (v1);
\draw[->,superarrow] (s) -- (vb1);

\draw[->,very thick] (s) -- (u1);
\draw[->,superarrow] (s) -- (ub1);
%\draw[->,very thick] (s) -- (u2);

\draw[->,superarrow] (vb1) -- (vb2);
\draw[->,superarrow] (vb2) -- (vb3);
\draw[->,superarrow] (vb3) -- (v4);
\draw[->,superarrow] (vb1) -- (v1);
\draw[->,superarrow] (vb2) -- (v2);
\draw[->,superarrow] (vb3) -- (v3);
\draw[->, very thick] (v1) -- (v2);
%\draw[->,very thick] (v2) -- (v3);
\draw[->,very thick] (v3) -- (v4);
%\draw[->,thick] (v3) -- (t);
\draw[->,very thick] (v4) -- (t);

\draw[->,superarrow] (ub1) -- (ub2);
\draw[->,superarrow] (ub2) -- (ub3);
\draw[->,superarrow] (ub3) -- (u4);
\draw[->,superarrow] (ub1) -- (u1);
\draw[->,superarrow] (ub2) -- (u2);
\draw[->,superarrow] (ub3) -- (u3);
\draw[->,very thick] (u1) -- (u2);
%\draw[->,thick] (u2) -- (u3);
\draw[->,very thick] (u3) -- (u4);
%\draw[->,very thick] (u3) -- (t);
\draw[->,very thick] (u4) -- (t);

\draw[->,very thick] (v2) -- (e1);
\draw[->,very thick] (u2) -- (e2);
\draw[->,very thick] (e2) to [bend left](v2);
\draw[->,very thick] (e1) to [bend right](u2);
\draw[->,very thick] (e1) -- (u3);
\draw[->,very thick] (e2) -- (v3);

\draw[->,line width=1.4mm,red!80!black,>={Latex[width=4mm,length=4mm]},dash pattern = on 4pt off 2pt] (s) -- node[below=4pt]{\huge$\boldsymbol{\ell_v}$}(v2); %lv
\draw[->,very thick,black] (v2) -- node[above=1pt]{\huge${m_v}$}(v3); %mv 
\draw[->,line width=1.4mm,red!80!black,>={Latex[width=4mm,length=4mm]},dash pattern = on 4pt off 2pt] (v3) -- node[below=4pt]{\huge$\boldsymbol{r_v}$}(t); %rv 

\draw[->,very thick,black] (s) -- node[above=2pt]{\huge${\ell_u}$}(u2); %lu 
\draw[->,line width=1.4mm,blue,>={Latex[width=4mm,length=4mm]}] (u2) -- node[below=4pt]{\huge$\boldsymbol{m_u}$}(u3); %mu 
\draw[->,very thick,black] (u3) -- node[above=4pt]{\huge${r_u}$}(t); %ru 

%LABELS

\end{tikzpicture}}

\caption{Illustration of the construction used in the proof of \Cref{thm:destr:kkings:nphard} for the case $k=4$. For any graph $G$ (left image), a tournament $T$ is constructed by introducing node gadgets and edge gadgets as follows. A \emph{node gadget} $N_v$ consists of four nodes $v_1, v_2, v_3, v_4$ and three supernodes $\overline{v}_1, \overline{v}_2, \overline{v_3}$, where the latter are tournaments themselves. The center image shows the node gadget for node $v$. An \emph{edge gadget} for $e=\{u,v\}$ consists of two nodes $e_1, e_2$ and edges connecting the node gadgets of $u$ and $v$; see the right image. Nodes $x$ and $y$ are connected to all node gadgets as illustrated. All omitted edges point ``backwards'' (from right to left) and the direction of vertical edges, if not specified, can be chosen arbitrarily. 
}
\label{fig:kkingsReduction}
\end{figure*}

\begin{restatable}{theorem}{restateKKingsReduction}
For any constant $k\geq 4$, computing the MoV of a $k$-king in the unweighted setting is NP-hard. For any constant $\epsilon>0$, the problem is still NP-hard when we restrict to non-constant $k \geq n^{1-\epsilon}$. \label{thm:destr:kkings:nphard}
\end{restatable}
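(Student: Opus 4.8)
The plan is to reduce from the minimum $\ell$-length bounded cut problem, which \citet{BaierEr10} show to be NP-hard already for unit capacities and $\ell \in \{4,\dots,n^{1-\epsilon}\}$, by transforming the underlying digraph into a tournament. The crucial preparatory step is \Cref{lem:xcut-kkings-destr}: since a minimum DRS for a $k$-king $x$ is exactly a minimum $k$-length bounded $x$-cut, the value $\mov_{k\text{-kings}}(x,T)$ equals that of a minimum $k$-length bounded $x$-cut in $T$. This lets me carry out the whole argument in the language of \emph{edge removals} (cuts) rather than reversals, which is what makes the correspondence tractable: the awkward fact that reversing a cut edge may create a new short path is already absorbed into the lemma. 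Hence it suffices to build, from a hard cut instance, a tournament $T$ in which $x$ is a $k$-king and whose minimum $k$-length bounded $x$-cut encodes the answer.

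Following \Cref{fig:kkingsReduction}, I would adapt the \citeauthor{BaierEr10} construction as follows. Starting from the instance graph $G$, I introduce a \emph{node gadget} $N_v$ for every vertex $v$ (built from the four nodes $v_1,\dots,v_4$ and three \emph{supernodes} $\overline{v}_1,\overline{v}_2,\overline{v}_3$, each itself a small sub-tournament) and an \emph{edge gadget} on two nodes $e_1,e_2$ for every edge, wiring all gadgets to two distinguished alternatives $x$ and $y$. The gadgets are laid out left-to-right so that the canonical short $x$-$y$-paths thread through them, and the intended optimal cut severs precisely the gadget edges corresponding to a solution of the cut instance in $G$.

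Two features carry the technical weight. First, \emph{tournament completion}: every still-unoriented pair is oriented ``backwards'' (right-to-left in the layout), and one must verify both that no backward edge creates an $x$-$y$-path of length at most $k$ (so the cut structure is preserved) and that $x$ nevertheless reaches every vertex within $k$ steps, i.e.\ that $x$ is genuinely a $k$-king. Second, \emph{capacity simulation in the unweighted setting}: since all edges have unit cost, one cannot simply forbid edges, so the supernodes are chosen as sub-tournaments large enough that routing through a node gadget offers many parallel short paths; cutting ``across'' a supernode then costs more than any optimal solution spends. The core of the proof is a structural lemma showing that every minimum $k$-length bounded $x$-cut avoids all supernode-internal and backward edges and is supported exactly on the intended bottleneck edges, yielding the two-way, cost-preserving correspondence between minimum cuts in $T$ and solutions in $G$. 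A small additional point, arising because we minimize over all targets $t$ (an $x$-cut rather than a fixed $x$-$y$-cut), is that the analysis must also certify that $y$ is the cheapest vertex to separate from $x$.

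I expect this structural lemma to be the main obstacle: it requires an exhaustive case analysis over all $x$-$y$-paths of length at most $k$ through the node and edge gadgets, confirming that lengthening every such path beyond $k$ is cheapest via the designated bottleneck edges, while every alternative---cutting inside a supernode, or exploiting a backward edge---is strictly more expensive, and that separating any $t \ne y$ is no cheaper. Finally, for the non-constant regime $k \ge n^{1-\epsilon}$, I would track the polynomial blow-up in the number of tournament vertices introduced by the gadgets and, if needed, pad the construction so that the length bound $k = \ell$ still falls inside BaierEr10's hard range relative to the new vertex count, transferring hardness to all $k$ up to $n^{1-\epsilon}$.
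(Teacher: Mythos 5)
Your proposal follows essentially the same route as the paper's proof: it passes to $k$-length bounded $x$-cuts via \Cref{lem:xcut-kkings-destr}, adapts the gadget construction of \citet{BaierEr10} into a tournament using backward edges and supernodes (the latter both simulating capacities in the unweighted setting and forcing $y$ to be the cheapest vertex to separate from $x$), and pads the instance to handle the regime $k \geq n^{1-\epsilon}$. The only cosmetic difference is framing: the paper states its reduction directly from \textsc{Vertex Cover} (the source of the hardness in \citet{BaierEr10}), with the central claim that $G$ has a vertex cover of size $c$ iff $T$ has a $k$-bounded $x$-cut of size $c + |V(G)|$, and the exhaustive path/cut case analysis you defer to your ``structural lemma'' is precisely the work the paper carries out there.
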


\begin{proofsketch}
We reduce from vertex cover; see \Cref{fig:kkingsReduction} for the construction for $k=4$. Lemma \ref{lem:xcut-kkings-destr} implies that determining the MoV of node $x$ with respect to $4$-kings is equivalent to computing the cost of a $4$-length bounded minimum $x$-cut. The key part of the proof is to show that, for any $c \leq |V(G)|$, there exists a vertex cover in $G$ of size $c$ iff there exists a $4$-bounded $x$-cut in $T$ of size $c + |V(G)|$. For the direction from left to right, a vertex cover $U$ can be translated to a $4$-bounded $x$-$y$-cut by including edges $\ell_v$ and $r_v$ (depicted by red dashed edges) whenever $v \in U$ (depicted by a red dashed node) and $m_v$ otherwise. For the other direction, we argue that any $4$-bounded $x$-cut of size $c + |V(G)|$ can be translated to a $4$-bounded $x$-$y$-cut which includes only edges of the type $\ell_v,r_v$ and $m_v$ and is of no greater size. Reversing the previously described transformation gives us a vertex cover of size $c$. The proof is then extended to $k>4$.
\end{proofsketch}

\subsection{Banks Set}

Deciding whether an alternative $x$ is contained in the Banks set of a tournament $T$, and hence deciding whether $\mov_\ba(x,T)>0$, is NP-complete \citep{Woeginger03}.
Our next result shows that determining $\mov_\ba(x,T)$ is computationally intractable \textit{even} if we know that $x$ is a Banks winner in tournament $T$.
% (\ie $\mov_\ba(x,T)>0$).

\begin{theorem}
\label{thm:destructive-banks-NP}
Computing the MoV of a \ba winner in the unweighted setting is NP-hard.
\end{theorem}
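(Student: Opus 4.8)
The plan is to reduce from the problem of deciding Banks membership, which \citet{Woeginger03} showed to be NP-complete: given a tournament $T_0$ and a distinguished vertex $d$, decide whether $d \in \ba(T_0)$. I would build, in polynomial time, a tournament $T$ together with a vertex $x$ that is \emph{guaranteed} to be a Banks winner, in such a way that $\mov_\ba(x,T)$ crosses a fixed, construction-dependent threshold $\theta_0$ precisely according to whether $d \in \ba(T_0)$. A single call to an algorithm computing $\mov_\ba(x,T)$ then decides Banks membership, yielding the claimed hardness; note that the promise $x \in \ba(T)$ is met by construction, matching the statement that the problem is hard \emph{even} when $x$ is a known winner.

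The construction has two intertwined parts. First, a \emph{simulation} part links the inclusion-maximal transitive subtournaments rooted at $x$ to those rooted at $d$ in $T_0$: I let $x$ dominate $d$ and, for every $v \in V(T_0)\setminus\{d\}$, set $x \succ v$ iff $d \succ v$ in $T_0$, so that $D_T(x)\cap V(T_0)=\{d\}\cup D_{T_0}(d)$. A short case analysis shows that every inclusion-maximal transitive subtournament rooted at $x$ must contain $d$ (any transitive set $\{x\}\cup B$ with $B\subseteq D_{T_0}(d)$ can be extended by $d$, since $d$ dominates all of $B$ and $x$ dominates $d$), and that the map $B_0 \mapsto \{x\}\cup B_0$ is a bijection between inclusion-maximal transitive subtournaments rooted at $d$ in $T_0$ and those rooted at $x$ that use only $V(T_0)$; in particular these $T_0$-type certificates for $x$ exist iff $d \in \ba(T_0)$. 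Second, a \emph{shield} gadget $F$ (a transitive block $f_1 \succ \dots \succ f_m$, all dominated by $x$) supplies $x$ with at least one inclusion-maximal transitive subtournament unconditionally, so that $x \in \ba(T)$ always. The edges between $F$ and $V(T_0)$ are chosen so that no vertex of $T_0$ can extend the shield certificate and no shield vertex can extend a $T_0$-type certificate, which keeps the two families of certificates disjoint and makes their destruction independent.

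With this decomposition in place, the threshold argument is straightforward. To remove $x$ from the Banks set one must destroy \emph{all} inclusion-maximal transitive subtournaments rooted at $x$. If $d \notin \ba(T_0)$ there are no $T_0$-type certificates, so it suffices to neutralize the shield, which costs exactly $\theta_0$ reversals; hence $\mov_\ba(x,T)=\theta_0$. If $d \in \ba(T_0)$, at least one $T_0$-type certificate is present, and because the edges needed to destroy it are disjoint from those needed to kill the shield, any destructive reversal set must spend strictly more than $\theta_0$, i.e., $\mov_\ba(x,T) > \theta_0$. Comparing the returned value with $\theta_0$ therefore decides membership.

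The main obstacle is the tension between guaranteeing that $x$ is a winner and keeping the cost of unseating it tied to the hard instance: naive guarantees fail, because, for example, attaching a vertex that dominates everything except $x$ does make $x$ a winner but also lets a single reversal turn that vertex into a Condorcet winner, killing $x$ at unit cost regardless of $T_0$. The delicate part is thus the shield design, and in particular verifying that the clean split into shield-type and $T_0$-type certificates---together with the independence of their minimal destructive reversal sets---survives edge reversals. As in \Cref{lem:xcut-kkings-destr}, reversing edges can create \emph{new} transitive subtournaments rooted at $x$, so the correctness proof must argue that no unintended inclusion-maximal transitive subtournament rooted at $x$ appears in $T^R$ for any candidate reversal set $R$; establishing this invariant for the shield is where essentially all of the work lies.
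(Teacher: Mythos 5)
Your proposal has the right skeleton---in fact the same one the paper uses: reduce from Banks membership \citep{Woeginger03}, embed the instance so that chains rooted at $x$ mirror chains rooted at the distinguished alternative, attach a gadget that makes $x$ a Banks winner unconditionally, and read off membership from whether the MoV exceeds a threshold. You even isolate the correct difficulty (a naive guarantee vertex collapses the MoV to $1$ regardless of the instance, and reversals can create brand-new certificates). But the proof stops exactly at that difficulty: the shield $F$ is never constructed, $\theta_0$ is never determined (it is defined circularly as ``the cost of neutralizing the shield,'' yet the reduction must know this value to perform the comparison), and neither direction of the threshold claim is proven. The ``disjointness'' you appeal to for the lower bound is not innocuous and is likely false for natural shield designs: any DRS must in particular arrange for some alternative to dominate $x$ together with (what remains of) the shield chains $\{x\}\cup F'$, $F'\subseteq F$; since shield vertices are dominated by $x$, such a dominator can only come from $V(T_0)$ or be created by reversals incident to $x$ or between $V(T_0)$ and $F$---and these are exactly the edges that also govern which $T_0$-type certificates survive. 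So the two certificate families are inherently coupled, and establishing (or engineering away) that coupling is the entire proof, which you defer; by your own admission, this is ``where essentially all of the work lies.''

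For comparison, the paper realizes your template with a concrete two-vertex shield and threshold $\theta_0=1$. It adds $y$ dominating exactly $\overline{D}(x)\cup\{z\}$ and $z$ dominating exactly $D(x)$; then $\{x,y\}$ is a certificate for $x$ (the dominators of $x$ and of $y$ form disjoint sets), so $x$ is always a winner. If $x\notin\ba(T)$, reversing the single edge $(x,y)$ is a DRS: afterwards no chain rooted at $x$ contains $y$, its $T$-part is extendable by some $w\in\overline{D}(x)$ precisely because $x\notin\ba(T)$, and $z$ is designed so that this same $w$ also dominates $z$. If $x\in\ba(T)$, a case analysis over which single edge is reversed exhibits a surviving certificate in every case: $\{x,y,z\}$ if the edge lies within $T$, $\{x,y\}$ if it is $(x,z)$, and the original certificate from $T$ otherwise. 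The vertex $z$---simultaneously preventing $y$ from becoming a Condorcet winner after the reversal of $(x,y)$ and being dominated by every potential extender $w\in\overline{D}(x)$---is precisely the ingredient your unspecified shield lacks; without it (or an equivalent construction with the accompanying case analysis), what you have is a proof outline, not a proof.
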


\begin{proof}
We reduce from the NP-hard problem of determining whether an alternative is contained in the Banks set \citep{Woeginger03}.
Take any instance of that problem, which consists of a tournament $T$ and one of its alternatives $x$.
Add two alternatives $y,z$ so that $y$ dominates only $\overline{D}(x)\cup \{z\}$, and $z$ dominates only $D(x)$.
Call the resulting tournament~$T'$ (see Figure~\ref{fig:destructive-banks}).
Observe that $x\in\ba(T')$: the transitive subtournament $T|_{\{x,y\}}$ cannot be extended, since no alternative dominates both $x$ and $y$.
We claim that $\mov_\ba(x,T')=1$ if and only if $x\not\in\ba(T)$.

First, assume that $x\not\in\ba(T)$. We show that $R=\{x,y\}$ is a DRS for $x$. 
Consider any transitive subtournament in $T'' = (T')^R$ with $x$ as the maximal element.
This tournament cannot include $y$, but may include $z$.
Since $x\not\in\ba(T)$, there exists an alternative $w$ in $T$ that dominates all alternatives in the subtournament. In particular, since $w\in \overline{D}(x)$, $w$ also dominates $z$.
Hence the transitive subtournament can be extended by $w$, implying that $x\not\in\ba(T'')$.

Assume now that $x\in\ba(T)$. We claim that $\mov_\ba(x,T')>1$.
Since $x\in\ba(T)$, there exists a transitive subtournament in $T$ with $x$ as the maximal element that cannot be extended by any alternative in $T$.
Moreover, since $x$ dominates both $y$ and $z$, this subtournament cannot be extended by $y$ or $z$.
Unless we reverse an edge in $T$ or the edge $(x,z)$, this subtournament still cannot be extended.
If we reverse the edge $xz$, the transitive subtournament $T|_{\{x,y\}}$ cannot be extended.
Else, if we reverse an edge in $T$, the transitive subtournament $T|_{\{x,y,z\}}$ cannot be extended.
Hence, there is no DRS for $x$ of size one, as claimed. 
\end{proof}

\begin{figure}[!ht]
\centering
\scalebox{1.1}{
\begin{tikzpicture}
\draw[dashed] (2,2) ellipse (3cm and 1cm);
\draw (0.6,2) circle [radius = 0.6];
\node at (0.6,2) {$\overline{D}(x)$};
\draw (3.4,2) circle [radius = 0.6];
\node[circle,fill,inner sep=2pt](x) at (2,2){}; \node[below=2pt] at (x){$x$};
\node at (3.4,2) {$D(x)$};
\draw[->,thick] (1.2,2) -- (1.8,2);
\draw[->,thick] (2.2,2) -- (2.8,2);
\node at (4.5,2.9) {$T$};
\node[circle,fill,inner sep=2pt](y) at (0.6,0){}; \node[below=2pt] at (y){$y$};
\node[circle,fill,inner sep=2pt](z) at (3.4,0){}; \node[below=2pt] at (z){$z$};
\draw[->,thick] (0.6,0.2) -- (0.6,1.4);
\draw[->,thick] (3.4,0.2) -- (3.4,1.4);
\draw[->,thick] (0.8,0) -- (3.2,0);
\draw[->,thick] (1.85,1.85) -- (0.67,0.15);
\draw[->,thick] (2.15,1.85) -- (3.33,0.15);
\draw[->,thick] (1.05,1.6) -- (3.23,0.1);
\draw[->,thick] (2.95,1.6) -- (0.77,0.1);
\end{tikzpicture}
}
\caption{Illustration of the tournament $T'$ constructed in the proof of Theorem~\ref{thm:destructive-banks-NP}.}
\label{fig:destructive-banks}
\end{figure}
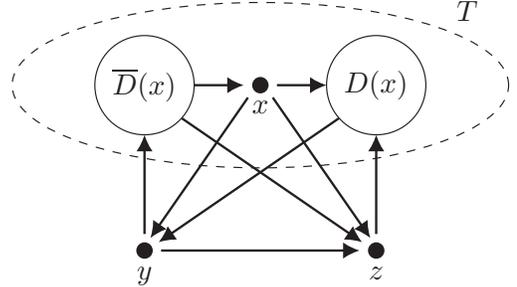

\section{Computing the MoV for Non-Winners}\label{sec:compNonWinners}

We now turn to computing the MoV for non-winners.

\subsection{Copeland}
Similarly to the winner case, the results by \citet{FHHR09c} already imply that the MoV for non-winners can be computed in polynomial time. For completeness, we remark that a greedy algorithm suffices for our unweighted setting, and present an easy network flow approach for the weighted case. 
 
\begin{restatable}{theorem}{restateCopelandConstructive}\label{thm:constructive-CL}
Computing the MoV of a \cp non-winner in the weighted setting can be done in polynomial time.
\end{restatable}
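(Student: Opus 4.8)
The plan is to reduce the computation of $\mov_\cp(x,T)$ to polynomially many minimum-cost flow computations. The guiding observation is that reversing a single edge $e=(a,b)$ raises the Copeland score (out-degree) of $b$ by one and lowers that of $a$ by one, leaving all other scores unchanged; consequently these changes are purely additive and the total score $\sum_v |D(v)|$ is invariant. Hence a reversal set $R$ is precisely a $\{0,1\}$-valued flow on the network whose arcs are the edges of $T$, each with capacity one and cost $w(e)$: pushing one unit along $(a,b)$ corresponds to reversing that edge and transports one unit of score from tail to head. The final score of $v$ in $T^R$ is then $|D(v)|$ plus the net excess (inflow minus outflow) of the flow at $v$; a chain of reversals routes score through intermediate vertices without altering their scores, exactly mirroring flow conservation.

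The only non-linear part of the goal, namely that $x$ attain the \emph{maximum} score, I would remove by guessing the final score $s$ of $x$ and iterating over all $s \in \{|D(x)|,\dots,n-1\}$. For a fixed $s$, the statement ``$x$ becomes a \cp winner at level $s$'' is equivalent to the per-vertex excess bounds: excess at least $s-|D(x)|$ at $x$, and excess at most $s-|D(v)|$ at every competitor $v$. I realize these bounds as a standard minimum-cost flow instance by adjoining a super-source $\sigma$ and super-sink $\tau$, together with a zero-cost arc $x\to\tau$ of lower bound $s-|D(x)|$, a zero-cost arc $\sigma\to v$ of lower bound $\max\{0,|D(v)|-s\}$ for each competitor, and a zero-cost arc $v\to\tau$ of capacity $\max\{0,s-|D(v)|\}$ to absorb surplus. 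Such a flow with lower bounds can be computed in polynomial time, and since all data are integral the optimum is integral, hence $\{0,1\}$ on the edges of $T$; its support is the reversal set realizing the target.

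It then remains to check that, for each $s$, this min-cost flow value equals the minimum cost of a \crs making $x$ a winner with score exactly $s$, and that minimizing over $s$ yields $-\mov_\cp(x,T)$. Both directions are cost-preserving and, thanks to additivity, require no normalization: a feasible integral flow reverses its support and produces final scores satisfying all the bounds (so $x$ wins), while conversely any reversal set making $x$ a winner at level $s$ induces a feasible flow whose $\sigma$- and $\tau$-arcs simply carry the per-vertex imbalances, which the lower bounds and capacities were chosen to accommodate. Feasibility for at least one $s$ is guaranteed by $s=n-1$, corresponding to turning $x$ into a Condorcet winner by reversing its incoming edges, so the minimum over $s$ is well defined; and because the flow is integral we also recover an explicit minimum \crs, as promised.

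The step I expect to require the most care is the clean linearization of the maximum via the threshold $s$ together with the verification that flow excess captures the score change exactly. This is where the Copeland setting is genuinely ``easy'' in contrast to, say, \Cref{lem:xcut-kkings-destr}: reversing several edges there can create unintended short paths, whereas here simultaneous reversals only add their individual degree effects, so the equivalence between reversal sets and flows is exact rather than merely holding for minimum sets. Confirming that the chosen lower bounds and capacities encode precisely ``$x$ is a \cp winner at level $s$'' --- and no more --- is the crux.
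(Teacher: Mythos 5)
Your proposal is correct, and it follows the same overall strategy as the paper's proof---guess the final Copeland score of $x$, reduce each guess to a minimum-cost flow computation, and use integrality of min-cost flows to extract an explicit reversal set---but your network encoding is genuinely different. The paper builds a four-layer network with one node \emph{per tournament edge}: each such node receives one unit of flow from the source and forwards it either to its tail (cost $0$) or to its head (cost $w(e)$), so the flow absorbed at a vertex equals that vertex's final out-degree; the winner condition is enforced by capacity-$c$ arcs into the sink and by giving $x$ balance exactly $c$. You instead keep the tournament's vertices as the network nodes and use the tournament arcs themselves as unit-capacity arcs of cost $w(e)$, so that a reversal set is literally the support of a flow and score \emph{changes} appear as vertex excesses; the winner condition is enforced by lower-bounded and capacitated auxiliary arcs to a super-source and super-sink. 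Your differential encoding is more compact ($n+2$ nodes rather than $\Theta(n^2)$) and makes the flow/reversal correspondence immediate, at the price of requiring min-cost circulations with lower bounds (still polynomial and standard); the paper's assignment-style encoding avoids lower bounds entirely. Two small points you should make explicit to close the argument: (i) restricting the iteration to $s \ge |D(x)|$ is justified because a minimum \crs never reverses an out-edge of $x$---dropping such an edge from a \crs keeps it a \crs (it raises $x$'s score and weakly lowers everyone else's) and strictly decreases its cost; and (ii) the super-source and super-sink need a return arc or matching balances so that the instance is a well-defined circulation problem.
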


\subsection{Uncovered Set, $\boldsymbol{k}$-Kings, and Top Cycle}

To get $x$ into the uncovered set, we need its dominion to be a dominating set in $T_{-x}$.
Since a tournament with $n$ vertices always has a dominating set of size $\lceil \log_2 n\rceil$ \citep{MegiddoVi88}, we do not need to flip more than $\lceil \log_2 n\rceil$ edges. 
This also means that there exists an $n^{O(\log n)}$ algorithm for finding the minimum number of necessary edge reversals, as we can try all combinations of at most $\lceil \log_2 n\rceil$ vertices to add to the dominion of $x$.
\citet{MegiddoVi88} also proved that the problem of finding a dominating set of minimum size in a tournament, which we call {\normalfont \scshape Minimum Dominating Set}, is unlikely to admit a polynomial-time algorithm: the existence of such an algorithm would have unexpected implications on the satisfiability problem.
We present a reduction from {\normalfont \scshape Minimum Dominating Set} to the problem of computing the MoV for \uc non-winners, which means that the latter problem is also unlikely to admit an efficient algorithm.

\begin{theorem}
\label{thm:constructive-UC-unweighted}
Computing the MoV of a \uc non-winner in the unweighted setting is at least as hard as {\normalfont \scshape Minimum Dominating Set}.
\end{theorem}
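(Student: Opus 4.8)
The plan is to reduce \textsc{Minimum Dominating Set} to the problem of computing $\mov_\uc$ for a non-winner. Given a tournament $G$ on vertex set $V(G)$, I construct a tournament $T$ by adjoining a single new alternative $x$ that is dominated by every vertex of $G$, so that $x$ is a Condorcet loser in $T$ and $T_{-x}=G$. Since $D(x)=\emptyset$, the alternative $x$ cannot reach anyone within two steps, hence $x\notin\uc(T)$ and computing $\mov_\uc(x,T)$ asks for a minimum \crs. I will use the characterization noted above: $x\in\uc(T^R)$ precisely when $D(x)$ is a dominating set in $(T^R)_{-x}$. The goal is to show that the cost of a minimum \crs equals the size $d$ of a minimum dominating set of $G$, so that $\mov_\uc(x,T)=-d$ and any algorithm for the MoV recovers $d$ from a polynomial-time construction.

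For the upper bound, let $S^\ast$ be a minimum dominating set of $G$ with $|S^\ast|=d$. Reversing the $d$ edges $\{(v,x):v\in S^\ast\}$ turns each such $v$ into an element of $D(x)$ while leaving $T_{-x}=G$ untouched; as $S^\ast$ dominates $G$, the dominion $D(x)=S^\ast$ is a dominating set of $(T^R)_{-x}$, so $x$ enters $\uc$. Thus a minimum \crs costs at most $d$.

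The main work, and the step I expect to be the real obstacle, is the matching lower bound: because we are in the unweighted setting we cannot forbid reversals of edges internal to $G$, and such reversals can change which vertices $D(x)$ dominates, so I must rule out that they ever beat the naive strategy. Given any \crs $R$, I split it as $R=A\cup B$ with $A\cap B=\emptyset$, where $A$ consists of the edges of $R$ incident to $x$ (all of the form $(v,x)$, since $x$ is a sink in $T$) and $B$ of the remaining, internal edges. After reversal we have $D(x)=S:=\{v:(v,x)\in A\}$ with $|S|=|A|$, and $S$ must be a dominating set in the tournament $G'$ obtained from $G$ by reversing the edges of $B$. To extract a dominating set of the \emph{original} $G$, let $N\subseteq V(G)\setminus S$ be the set of vertices that $S$ dominates in $G'$ but not in $G$. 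Each $y\in N$ owes its new incoming edge from $S$ to at least one reversed edge of $B$ incident to $y$, and these edges are distinct for distinct $y$, so $|B|\ge|N|$. I then claim that $S\cup N$ is a dominating set of $G$: any $z\notin S\cup N$ is dominated by $S$ in $G'$ (as $z\notin S$) and, not lying in $N$, must already have been dominated by $S$ in $G$. Hence $d\le|S\cup N|\le|S|+|N|\le|A|+|B|=|R|$, so every \crs costs at least $d$.

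Combining the two bounds gives $\mov_\uc(x,T)=-d$; since the construction is polynomial and recovering $d$ from the MoV value is immediate, computing the MoV of a \uc non-winner is at least as hard as \textsc{Minimum Dominating Set}. The only delicate point is the counting argument bounding $|B|$ from below by $|N|$ together with the verification that $S\cup N$ dominates $G$, which is precisely where the absence of weights forces a purely combinatorial---rather than cost-based---exclusion of internal reversals.
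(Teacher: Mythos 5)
Your proposal is correct, and it uses exactly the same reduction as the paper: adjoin $x$ as a Condorcet loser to the \textsc{Minimum Dominating Set} instance $G$, and show that the minimum \crs size equals the minimum dominating set size, with the identical upper-bound direction. The only place you diverge is the lower bound. The paper handles internal edges by an \emph{exchange} argument: if $R$ reverses an internal edge $(z,y)$, the only alternative this reversal can help $x$ reach in two steps is $z$, so $(z,y)$ can be swapped for $(z,x)$ without increasing $|R|$; iterating yields a minimal \crs consisting only of edges incident to $x$, whose endpoints form a dominating set of $G$. You instead leave $R=A\cup B$ untouched and run a counting argument: each vertex $y\in N$ that $S=D(x)$ dominates only after the internal reversals must have lost an edge $(y,s)$ with $s\in S$ to $B$, these edges are pairwise distinct (since $N\cap S=\emptyset$), and $S\cup N$ dominates the \emph{original} $G$, giving $d\le|A|+|B|$. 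Both arguments encode the same insight---internal reversals can be charged one-for-one against edges incident to $x$---but yours extracts a dominating set directly from an arbitrary \crs rather than first normalizing the \crs, which makes the verification purely local (no iteration, no need to re-check that reachability is preserved after each swap), at the cost of introducing the bookkeeping set $N$. The paper's version is shorter and additionally produces a canonical minimum \crs incident to $x$, which is the form in which the observation is reused in Theorem~\ref{thm:constructive-bound-uc-ba}.
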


\begin{proof}
Consider an instance of {\normalfont \scshape Minimum Dominating Set} given by a tournament $T$.
Define a new tournament $T'$ by adding an alternative $x \notin V(T)$ to $T$, and by making $x$ a Condorcet loser in $T'$.
We claim that 
$-\mov_\uc(x,T')$
is equal to the minimum size of a dominating set in $T$.
For any dominating set in $T$, we obtain a constructive reversal set for $x$ in $T'$ consisting of the edges between $x$ and all members of the set.
On the other hand, consider a CRS $R$ for $x$ in $T'$.
Suppose that $R$ contains an edge $(z,y)$ with $x\not\in\{z,y\}$, such that $y\rightarrow z$ in $T'^R$. 
The only alternative that this reversal can help $x$ reach in two steps is $z$.
In this case, we can instead include $(z,x)$ in $R$ and maintain the property that $x$ can reach all other alternatives in at most two steps.
Hence there is always a minimal CRS that only contains edges incident to $x$.
The alternatives involved in this CRS besides $x$ form a dominating set in $T$.
\end{proof}

In the unweighted setting, minimum CRSs  w.r.t. $k$-kings ($k\geq3$) are single edges (see \Cref{thm:constructive-bound-tc} in \Cref{subsec:lowerBounds}) and hence can be found efficiently.
In the weighted setting, we show hardness for \uc and $k$-kings and tractability for~\tc.

\begin{restatable}{theorem}{restateThmConstructiveUCNP}
\label{thm:constructive-UC-NP}
Computing the MoV of a \uc non-winner in the weighted setting is NP-hard.
\end{restatable}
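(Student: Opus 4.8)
The plan is to reduce from a standard NP-hard covering problem; I would use \textsc{Set Cover}: given a universe $U=\{u_1,\dots,u_m\}$, a family $\mathcal{S}=\{S_1,\dots,S_k\}$ of subsets, and a target $c$, decide whether $c$ of the sets cover $U$. The guiding principle is the characterization already recorded above, namely that $x$ enters $\uc(T^R)$ exactly when its (modified) dominion becomes a dominating set of the (modified) tournament $T_{-x}$. Weights are what buy genuine hardness beyond the unweighted/dominating-set regime: they let me make some reversals essentially forbidden and others essentially free, so that the only affordable way to repair domination encodes a choice of sets.

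First I would build the tournament $T'$. It contains the special alternative $x$, one \emph{set-vertex} $a_j$ for each $S_j$, one \emph{element-vertex} $t_i$ for each $u_i$, and a small constant-size gadget whose role is explained below. Initially every other vertex dominates $x$, so $x$ is a Condorcet loser and in particular $x\notin\uc(T')$; this makes $x$ a legitimate non-winner. Each selection edge $(a_j,x)$ receives weight $1$, whereas every remaining edge receives a large weight $M$ (say $M=k+m+2$), which will make its reversal unaffordable for any cheap reversal set. The internal orientation encodes membership: I set $a_j\to t_i$ precisely when $u_i\in S_j$ (all at weight $M$), and orient the element-vertices and set-vertices among themselves so that, apart from reversals of cost $M$, the only way for $x$ to reach $t_i$ in two steps is through a set-vertex $a_j$ with $u_i\in S_j$.

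The intended correspondence is that a minimum \crs for $x$ reverses only weight-$1$ edges, i.e.\ it simply chooses a subfamily $J$ by reversing $(a_j,x)$ for $j\in J$, thereby setting $D(x)=\{a_j:j\in J\}$. For $x$ to become uncovered, $D(x)$ must dominate $T'_{-x}$, and dominating all element-vertices forces $\{S_j:j\in J\}$ to cover $U$, so $\mov_\uc(x,T')$ equals $-|J^*|$ up to the fixed additive cost of the gadget, where $J^*$ is a minimum cover. For the forward direction I translate a cover of size $c$ directly into such a reversal set; for the backward direction I argue that any \crs of cost below $M$ cannot reverse a weight-$M$ edge, hence consists solely of selection edges, and read off a cover from it.

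The hard part will be pinning down the construction so that this correspondence is \emph{exact}, and here two issues must be controlled simultaneously. First, I must rule out cheaper constructive reversal sets that exploit the extra freedom of the weighted model: reversing an internal (non-incident) edge can create a brand-new length-two path $x\to v\to w$, and the argument must guarantee that every such shortcut either costs $M$ or is useless, so that internal reversals never undercut the set-cover solution. Second, the dominion $D(x)$ must also dominate the set-vertices and the gadget, not only the element-vertices; since any vertex that is cheap to cover is necessarily a strong dominator of others, these auxiliary domination obligations are coupled to the choice of $J$ and can shift the optimum by a small additive term. The crux of the proof is therefore the design of the gadget and of the orientation among the set-vertices so that covering all non-element vertices costs a fixed constant independent of $J$, while the selected sets remain forced to cover $U$; once this is achieved, $\mov_\uc(x,T')$ equals the minimum cover size up to that constant and the reduction goes through.
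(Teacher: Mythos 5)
You have the right reduction target (\textsc{Set Cover}), the right characterization ($x$ enters \uc exactly when its two-step reachability covers everything), and the right weight trick (cheap selection edges $(a_j,x)$, prohibitively expensive everything else) --- this is the same skeleton as the paper's proof. But there is a genuine gap, and it is exactly the one you flag yourself: the gadget and the orientation among the set-vertices that are supposed to make the auxiliary domination obligations cost ``a fixed constant independent of $J$'' are never constructed, and the correspondence genuinely fails without them. Because you make $x$ a Condorcet loser, after reversing the selection edges for $J$ the dominion of $x$ is exactly $\{a_j : j \in J\}$, and this set must also dominate every non-chosen set-vertex and every gadget vertex. With an unspecified orientation of the set-vertices this breaks the reduction: for instance, if the set-vertices are ordered transitively, the top vertex $a_1$ is dominated by no other set-vertex and only by element-vertices (which cost $M$ to add to $D(x)$), so $(a_1,x)$ must be reversed in every affordable CRS, and the quantity computed becomes ``minimum cover forced to contain $S_1$'' plus a constant rather than the minimum cover size. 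So the additive shift is neither automatically constant nor independent of $J$; arranging for that is the actual content of the proof, not a detail to be deferred.

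The paper resolves precisely this difficulty with a one-vertex gadget, and the key design choice is the opposite of yours: $x$ is \emph{not} made a Condorcet loser. One adds a single alternative $y$ with $A \succ x \succ y \succ A$ and $B \succ \{x,y\}$ (edges inside $A$ and $B$ arbitrary; the $A$--$x$ edges cost $1$, all other edges cost $n^2$). Since $x \succ y \succ A$ holds already in the constructed tournament, $x$ reaches $y$ in one step and every set-vertex in two steps at zero cost, independently of which selection edges are reversed; the only remaining obligation is to reach $B$ in two steps, which is possible only through selected set-vertices, so the minimum CRS cost equals the minimum cover size exactly, with no additive correction at all. If you want to salvage your variant, the minimal fix is to replace your unspecified gadget by such a $y$ (or by a vertex $g$ with $t_i \succ g \succ A$ for all $i$ and a cheap edge $(g,x)$, which gives a harmless $+1$ shift); either way, the construction must actually be exhibited --- stating the properties it should have does not complete the reduction.
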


\begin{figure}[!b]
\centering
\vspace{-.5cm}
\scalebox{1.1}{
\begin{tikzpicture}
\node[circle,fill,inner sep=2pt] at (2,8)(x){}; \node[above=2pt] at (x){$x$};
\node[circle,fill,inner sep=2pt] at (2,7)(y){}; \node[right=2pt] at (y){$y$};
\draw[->,thick] (2,7.8) -- (2,7.2);
\draw[->,thick] (2,6.8) -- (2,6.25);
\draw (2,5.65) ellipse (1.4cm and 0.6cm);
\draw (2,4) ellipse (1.8cm and 0.6cm);
\node at (3.7,5.65) {$A$};
\draw[fill] (1.3,5.65) circle [radius = 0.05];
\draw[fill] (2,5.65) circle [radius = 0.05];
\draw[fill] (2.7,5.65) circle [radius = 0.05];
\node at (4.1,4) {$B$};
\draw (1.3,4) ellipse (0.6cm and 0.3cm);
\draw (2,4) ellipse (0.6cm and 0.3cm);
\draw (2.7,4) ellipse (0.6cm and 0.3cm);
\draw[->,thick] (1.3,5.65) -- (1.3,4.3);
\draw[->,thick] (2,5.65) -- (2,4.3);
\draw[->,thick] (2.7,5.65) -- (2.7,4.3);
%\draw[->,thick] (2.8,6.15) -- (2.13,7.8);
\draw[->,thick] (0.19,4) to[bend left=50] (1.85,7.9);
\draw[->,thick] (0.19,4) to[bend left=50] (1.85,7);
\draw[->,thick] (3.4,5.75) to[bend right=35] (2.15,7.9);
\end{tikzpicture}
}
\caption{An illustration of the construction in Theorem~\ref{thm:constructive-UC-NP}.}
\label{fig:constructive-UC}
\end{figure}
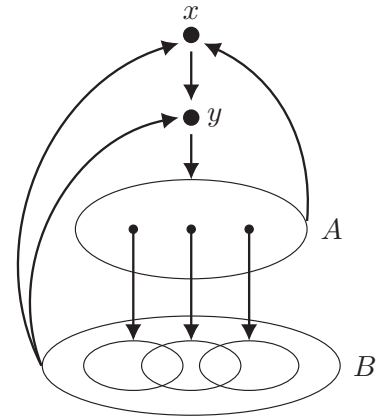

%\restateThmConstructiveUCNP*
\begin{proof}
Given an instance of {\normalfont \scshape Set Cover} with a universe of size $r$ and a collection of $s$ sets, we construct a tournament with alternatives $x, y$ and alternative sets $A, B$, where $|A| = s$ and $|B| = r$. The edges are given as follows (see also Figure~\ref{fig:constructive-UC}):
\begin{itemize}
\item $A\succ x\succ y\succ A$;
\item $B\succ \{x,y\}$;
\item Each alternative in $A$ dominates the corresponding subset of $B$ in the {\normalfont \scshape Set Cover} instance, and is dominated by the remaining alternatives in $B$.
\end{itemize}
The edges within $A$ and $B$ are arbitrary. 
The edges between $A$ and $x$ have cost $1$, while the remaining edges have cost~$n^2$.

The chosen costs imply that a miminum CRS will only contain edges between $A$ and $x$.
Since $x$ already reaches all alternatives of $A$ in two steps via $y$, it only needs to reach all vertices of $B$ in two steps via $A$ in order to be part of \uc. 
The minimum cost of a CRS is therefore exactly the size of a minimum set cover.
\end{proof}

\begin{restatable}{theorem}{restateReductionKKingsConstructive}
For any constant $k\geq 3$, computing the MoV of a non-$k$-king in the weighted setting is NP-hard. For any constant $\epsilon>0$, the problem is still NP-hard when we restrict to non-constant $k \geq (1-\epsilon)n$. \label{thm:constructive-weighted-kkings-NP}
\end{restatable}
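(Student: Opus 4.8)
The plan is to reduce from {\normalfont \scshape Set Cover}, generalizing the construction used for $\uc = 2$-kings in \Cref{thm:constructive-UC-NP}. Given a universe of size $r$ and a collection of $s$ sets, I would build a tournament $T$ in which $x$ is a non-$k$-king and the minimum-cost \crs for $x$ encodes a minimum set cover. The vertex set consists of $x$, a \emph{chain} $y_1, \dots, y_{k-1}$, a set of \emph{set-vertices} $A = \{a_1, \dots, a_s\}$ (one per set), and a set of \emph{element-vertices} $B = \{b_1, \dots, b_r\}$ (one per universe element), so that $n = k + s + r$.

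The edges are oriented so that $x$ reaches every vertex except $B$ within $k$ steps for free, while reaching an element of $B$ within $k$ steps requires a cheap reversal that ``activates'' a covering set-vertex. Concretely, $x \succ y_1 \succ \dots \succ y_{k-1}$ forms the chain, with all chain shortcut edges $y_j \succ y_i$ (for $j > i+1$) oriented backwards so that the unique shortest $x$-$y_i$-path has length $i$; $y_{k-1} \succ a_j$ for every $j$, so each $a_j$ is reached in exactly $k$ steps; each $a_j$ dominates exactly the element-vertices of its set and is dominated by the rest of $B$; and $B \succ \{x, y_1, \dots, y_{k-1}\}$, so that no $y_i$ points into $B$ and the only in-neighbours of a $b \in B$ that $x$ can hope to reach are covering set-vertices. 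The key cheap edges are $a_j \succ y_{k-2}$ for every $j$: each has cost $1$, every other edge has cost $n^2$, and all remaining edges point ``backwards.'' Reversing $a_j \succ y_{k-2}$ yields $y_{k-2} \succ a_j$, which lets $x$ reach $a_j$ along the chain in $k-1$ steps and hence every element of $S_j$ in exactly $k$ steps. (For $k=2$ one has $y_{k-2}=x$ and the cheap edges are exactly the edges of $A$ into $x$, recovering the construction of \Cref{thm:constructive-UC-NP}.)

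The correctness argument has two directions. First, a set cover $J$ yields the \crs $\{(a_j, y_{k-2}) : j \in J\}$ of cost $|J|$: after the reversals $x$ still reaches all chain vertices and all of $A$, and now reaches $\bigcup_{j \in J} S_j = B$ within $k$ steps. Conversely, I would argue that an optimal \crs never reverses an expensive edge, since reversing all $s$ cheap edges is already a \crs of cost $s < n^2$; hence the optimum uses only cheap edges, and the set-vertices it activates must cover every element-vertex, i.e.\ form a set cover. This shows the minimum \crs cost equals the minimum set-cover size, so computing $\mov$ solves {\normalfont \scshape Set Cover}, which is NP-hard.

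The main obstacle is verifying that the orientation introduces \emph{no unintended short path} from $x$ to $B$. Because $T$ is a tournament, every pair of vertices is joined, so I must check that the ``backwards'' convention for all omitted edges really forces every $x$-$b$-path to have length at least $k$ and to pass through an activated set-vertex: paths that wander into $B$ early are blocked by $B \succ \{x\} \cup \{y_i\}$; paths chaining several set-vertices are too long because each $a_j$ is reached only at step $k-1$ or later; and chain shortcuts are ruled out by the backward orientation. Once this is confirmed for the constant-$k$ construction, the non-constant case $k \geq (1-\epsilon)n$ follows by padding the chain: taking the chain length $k-1$ large enough (of order $(s+r)/\epsilon$, still polynomial in the input) makes $k/n = k/(k+s+r) \geq 1-\epsilon$ while leaving every step of the argument intact.
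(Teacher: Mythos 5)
Your proposal is correct and follows essentially the same route as the paper's proof: the identical Set Cover reduction with a chain $y_1,\dots,y_{k-1}$ (backward shortcuts), $y_{k-1}\succ A$, $B$ dominating everything outside $A$'s coverage, cheap cost-$1$ edges between $A$ and $y_{k-2}$ versus cost-$n^2$ elsewhere, and the same padding of the chain to handle $k\geq(1-\epsilon)n$. The path-verification step you flag as the "main obstacle" is genuinely needed but goes through exactly as you sketch (and the paper itself treats it just as briefly).
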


%\restateReductionKKingsConstructive*

\begin{proof}
The proof is divided into two parts. First, we introduce the reduction for any constant $k\geq3$. Second, we argue that, for any $\epsilon >0$, we can still carry out the reduction even when we restrict ourselves to the problem in which $k\geq (1-\epsilon)n$. 

We use a similar reduction as for \uc. Instead of having a single alternative $y$, we add $k-1$ alternatives $y_1,\dots,y_{k-1}$ so that
\begin{itemize}
\item $x\succ y_1\succ\dots\succ y_{k-1}\succ A$;
\item $y_i\succ x$ for $i\geq 2$;
\item $y_j\succ y_i$ for $j\geq i+2$;
\item $A\succ\{x,y_1,\dots,y_{k-2}\}$;
\item $B\succ\{x,y_1,\dots,y_{k-2},y_{k-1}\}$;
\item Each alternative in $A$ dominates the corresponding subset of $B$ in the {\normalfont \scshape Set Cover} instance, and is dominated by the remaining alternatives in $B$.
\end{itemize}
The edges within $A$ and $B$ are arbitrary. 
The edges between $A$ and $y_{k-2}$ have cost $1$, while the remaining edges have cost $n^2$.

The choice of edge costs implies that a minimum CRS only contains edges between $A$ and $y_{k-2}$.
Since $x$ already reaches all alternatives of $A$ in $k$ steps via $y_1,\dots,y_{k-1}$, it only needs to reach all vertices of $B$ in $k$ steps via $y_1,\dots,y_{k-2},A$ in order to be part of the uncovered set. 
The minimum cost of a CRS is therefore exactly the size of a minimum set cover.

It remains to argue that even if we restrict ourselves to the problem with $k \geq (1-\epsilon)n$, for any fixed $\epsilon>0$, we can still carry out the above reduction in polynomial time. Let $\epsilon >0$ be given and let $r$ be the size of the universe and $s$ be the number of sets in the {\normalfont \scshape Set Cover} instance. We define $n_k$ to be the number of nodes in the tournament which we construct for a given $k$. More precisely it holds that \[n_k = k + s + r.\]
Choose the smallest $k \in \mathbb{N}_{\geq 3}$ such that \[k \geq (s+r)\left(\frac{1-\epsilon}{\epsilon}\right).\] This is still polynomial in $s$ and $r$ while it implies that \[k = (1-\epsilon)\left(k + \frac{\epsilon}{1-\epsilon}k\right) \geq (1-\epsilon)(k+s+r) \geq (1-\epsilon )n_k,\] concluding the proof.
\end{proof}

\begin{restatable}{theorem}{restateThmConstructiveWeightedTC}
Computing the MoV of a \tc non-winner in the weighted setting can be done in polynomial time. \label{thm:constructive-weighted-tc}
\end{restatable}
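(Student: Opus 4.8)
The plan is to reduce the problem to a single-source shortest-path computation. Recall that $x$ is a \tc winner in $T^R$ precisely when $x$ can reach every other alternative via a directed path in $T^R$; hence a \crs is exactly a set $R$ after whose reversal $x$ reaches all of $V(T)$. First I would encode reversal costs into an auxiliary weighted digraph $H$ on vertex set $V(T)$: for every pair $u,v$ with $(u,v) \in E(T)$, put an arc $u \to v$ of cost $0$ (traversing the edge forward is free) and an arc $v \to u$ of cost $w((u,v))$ (using this edge requires reversing it). Since weights are positive, all arc costs are nonnegative, so the single-source distances $\delta(x,\cdot)$ from $x$ in $H$ are computable in polynomial time; and it is standard that $\delta(x,t)$ equals the minimum weight of a reversal set that makes $x$ reach a fixed target $t$, as a shortest $x$-$t$ path in $H$ prescribes exactly which backward edges to reverse, while conversely the reversed edges lying on any $x$-$t$ path in $T^R$ form such a set of no larger cost.

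The central claim I would then prove is
\[ -\mov_\tc(x,T) = \min_{t \in \tc(T)} \delta(x,t). \]
For the lower bound, observe that any \crs $R$ lets $x$ reach, in particular, every vertex of the original top cycle $\tc(T)$; fixing any $t \in \tc(T)$ and keeping only the reversed edges used on an $x$-$t$ path in $T^R$ yields a reversal set of cost at most $w(R)$ that already makes $x$ reach $t$, so $w(R) \geq \delta(x,t) \geq \min_{t \in \tc(T)}\delta(x,t)$. For the upper bound, I would take a cheapest path in $H$ from $x$ to the top cycle, truncated so that it meets $\tc(T)$ only at its endpoint $t^*$ (without loss of generality, since an earlier visit to $\tc(T)$ would furnish an even cheaper target), and reverse exactly its backward edges to obtain $R$. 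It then remains to check that this $R$ is genuinely a \crs: since $x$ reaches $t^*$ and $\tc(T)$ is strongly connected, $x$ reaches all of $\tc(T)$; and because no edge internal to $\tc(T)$ is reversed and only a single boundary edge leaving $\tc(T)$ is touched, $\tc(T)$ still dominates, and hence reaches, every remaining vertex in $T^R$, so that $x$ reaches all of $V(T)$.

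The hard part will be exactly this last verification: reversing edges to route $x$ into the top cycle could \emph{a priori} also destroy the top cycle's own reachability, so one must argue that a path entering $\tc(T)$ at a single vertex leaves the internal strong connectivity of $\tc(T)$ and all but one of its outgoing boundary edges intact (a few degenerate situations, e.g.\ $|\tc(T)|=1$, are checked directly). Granting the claim, $\tc(T)$ and all distances $\delta(x,\cdot)$ are computable in polynomial time, and minimizing over $t \in \tc(T)$ yields $\mov_\tc(x,T)$ together with a witnessing minimum \crs, establishing the theorem. As a sanity check, in the unweighted setting every $t \in \tc(T)$ dominates $x$, so $\delta(x,t)=1$ is attained by reversing that single edge, recovering the bound of \Cref{thm:constructive-bound-tc}.
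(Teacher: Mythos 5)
Your proof is correct, and although it shares the paper's high-level plan (reduce to a nonnegative-cost shortest-path computation that routes $x$ into $\tc(T)$, then verify that such a routing cannot destroy reachability of the remaining alternatives), your realization is genuinely different---and, importantly, more robust than the paper's. The paper works on the condensation: it orders the strongly connected components $T_1\succ T_2\succ\cdots\succ T_k$ (so $\tc(T)=V(T_1)$ and $x\in T_r$), forms a transitive meta-tournament on $v_1,\dots,v_r$, prices the arc $v_j\to v_i$ for $i<j\le r-1$ by the cheapest edge of $T$ between $T_i$ and $T_j$ \emph{only}, prices the arcs out of $v_r$ by the cheapest edge between $T_i$ and all of $T_r\cup\cdots\cup T_k$, and returns the shortest $v_r$--$v_1$ distance. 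Your graph $H$ keeps all vertices and makes every forward traversal free, so \emph{every} intermediate descent into lower components is free in your model, whereas the paper's meta-graph grants free descent only from the starting component $T_r$.

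This difference is not cosmetic. Take five components $T_1\succ\cdots\succ T_5$, each a $3$-cycle, with $x\in T_5$; give weight $1$ to one edge from $T_3$ to $T_5$, one edge from $T_2$ to $T_4$, and one edge from $T_1$ to $T_2$, and weight $100$ to every other edge. Reversing those three edges is a \crs of cost $3$: $x$ jumps into $T_3$, descends for free into $T_4$, jumps into $T_2$, then into $T_1$; all internal edges and all but one outgoing edge of each touched component are intact, so $x$ reaches everything. Your quantity $\min_{t\in\tc(T)}\delta_H(x,t)$ equals $3$, and your upper-bound argument certifies exactly this \crs. In the paper's meta-tournament, however, every $v_5$--$v_1$ path costs at least $100$, since the arc $v_3\to v_2$ may only be priced by edges between $T_2$ and $T_3$; the paper's step ``any CRS gives rise to a path from $T_r$ to $T_1$ in $T'$ with no greater cost'' fails precisely for CRSs whose routing descends between upward jumps. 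So the theorem is true and your proof establishes it, but the paper's construction as written does not (its arc costs would need to be repaired, e.g., by pricing $v_j\to v_i$ with the cheapest edge between $T_i$ and $T_j\cup T_{j+1}\cup\cdots\cup T_k$). The remaining details of your argument are sound: truncation at the first visit to $\tc(T)$, the observation that only one boundary edge of $\tc(T)$ is reversed, and reaching the path's own vertices along the path itself, which also disposes of the $|\tc(T)|=1$ degeneracy that the paper handles in a footnote.
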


%\restateThmConstructiveWeightedTC*

\begin{proof}
Consider a partition of $T$ into strongly connected components. 
These components form a linear order with all vertices in an earlier component dominating all vertices in a later component.
Call the components $T_1,\dots,T_k$ according to the linear order, and assume that $x$ belongs to $T_r$. Since $x \notin \tc(T)=V(T_1)$, we have $r \ge 2$. 
Construct a tournament $T'$ with vertices $v_1,\dots,v_r$.
For $1\leq i < j\leq r-1$, add a directed edge $v_j\rightarrow v_i$ with cost equal to the minimum cost of an edge between an alternative in $T_i$ and an alternative in $T_j$.
For $1\leq i\leq r-1$, add a directed edge $v_r\rightarrow v_i$ with cost equal to the minimum cost of an edge between an alternative in $T_r\cup T_{r+1}\cup\dots\cup T_k$ and an alternative in $T_i$.

We claim that the shortest path distance from $T_r$ to $T_1$ in~$T'$ equals the minimum cost of a CRS for $x$ in $T$. 
Take any shortest path from $T_r$ to $T_1$. For each edge on this path, we reverse a corresponding edge in $T$ with the same cost.
This allows $x$ to reach all components on this path, including $T_1$.
Note that $x$ can already reach the components $T_{r+1},\dots,T_k$ even before the reversals.
Moreover, the remaining components are directly reachable from $T_1$, and therefore $x$ can also reach them.\footnote{An exception to this is if we reverse an edge $v_p\rightarrow v_1$, and both $T_p$ and $T_1$ are singletons. However, in this case $x$ can reach $T_p$ via the shortest path.}
Hence we can bring $x$ into \tc using no more cost than that of the shortest path.
On the other hand, any CRS for $x$ must have the effect that $x$ can reach an alternative in $T_1$.
This gives rise to a path from $T_r$ to $T_1$ in $T'$ with no greater cost.

Computing strongly connected components of $T$ can be done in time linear in the input size using Tarjan's algorithm or Kosaraju's algorithm, and finding the shortest path can be done in polynomial time using Dijkstra's algorithm. Therefore our algorithm runs in polynomial time.
\end{proof}

\subsection{Banks Set}

For Banks non-winners, we present an analogous result as in the winner case: even if we know that $x$ has a negative MoV in tournament $T$, determining $\mov_\ba(x,T)$ is intractable.

\begin{restatable}{theorem}{restateThmConstructiveBanksNP}
\label{thm:constructive-banks-NP}
Computing the MoV of a \ba non-winner in the unweighted setting is NP-hard.
\end{restatable}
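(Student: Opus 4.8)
The plan is to reduce from the NP-complete problem of deciding Banks membership \citep{Woeginger03}, mirroring the construction in the proof of \Cref{thm:destructive-banks-NP} but in the opposite direction. Given an instance $(T,x)$ of that problem, I would build a tournament $T'$ by adding a single \emph{guard} alternative $d$ whose dominion is exactly $\{x\}\cup D(x)$ and whose dominators are exactly $\overline{D}(x)$; that is, $d$ beats $x$ together with everything $x$ beats, and is beaten precisely by the alternatives that beat $x$. Since $d\succ x$ and $d$ dominates all of $D(x)$, the guard $d$ extends every transitive subtournament topped by $x$, so $x\notin\ba(T')$ and $\mov_\ba(x,T')<0$. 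The central claim is that $\mov_\ba(x,T')=-1$ if and only if $x\in\ba(T)$, so that computing the MoV decides Banks membership.

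For the forward direction I would show that, when $x\in\ba(T)$, reversing the single edge $(d,x)$ is a \crs. Let $S$ be a transitive subtournament of $T$ topped by $x$ that is inclusion-maximal in $T$. After the reversal we have $x\succ d$, and since $d$ dominates every vertex of $S\setminus\{x\}\subseteq D(x)$, the set $S\cup\{d\}$ is transitive with $x$ on top. It is inclusion-maximal: any extender $w$ would need $w\succ x$, forcing $w\in\overline{D}(x)\subseteq V(T)$, but no vertex of $V(T)$ dominates all of $S$ by maximality of $S$. Hence $x\in\ba(T'^{\{(d,x)\}})$ and $\mov_\ba(x,T')=-1$.

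For the reverse direction I would argue that, when $x\notin\ba(T)$, no single reversal makes $x$ a Banks winner. Reversals of edges incident neither to $x$ nor to $d$ leave $d$ dominating $\{x\}\cup D(x)$, so $d$ still extends every $x$-topped subtournament; the same holds for the edge $(v,d)$ with $v\in\overline{D}(x)$ and for $(x,a)$ with $a\in D(x)$. The informative reversal $(d,x)$ is handled directly: writing any $x$-topped transitive subtournament of $T'^{\{(d,x)\}}$ as $\{x\}\cup S''$ or $\{x,d\}\cup S''$ with $S''\subseteq D(x)$, the assumption $x\notin\ba(T)$ yields a vertex $w\in V(T)$ dominating $\{x\}\cup S''$, and one checks that $w$ also dominates $d$ (since $w\in\overline{D}(x)=\overline{D}(d)$), so $w$ extends the set. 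The main obstacle is the two remaining families of reversals, namely $(v,x)$ for $v\in\overline{D}(x)$ and $(d,a)$ for $a\in D(x)$: each can locally destroy $d$'s ability to extend a small set such as $\{x,v\}$ or $\{x,a\}$, so one must guarantee that an alternative extender always survives. I would address this by reinforcing the gadget---padding $\overline{D}(x)$ and $D(x)$ with auxiliary alternatives chosen so that after any such reversal some other vertex still dominates the set, while ensuring that this padding neither disturbs the inclusion-maximal witness used in the forward direction nor changes whether $x\in\ba(T)$. Verifying that this balance can be struck, and that it simultaneously defeats every single reversal, is the delicate part and the crux of the argument.
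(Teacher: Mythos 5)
Your reduction skeleton (reduce from Banks membership, claim $\mov_\ba(x,T')=-1$ iff $x\in\ba(T)$) and your forward direction match the paper, but your gadget has a genuine gap, and it is exactly the one you flag: reversals of edges $(v,x)$ with $v\in\overline{D}(x)$. With a single guard $d$, this case is fatal, not merely delicate. Concretely, suppose $\overline{D}(x)$ contains a vertex $v$ that dominates all of $\overline{D}(x)\setminus\{v\}$. After reversing $(v,x)$, the set $\{x,v\}$ is a transitive subtournament topped by $x$, and any extender would have to beat both $x$ and $v$; the only candidates are the vertices of $(\overline{D}(x)\setminus\{v\})\cup\{d\}$, and every one of them loses to $v$ (recall $v\succ d$ by your construction). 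So $\{x,v\}$ is unextendable and $x$ enters the Banks set after one reversal \emph{regardless} of whether $x\in\ba(T)$, breaking the claimed equivalence. (Your other worry, the reversals $(d,a)$ for $a\in D(x)$, is actually harmless: after such a reversal every $x$-topped transitive set still lies inside $V(T)$, so the extender guaranteed by $x\notin\ba(T)$ survives.) The unspecified ``padding'' is therefore not a verification detail but the entire content of the proof: any vertex you add to $\overline{D}(x)$ to kill these sets creates a new incoming edge to $x$ whose reversal must in turn be protected against, so the guards must protect one another.

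The paper resolves this with a two-guard gadget in which the protection is cyclic. It adds $y$ (essentially your $d$) and additionally $z$, where $z$ dominates \emph{all} of $V(T)$ and $x$, and $y\succ z$. Then the case analysis runs through the covering relation (using $\ba\subseteq\uc$): strengthening $x$ against any vertex of $T$ leaves $x$ covered by $z$ (since $D(z)\supseteq V(T)$); strengthening $x$ against $z$ leaves $x$ covered by $y$ (since $y\succ x$ and $D(y)\supseteq D(x)\cup\{z\}$); so the only reversal that can work is of the $y$--$x$ edge, and that case is dispatched by the assumption $x\notin\ba(T)$ together with the fact that every vertex of $\overline{D}(x)$ beats $y$. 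To complete your proof you would need to supply precisely this kind of mutually protecting structure; as written, the proposal does not establish NP-hardness.
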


%\restateThmConstructiveBanksNP*

\begin{figure}[!t]
\centering
\scalebox{1.2}{
\begin{tikzpicture}
\draw[dashed] (0.9,2) ellipse (3.4cm and 1cm);
\draw (2.7,2) circle [radius = 0.7];
\node[circle,fill,inner sep=2pt] at (0.9,2)(x){}; \node[below=2pt] at (x){$x$};
\node at (2.7,2) {$D(x)$};
\draw[->,thick] (1.1,2) -- (2,2);
\draw[->,thick] (-0.2,2) -- (0.7,2);
\node[circle,fill,inner sep=2pt] at (0,4)(y){}; \node[above=2pt] at (y){$y$};
\node[circle,fill,inner sep=2pt] at (1.8,4)(z){}; \node[above=2pt] at (z){$z$};
\node at (3.8,2.9) {$T$};
\draw[->,thick] (0.2,4) -- (1.6,4);
\draw[->,thick] (0.1,3.8) -- (0.9,2.2);
\draw[->,thick] (0.15,3.9) -- (2.05,2.25);
\draw[->,thick] (1.8,3.8) -- (1.8,2.98);
\draw[->,thick] (-0.6,2.65) -- (-0.05,3.8);
\draw (-0.9,2) circle [radius = 0.7];
\node at (-0.9,2) {$\overline{D}(x)$};
\end{tikzpicture}
}

\caption{An illustration of the construction in Theorem~\ref{thm:constructive-banks-NP}.}
\label{fig:constructive-banks}
\end{figure}
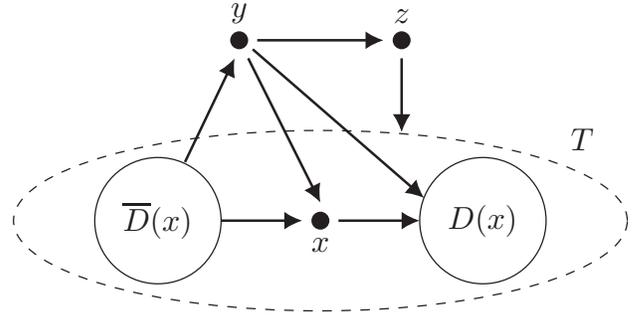

%todo:changeback \citeAppendix{Woeginger03}
\begin{proof}
We reduce from the NP-hard problem of determining whether an alternative is contained in the Banks set \cite{Woeginger03}.
Take any instance of that problem, which consists of a tournament $T$ and one of its alternative $x$.
Add two alternatives $y,z$ so that $y$ dominates $D(x)\cup \{x,z\}$, but is dominated by $\overline{D}(x)$, while $z$ dominates all alternatives in $T$.
Call the resulting tournament $T'$ (see Figure~\ref{fig:constructive-banks}).
Observe that $x\not\in\ba(T')$: any transitive subtournament with $x$ as the maximal element cannot contain $y$, and can therefore be extended by $z$.
Therefore, $\mov_\ba(x,T')<0$.
We claim that 
$\mov_\ba(x,T')=-1$
if and only if $x\in\ba(T)$.

First, assume that $x\in\ba(T)$. 
This means $x$ is the maximal element of a transitive subtournament $T''$ of $T$ that cannot be extended by any alternative in $T$.
Reverse the edge $xy$, and insert $y$ into $T''$ at the position after $x$.
The resulting subtournament cannot be extended by any alternative in $T$, nor can it be extended by $z$ because $y\succ z$. 
Hence $x\in\ba(T')$ after the reversal.

Assume now that $x\not\in\ba(T)$, and suppose for contradiction that $\mov_\ba(x,T')=-1$.
% it is possible to make $x$ a Banks winner with one edge reversal.
Notice that $x$ is covered by both $y$ and $z$, so in order to get $x$ into $\ba(T')$, we need to either strengthen $x$, or weaken both $y$ and $z$.
The latter option cannot be accomplished with one reversal, so the reversal needs to strengthen $x$.
If it strengthens $x$ against another alternative in $T$, $x$ is still covered by $z$. 
If it strengthens $x$ against $z$, $x$ is still covered by $y$. 
Hence the reversal must strengthen $x$ against $y$. 
Consider any transitive subtournament $T''$ %of $T'$
after the reversal with $x$ as the maximal element. 
The tournament $T''$ cannot contain $z$, but may contain $y$. 
However, since $y$ has the same dominion as $x$ in $T$, an alternative that extends $T''_{-y}$, which must exist because $x\not\in\ba(T)$, necessarily extends $T''$. 
This yields the desired contradiction.
\end{proof}

\section{Bounds on the Margin of Victory}

In this section, we consider the unweighted setting and establish bounds on the MoV values for winners and non-winners.
There are at least two insights that one could draw from these bounds. First, tournament solutions with a low absolute value of MoV bound yield manipulability guarantees; indeed, if the absolute value of the MoV bound is low, then a manipulator can always obtain the desired outcome by reversing a small number of edges regardless of the tournament instance. Second, knowing these bounds is useful for understanding the actual MoV for specific tournaments. For example, one can calculate the ``relative/normalized MoV'' by dividing the actual MoV value by the bound. The resulting ratio provides a measure of how far away an alternative is from winning or losing; in contrast to the standard MoV measure, the relative MoV also enables us to make comparisons between tournaments of different sizes.

\subsection{Upper Bounds for Winners}

We show that for all considered tournament solutions, one may need to reverse up to $\lfloor n/2\rfloor$ edges to take a winner out of the winner set, but no more.

\begin{theorem}
Let $S\in\{\cp, \tc, \uc, \ba, k\text{-kings}\}$, where $k\geq 3$. % is arbitrary.
For any tournament $T$ and any $x\in S(T)$, we have $\mov_S(x,T)\leq\lfloor n/2\rfloor$.
Moreover, this bound is tight. \label{thm:destructive-bounds}
\end{theorem}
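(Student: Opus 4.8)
The plan is to prove the two directions separately: a single uniform argument for the upper bound $\mov_S(x,T)\le\lfloor n/2\rfloor$, and then tightness. For the upper bound I would exploit that all five solutions are Condorcet-consistent, so it suffices to turn some alternative $y\ne x$ into a Condorcet winner: this ejects $x$ from every Condorcet-consistent choice set at once. Reversing the in-edges of $y$ costs $|\overline{D}(y)|$ reversals, and I would pick $y$ to minimize this. Since $\sum_{v\in V(T)}|\overline{D}(v)|=\binom{n}{2}$, averaging over the $n-1$ alternatives other than $x$ yields some $y\ne x$ with $|\overline{D}(y)|\le\frac{n}{2}-\frac{|\overline{D}(x)|}{n-1}\le\frac{n}{2}$; as this count is an integer it is at most $\lfloor n/2\rfloor$. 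The set $R=\{(z,y):z\in\overline{D}(y)\}$ is then a \drs for $x$ of size at most $\lfloor n/2\rfloor$ for each $S$ simultaneously.

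For tightness I would use one construction for four of the solutions: let $x$ be a Condorcet winner and let $T_{-x}$ be a regular tournament when $n$ is even and a near-regular one when $n$ is odd. Then $x$ lies in all five choice sets. For \uc I would apply \Cref{lem:xcut-kkings-destr} with $k=2$: a minimum \drs equals a minimum $2$-bounded $x$-cut, and for any target $z$ the paths $x\to z$ and $x\to v\to z$ (one for each in-neighbour $v$ of $z$ in $T_{-x}$) are edge-disjoint and of length at most two, so any $2$-bounded $x$-$z$-cut has size at least $1+\mathrm{indeg}_{T_{-x}}(z)\ge\lfloor n/2\rfloor$ by (near-)regularity. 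Hence $\mov_\uc(x,T)\ge\lfloor n/2\rfloor$, and since $\uc(T')\subseteq k\text{-kings}(T')\subseteq\tc(T')$ for every $T'$---which makes every \drs for a larger solution also a \drs for a smaller one, giving $\mov_\uc\le\mov_{k\text{-kings}}\le\mov_\tc$---the same lower bound, and thus equality with the upper bound, follows for $k$-kings and \tc. For \cp I would instead track the dominion gap $|D(x)|-|D(y)|$ between $x$ and a challenger $y$: only the single edge joining $x$ and $y$ can shrink this gap by $2$ per reversal, every other reversal shrinks it by at most $1$, so forcing some $y$ past $x$ needs at least $n-1-\max_{y\ne x}\mathrm{outdeg}_{T_{-x}}(y)=\lfloor n/2\rfloor$ reversals in the (near-)regular construction.

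The main obstacle is \ba, for which the monotonicity above is useless (it runs the wrong way, since $\ba\subseteq\uc$) and small dominating sets do not obviously help the manipulator. Here I would argue that any reversal set $R$ with $|R|<\lfloor n/2\rfloor$ still leaves $x$ in the Banks set: in $T^R$ the dominators of $x$ form a set $W$ with $|W|\le|R|$, and a transitive subtournament topped by $x$ can be extended only by some $w\in W$ dominating all of it, so it suffices to exhibit a maximal transitive subtournament $A\subseteq D_{T^R}(x)$ that no single $w\in W$ dominates entirely. The delicate point is that maximal transitive subtournaments can be short, so one cannot merely take an arbitrary long chain; I expect to need a tailored construction (or a careful selection within the regular one) that, using the $|W|<n/2$ bound together with regularity, guarantees such an $A$ escaping domination by every vertex of $W$. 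This Banks lower bound is where the real work lies; the remaining pieces are routine given the averaging bound and \Cref{lem:xcut-kkings-destr}.
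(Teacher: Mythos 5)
Your upper bound and your tightness arguments for \cp, \uc, $k$-kings, and \tc are correct. The averaging step that produces a $y\neq x$ with $|\overline{D}(y)|\le\lfloor n/2\rfloor$ is essentially the paper's choice of a Copeland winner of $T_{-x}$, and your \cp gap-tracking argument is the paper's own. Your route for \uc, $k$-kings, and \tc is genuinely different, though: you obtain the \uc lower bound directly from edge-disjoint paths of length at most two (via the easy direction of \Cref{lem:xcut-kkings-destr}) and lift it upward through the monotonicity $\mov_\uc\le\mov_{k\text{-kings}}\le\mov_\tc$, whereas the paper proves a lower bound for \ba and lifts that through $\ba\subseteq\uc\subseteq k\text{-kings}\subseteq\tc$.

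That difference is exactly where your proof has a genuine gap: the theorem also asserts tightness for \ba, and, as you note yourself, $\ba\subseteq\uc$ makes the monotonicity run the wrong way, so nothing you have established bounds $\mov_\ba$ from below. Your sketch---bound the set $W$ of dominators of $x$ in $T^R$ by $|W|\le|R|<\lfloor n/2\rfloor$ and exhibit a transitive $A\subseteq D_{T^R}(x)$ escaping domination by every $w\in W$---is left as an unproven expectation, and it is not routine: reversals inside $T_{-x}$ do not enlarge $W$, but they can strengthen members of $W$ against $D_{T^R}(x)$, so one needs an argument that charges reversals of both kinds. The paper supplies precisely this. In its construction ($x$ beats $y_1,\dots,y_{2\ell-1}$, and each $y_i$ beats the next $\ell-1$ indices cyclically, so $T_{-x}$ is your regular tournament), it considers the $2\ell-1$ pairs $\{y_i,y_{i+\ell-1}\}$. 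Initially each pair is jointly dominated only by $x$, so any surviving ``good'' pair together with $x$ is an inextendable transitive subtournament witnessing $x\in\ba$. The key counting step is that every single reversal spoils at most two pairs: a reversal of an edge between $x$ and $y_i$ spoils at most the two pairs containing $y_i$, and a reversal inside $T_{-x}$ that makes $y_j$ beat $y_i$ also spoils at most the two pairs containing $y_i$. Hence $\ell-1$ reversals spoil at most $2\ell-2$ of the $2\ell-1$ pairs, so a good pair survives and $\mov_\ba(x,T)\ge\ell=\lfloor n/2\rfloor$ (the odd case follows by deleting one vertex). Without this, or some equivalent argument, your proof of the theorem is incomplete.
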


\begin{proof}

Since all of the tournament solutions considered are contained in \tc, an upper bound for \tc carries over to the other solutions as well. By analogous reasoning, it suffices to show the tightness of the bound for \ba and \cp. 

We first prove the upper bound.
Let $y$ be an arbitrary Copeland winner in $T_{-x}$.
Since $T_{-x}$ consists of $n-1$ alternatives, $y$ dominates at least $\lceil(n-2)/2\rceil = \lceil n/2\rceil - 1$ other alternatives.
Hence, we can make $y$ a Condorcet winner in $T$ by reversing at most $(n-1)-(\lceil n/2\rceil - 1) = \lfloor n/2\rfloor$ edges.
Since \tc is Condorcet-consistent, $\lfloor n/2\rfloor$ edge reversals suffice to take $x$ out of \tc.

Next, we show the lower bound for \ba. Assume first that $n$ is even, say $n=2\ell$.
Besides $x$, suppose that $T$ contains alternatives $y_1,\dots,y_{2\ell-1}$, which are placed around a circle in clockwise order.
Each alternative dominates the $\ell-1$ following alternatives in clockwise order (e.g., $y_1$ dominates $y_2,\dots,y_\ell$), and all $2\ell-1$ alternatives are dominated by $x$.
We claim that taking $x$ out of the Banks set requires at least $\lfloor n/2\rfloor = \ell$ edge reversals.
Consider the $2\ell-1$ sets 
\begin{align*}
&\{y_1,y_\ell\},\{y_2,y_{\ell+1}\},\dots,\{y_\ell,y_{2\ell-1}\}, \\ 
&\{y_{\ell+1},y_1\}, \dots, \{y_{2\ell-1}, y_{\ell-1}\}.
\end{align*}
Note that each $y_i$ is contained in exactly two of these sets.
For each set, we say that it is `good' if the only alternative that dominates both of the alternatives in the set is $x$, and `bad' otherwise. 
Note that the existence of a good set implies that $x$ is a Banks winner, as the transitive subtournament consisting of the good set and $x$ cannot be extended.
Initially, all $2\ell-1$ sets are good.
A reversal involving $x$ and $y_i$ can turn at most two good sets into bad sets (i.e., the two sets containing $y_i$).
Similarly, a reversal involving $y_i$ and $y_j$, where $y_j$ dominates $y_i$ after the reversal, can make at most two good sets bad (i.e., the two sets containing $y_i$).
So after at most $\ell-1$ reversals, at least one set is still good.
This implies that there is no DRS of size at most $\ell-1$.
% Hence at least $\ell$ reversals are needed in order to take $x$ out of $\ba(T)$.
Hence, $\mov_\ba(x,T)\ge \ell = \lfloor n/2\rfloor$.

The case where $n$ is odd can be handled similarly.
Let $n=2\ell-1$.
Construct a tournament with alternatives $x,y_1,\dots,y_{2\ell-1}$ as before, and remove $y_{2\ell-1}$.
We claim that taking $x$ out of the Banks set in this tournament requires at least $\lfloor n/2\rfloor = \ell-1$ edge reversals.
Consider $2\ell-3$ sets, starting with the $2\ell-1$ sets above and removing the two sets that contain $y_{2\ell-1}$.
Each $y_i$ is contained in at most two of these sets.
The previous argument can be applied to show that 
$\mov_\ba(x,T)\ge \ell-1 = \lfloor n/2\rfloor$.
% at least $\ell-1$ reversals are necessary to make $x$ a non-winner.

To conclude the proof, we show that the same tournaments as constructed above also imply the tightness of the bound $\lfloor n / 2 \rfloor$ for \cp. 
In order to make $x$ a non-winner, we must reverse edges so that another alternative $y$ has a larger dominion than $x$.
If $n$ is even, then initially $x$ dominates $n-1$ alternatives while $y$ dominates $n/2-1$ alternatives, so $|D(x)|-|D(y)| = (n-1)-(n/2-1) = n/2$.
Each edge reversal decreases this difference by at most $1$, except for the reversal of the edge $(x,y)$, which reduces the difference by $2$.
Hence, in order to make the difference negative, we need at least $n/2$ reversals.
A similar argument applies for the case where $n$ is odd, since we have $|D(y)|\leq (n-1)/2$, and therefore $|D(x)|-|D(y)| \geq (n-1)/2 = \lfloor n/2\rfloor$.
\end{proof}

\subsection{Lower Bounds for Non-Winners}\label{subsec:lowerBounds}

Next, we turn our attention to non-winners.
For \tc and $k$-kings with $k\geq 3$, it is clear that reversing one edge suffices to make any alternative a winner. Indeed, we can simply reverse the edge between $x$ and an arbitrary alternative in the uncovered set of $T_{-x}$. 
This ensures that $x$ can reach every other alternative via a directed path of length at most three.

\begin{theorem}
\label{thm:constructive-bound-tc}
Let $S\in\{\tc, k\text{-kings}\}$, where $k\geq 3$ is arbitrary. For any tournament $T$ and any $x\in V(T)\setminus S(T)$, we have $\mov_S(x,T) = -1$.
\end{theorem}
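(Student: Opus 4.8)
The plan is to show that $\mov_S(x,T) = -1$ for any non-winner $x$ with respect to $S \in \{\tc, k\text{-kings}\}$, $k \geq 3$. Since $x \notin S(T)$, we know $\mov_S(x,T) < 0$ by definition, and because the unweighted MoV is an integer, it suffices to exhibit a single edge whose reversal makes $x$ a winner; this establishes $\mov_S(x,T) = -1$. As the preamble to the theorem already notes, it is enough to prove the claim for $k$-kings with $k = 3$, since the set of $3$-kings is contained in the set of $k$-kings for every $k \geq 3$ and in \tc; hence if one edge reversal makes $x$ a $3$-king, the same reversal makes $x$ a member of all the larger solution concepts as well.

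The core step is the following. Consider the restricted tournament $T_{-x}$ and pick any alternative $z$ in its uncovered set, i.e.\ $z \in \uc(T_{-x})$. I would reverse the single edge between $x$ and $z$ so that $x \succ z$ holds in the modified tournament $T^R$ with $R = \{(z,x)\}$ (if $x$ already dominates $z$ there is nothing to reverse, but since $x$ is not even a $\tc$-winner it is dominated by some alternative in $\uc(T_{-x})$, which we take as $z$). The key claim is then that after this reversal, $x$ can reach every other alternative via a directed path of length at most three. Indeed, fix any $y \neq x$. By definition of the uncovered set in $T_{-x}$, the alternative $z$ can reach $y$ in $T_{-x}$ via a path of length at most two. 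Prepending the edge $x \succ z$ yields a directed path from $x$ to $y$ of length at most three in $T^R$. Therefore $x$ is a $3$-king in $T^R$, so $x \in S(T^R)$, which means $R$ is a CRS of size one.

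The main thing to verify carefully is that a suitable $z$ exists and that the reversal indeed creates the edge $x \succ z$ rather than being vacuous or counterproductive. Since $x \notin \tc(T)$, the alternative $x$ does not dominate every vertex of the top cycle, and in particular $x$ is dominated by at least one alternative; I would argue that $\uc(T_{-x})$ contains an alternative $z$ with $z \succ x$ in the original tournament, so reversing $(z,x)$ genuinely turns it into $x \succ z$. One clean way: the uncovered set is always a dominating set, so some member of $\uc(T_{-x})$ dominates $x$ (or $x$ itself would dominate all of $\uc(T_{-x})$ and hence reach everything in $T$ within two steps, contradicting $x \notin \uc(T) \subseteq S(T)$ for the relevant concepts). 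Choosing this $z$ makes the reversal well-defined, and the path argument above then closes the proof. The only genuine subtlety is the bookkeeping ensuring the chosen $z$ is dominated by $x$ only \emph{after} reversal, but this is routine once the dominating-set property of the uncovered set is invoked.
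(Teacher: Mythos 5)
Your proposal is correct and follows essentially the same route as the paper's proof: reverse the edge between $x$ and an alternative $z\in\uc(T_{-x})$, so that $x$ reaches every other alternative in at most three steps via $z$, and then use $3\text{-kings}\subseteq k\text{-kings}\subseteq\tc$. The only difference is that you explicitly verify the edge-direction subtlety (that $z$ must dominate $x$ before the reversal, since otherwise $x$ would already reach everything in two steps and lie in $\uc(T)\subseteq S(T)$), a point the paper leaves implicit.
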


For \cp, as many as $n-2$ edge reversals may be required.

\begin{theorem}
\label{thm:constructive-bound-co}
For any tournament $T$ and any $x\in V(T)\setminus \cp(T)$, we have $\mov_\cp(x,T) \geq -(n-2)$. Moreover, this bound is tight.
\end{theorem}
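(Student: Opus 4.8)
The plan is to prove the statement $\mov_\cp(x,T) \geq -(n-2)$ in two parts: first the lower bound, then tightness via an explicit construction. For the lower bound, fix a non-winner $x \in V(T) \setminus \cp(T)$ and let $y$ be a Copeland winner, so $|D(y)| > |D(x)|$. The goal is to exhibit a CRS of cost at most $n-2$ that makes $x$ a Copeland winner; since we are in the unweighted setting, cost equals cardinality. The natural approach is to make $x$ a \emph{Condorcet winner}, which automatically places it in $\cp$. Since $x$ currently dominates $|D(x)| \geq 0$ alternatives and is dominated by $\overline{D}(x)$, reversing all edges $(z,x)$ for $z \in \overline{D}(x)$ turns $x$ into a Condorcet winner; this uses $|\overline{D}(x)| = n-1-|D(x)|$ reversals. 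The key observation is that a non-winner cannot be a Condorcet loser with the maximum dominion being too small: I would argue that $|D(x)| \geq 1$ must hold for the bound to come out to $n-2$ rather than $n-1$. Indeed, if $x$ were a Condorcet loser ($|D(x)| = 0$), then one would seemingly need $n-1$ reversals, so the crux is to show one can always do better by \emph{not} insisting on a Condorcet winner.

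\textbf{The main obstacle} is precisely handling the case $|D(x)| = 0$, i.e., when $x$ is dominated by everyone. Here I expect the sharper idea: rather than making $x$ a Condorcet winner, reverse edges to push $|D(x)|$ up to match a Copeland winner's dominion. A cleaner uniform argument is to show that $n-2$ reversals always suffice to make $x$ tie-or-beat every other alternative in dominion count. I would pick the \emph{second-best} target: reverse all but one of the incoming edges to $x$, giving $x$ a dominion of size $n-2$; then check that no other alternative can still strictly exceed $x$. The subtlety is that reversing an edge $(z,x)$ to $(x,z)$ \emph{decreases} $|D(z)|$ by one at the same time it increases $|D(x)|$, so each such reversal improves $x$'s standing against $z$ by two in the comparison $|D(x)| - |D(z)|$ while improving it by one against every other alternative. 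I would make this counting precise to show $x$ ends up with dominion $\geq$ every competitor's after at most $n-2$ reversals, even starting from $|D(x)|=0$.

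For \textbf{tightness}, I would construct a tournament on $n$ alternatives in which $x$ is a Condorcet loser and forcing $x$ into $\cp$ genuinely requires $n-2$ reversals. A transitive tournament is the natural candidate: order the alternatives $x = v_0, v_1, \dots, v_{n-1}$ so that $v_i \succ v_j$ whenever $i > j$, making $x = v_0$ the Condorcet loser with $|D(x)| = 0$ and the Copeland winner $v_{n-1}$ having dominion $n-1$. In a transitive tournament the dominion sizes are $0, 1, 2, \dots, n-1$, so $x$ must overtake a densely packed ladder of competitors. I would argue that to make $|D(x)|$ at least as large as the largest surviving dominion, the reversals are forced to be edges incident to $x$, and that $n-2$ of them are necessary because each reversal closes the gap to the top competitor by at most the amounts computed above, while $n-3$ reversals leave some alternative strictly ahead.

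\textbf{The hardest step} will be the tightness lower-bound argument: showing that \emph{no} clever set of $n-3$ reversals (possibly involving edges not incident to $x$) can succeed. The danger is that reversing edges among the $v_i$ could simultaneously weaken several strong competitors cheaply. I would control this with a potential/counting argument: track the quantity $\Phi = \max_{z \neq x}|D(z)| - |D(x)|$, which starts at $(n-1) - 0 = n-1$ in the transitive tournament, and show each single edge reversal decreases $\Phi$ by at most $1$ in general, with the decrease-by-$2$ bonus available only for the unique edge $(v_{n-1}, x)$ directly between $x$ and the current maximizer. Since we need $\Phi$ to drop from $n-1$ to $\leq 0$, and at most one reversal gives a double decrease, at least $n-2$ reversals are required, matching the upper bound and completing the proof.
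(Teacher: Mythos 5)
Your upper-bound argument follows the same route as the paper's (reverse incoming edges of $x$ until its dominion reaches $n-2$), but as stated it leaves a hole that your proposed counting must close: if the single unreversed incoming edge of $x$ comes from a Condorcet winner $z^*$, then after the reversals $|D(z^*)| = n-1 > n-2 = |D(x)|$, and $x$ is still not a Copeland winner. The paper makes the required choice explicit: when a Condorcet winner exists, spend one of the (at most $n-2$) reversals on its edge to $x$; then every alternative ends with dominion at most $n-2$ while $x$ has dominion at least $n-2$. Your own counting for the unreversed alternative, $|D(x)|-|D(z^*)| = n-2-|D(z^*)|$, surfaces exactly this constraint, so this part is a fixable omission rather than a wrong idea.

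The tightness argument, however, has a genuine flaw: the accounting lemma for your potential $\Phi = \max_{z \neq x}|D(z)| - |D(x)|$ is false. Reversals can \emph{increase} $\Phi$, and the decrease-by-$2$ bonus is not confined to the single edge $(v_{n-1},x)$: once reversals among the $v_i$ make some other alternative the unique maximizer while it still dominates $x$, reversing its edge to $x$ yields a second double decrease. Concretely, for $n=5$ process the reversal set $\{(v_4,x),(v_4,v_3),(v_3,x)\}$ in this order: the first step decreases $\Phi$ by $2$ (from $4$ to $2$), the second increases it by $1$ (to $3$, since $v_3$ now has dominion $4$), and the third decreases it by $2$ again (to $1$). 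So ``at most one reversal gives a double decrease'' fails, and your count does not rule out that some cleverer set of fewer than $n-2$ reversals succeeds. The paper avoids this entirely by tracking the difference for a \emph{fixed} pair: with $y$ the Condorcet winner, $f = |D(y)|-|D(x)|$ starts at $n-1$, must be nonpositive for $x$ to become a Copeland winner, is never increased by any reversal (since $y$ has no incoming edges and $x$ no outgoing ones), and drops by $2$ only when the unique edge between $y$ and $x$ is reversed---which, a reversal set being a set, happens at most once. Hence $|R|+1 \geq n-1$, i.e., $|R| \geq n-2$. Your transitive tournament is a valid instance of the paper's construction (any tournament with a Condorcet winner in which $x$ is a Condorcet loser), so replacing $\Phi$ by this fixed-pair difference repairs your proof.
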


\begin{proof}
With a budget of $n-2$ reversals, we can make $x$ dominate at least $n-2$ alternatives.
Moreover, if the tournament initially contains a Condorcet winner, one of these reversals can be used to make $x$ dominate it, meaning that every alternative dominates at most $n-2$ alternatives after the reversals.
Hence $x$ becomes a Copeland winner.

To show tightness, consider a tournament where $x$ is a Condorcet loser and there is a Condorcet winner $y$. We have $|D(y)|-|D(x)| = n-1$.
Each edge reversal reduces this difference by at most $1$, except for the reversal of the edge $(x,y)$, which reduces the difference by $2$.
In order for $x$ to be a Copeland winner, this difference must be nonpositive.
It follows that we need at least $n-2$ reversals, as claimed.
\end{proof}

Finally, we show that for \uc and \ba, reversing $O(\log n)$ edges can bring any alternative into the winner set. %, and this cannot be improved.

\begin{theorem}
\label{thm:constructive-bound-uc-ba}
Let $S\in\{\uc, \ba\}$. For any tournament $T$ and any $x\in V(T)\setminus S(T)$, we have $\mov_S(x,T) \geq -\lceil \log_2 n\rceil$.
Moreover, this bound is asymptotically tight.
\end{theorem}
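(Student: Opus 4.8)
The plan is to handle both solutions with essentially one construction for the bound and one family of instances for tightness.

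For the bound, I would start from the greedy dominating-set argument of \citet{MegiddoVi88}: repeatedly pick a maximum-out-degree vertex among the still-undominated alternatives. Applied to $T_{-x}$, this yields a dominating set $Y=\{v_1,\dots,v_t\}$ with $t\le\lceil\log_2(n-1)\rceil\le\lceil\log_2 n\rceil$. The key observation is that $Y$ is in fact a transitive subtournament: since each $v_j$ is chosen from the vertices not dominated by $v_1,\dots,v_{j-1}$, we have $v_j\succ v_i$ for all $i<j$, so $v_t\succ\dots\succ v_1$ is a transitive chain. I would then reverse exactly the edges $(v_i,x)$ with $v_i\succ x$, obtaining a reversal set $R$ of size at most $t$. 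In $T^R$, the alternative $x$ dominates all of $Y$, while the dominance relation among the $v_i$ (hence their transitivity, and the fact that $Y$ still dominates $V(T_{-x})$) is unaffected, since $R$ touches only edges incident to $x$.

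It remains to verify that $R$ is a \crs for both solutions. For \uc, the dominion of $x$ in $T^R$ contains the dominating set $Y$, so $x$ reaches every other alternative in at most two steps and thus $x\in\uc(T^R)$. For \ba, I would use the characterization that $x$ is a Banks winner iff there is a transitive subtournament with $x$ as its maximal element that is also a dominating set of the whole tournament (equivalently, one admitting no common dominator). The set $B=\{x\}\cup Y$, ordered $x\succ v_t\succ\dots\succ v_1$, is exactly such a set in $T^R$: it is transitive with top $x$, and every alternative outside $B$ is dominated by some $v_i$. Hence $x\in\ba(T^R)$, and $|R|\le\lceil\log_2 n\rceil$ gives $\mov_S(x,T)\ge-\lceil\log_2 n\rceil$ in both cases.

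For asymptotic tightness, I would take a tournament $T_0$ on $n-1$ vertices whose domination number is $\Omega(\log n)$ (such tournaments exist by a standard probabilistic argument) and adjoin $x$ as a Condorcet loser. Then $x\notin\uc(T)$, and a fortiori $x\notin\ba(T)$ since $\ba\subseteq\uc$. Adapting the exchange argument from the proof of \Cref{thm:constructive-UC-unweighted}, any minimal \crs for $x$ with respect to \uc may be assumed to use only edges incident to $x$; the alternatives that $x$ thereby comes to dominate must form a dominating set of $T_0$, so $\Omega(\log n)$ reversals are required. Since $x\in\ba(T^R)$ implies $x\in\uc(T^R)$, the identical lower bound transfers to \ba. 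The main obstacle I anticipate is the \ba direction: making precise the characterization of Banks winners via transitive dominating subtournaments, so that membership can be certified directly, and confirming that the greedy set is genuinely transitive so that prepending $x$ preserves transitivity; the \uc part and the tightness reduction are then comparatively routine.
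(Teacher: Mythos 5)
Your proof is correct, and its overall architecture matches the paper's (reduce the bound to \ba and the tightness to \uc via $\ba\subseteq\uc$; for tightness, use a Condorcet loser attached to a tournament with domination number $\Omega(\log n)$ together with the exchange argument that makes all reversed edges incident to $x$ --- this part is essentially identical to the paper's). Where you genuinely diverge is the bound itself. The paper interleaves chain-building and edge reversal: it maintains a transitive chain $T'$ (initially $\{x\}$) together with the set $B$ of common dominators of $T'$, repeatedly picks a Copeland winner $y$ of $T|_B$, reverses the single edge $(y,x)$, inserts $y$ directly below $x$, and shows $|B|$ at least halves per step; when $B=\emptyset$, the chain certifies $x\in\ba$. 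Since every chosen $y$ lies in $B$ and hence dominates $x$, the number of reversals equals the number of halving steps, at most $\lceil\log_2 n\rceil$. You instead run the Megiddo--Vishkin greedy procedure on $T_{-x}$ first, observe that the greedy dominating set is automatically a transitive chain (each new pick dominates all earlier picks, by the tournament property), and only afterwards reverse the edges $(v_i,x)$ --- and only for those $v_i$ that dominate $x$. Both arguments rest on the same two ingredients: a maximum-out-degree vertex dominates at least half of the remaining alternatives, and a transitive chain with $x$ on top admitting no common dominator (equivalently, in a tournament, forming a dominating set) witnesses $x\in\ba$. The characterization of Banks winners you flag as the main obstacle is indeed correct and is exactly what the paper uses implicitly in the sentence ``at which point $x\in\ba(T)$''; making it explicit, as you do, is a modest but real clarification. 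Your variant also reuses a theorem the paper already cites for \Cref{thm:constructive-UC-unweighted} and can use strictly fewer reversals in practice (members of $Y$ already dominated by $x$ cost nothing), while the paper's interleaved construction gets the chain property for free by always selecting from the common dominators, at the price of reversing one edge in every iteration.
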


\begin{proof}
Since $\ba\subseteq\uc$, it suffices to establish the bound for \ba and the tightness for \uc.
We first prove the bound for \ba, by considering a tournament $T$ and iteratively constructing a CRS for an alternative $x \notin \ba(T)$.
Let $T'$ be a transitive subtournament of $T$ that initially contains only the alternative $x$, and let $B$ be the set of alternatives that dominate all alternatives in $T'$.
Let $\ell = |B|$, and let $y$ be a Copeland winner of the tournament $T|_B$. 
% restricted to the alternatives in $B$.
Note that $y$ dominates at least $\lceil (\ell-1)/2\rceil$ other alternatives in $B$ as well as $x$.
We reverse the edge between $x$ and $y$, insert $y$ into the transitive tournament $T'$ at the position after $x$, and update the set $B$.
Since $y$ is added to $T'$, $y$ and all alternatives dominated by $y$ are no longer in $B$.
Also, no new alternative is added into $B$.
Hence the size of $B$ reduces to at most $\ell-1-\lceil (\ell-1)/2\rceil = \lfloor (\ell-1)/2\rfloor$.
Since $|B|\leq n-1$ at the beginning, the size of $B$ becomes $0$ after at most $\lceil \log_2 n\rceil$ reversals, at which point $x\in \ba(T)$.

To show the asymptotic tightness for \uc, assume that $x$ is a Condorcet loser and $T_{-x}$ is a tournament for which any dominating set has size $\Omega(\log n)$; such a tournament is known to exist \citep{Erdos63,GrahamSp71}.
Let $R\subseteq E(T)$ be a CRS for $x$ with respect to \uc.
Observe that if there is an edge $(y,z)\in R$ such that $x\not\in\{y,z\}$, then by replacing $(y,z)$ with $(y,x)$ (or simply removing $(y,z)$ if $(y,x)$ already belongs to $R$), the resulting set $R'$ is still a CRS for $x$.
Moreover, $|R'|\leq |R|$.
Therefore we may assume that all edges in $R$ are incident to $x$; let these edges be $(y_1,x),\dots,(y_{|R|},x)$.
Since $x \in \uc(T^R)$, the set $\{y_1,\dots,y_{|R|}\}$ necessarily forms a dominating set in $T_{-x}$.
It follows that $|R|\geq\Omega(\log n)$, as desired.
\end{proof}

\section{Discussion}

In this paper, we have proposed a new framework for refining tournament solutions based on the notion of margin of victory (MoV). We have determined the complexity of computing the MoV, as well as worst-case bounds on the MoV, for several common tournament solutions.
Besides the tournament solutions that we have considered, it would be interesting to study the MoV with respect to other tournament solutions such as the bipartisan set, the minimal covering set, the tournament equilibrium set, and the Markov set.

Viewing the MoV as a robustness measure, one could aim to obtain more comprehensive information about the space of all (not necessarily minimum) reversal sets. For example, one may ask \textit{how many} reversal sets of cost at most $c$ exist for a given alternative. Investigating the complexity of computing these numbers is an appealing future direction; similar counting questions have been considered in the context of knockout tournaments \citep{AzizGaMa18}.

In particular, one could use the number of minimum reversal sets as a tie-breaker for alternatives with equal MoV. Indeed, note that for some tournaments, especially small ones, the MoV in the unweighted setting may not distinguish between all winners (or non-winners).
An example is the tournament in Figure~\ref{fig:example}, where three of the four \uc winners have a MoV of $1$.
A natural way to differentiate between alternatives with the same MoV is to consider the number of minimal reversal sets for each of them---for the example above, $c$ has two minimal reversal sets ($\{(c,f)\},\{(f,d)\}$), $d$ has four ($\{(d,c)\},\{(d,b)\},\{(c,f)\},\{(b,e)\}$), and $e$ has three ($\{(c,f)\},\{(e,d)\},\{(e,c)\}$).

Understanding the counting problem is also relevant for settings in which there is uncertainty regarding the pairwise comparisons (\eg say that the direction of each edge is incorrect with some fixed probability $p<0.5$). In such a scenario, the number of reversal sets of a given size can be used to compute the winning probabilities of alternatives.

\section*{Acknowledgments}

This work was partially supported by the Deutsche Forschungsgemeinschaft under grant BR~4744/2-1 and by the European Research Council (ERC) under grant number 639945 (ACCORD). 
We would like to thank Felix Brandt and Piotr Faliszewski for helpful discussions and the anonymous reviewers for valuable feedback.

\bibliographystyle{aaai}
\bibliography{abb,mov,algo}

\clearpage
\appendix
\section{Appendix}

\subsection{Omitted Proofs from \Cref{sec:compWinners}}

\restateCopelandDestructive*

\begin{proof}
Let $x$ be the \cp winner for which we want to compute the MoV.
Consider a fixed minimal destructive reversal set $R$ with $T$ being the tournament before and $T^{R}$ the tournament after the edge reversal and let $y$ be an alternative with higher outdegree than $x$ in $T^R$. We claim that $R$ contains outgoing edges of $x$ and ingoing edges of $y$ only. Assume for contradiction that an edge which is neither outgoing of $x$ nor ingoing to $y$ is included in $R$. Then, deleting this edge from $R$ does not increase the outdegree of $x$ or decrease the outdegree of $y$ in $T^R$, a contradiction to the minimality of~$R$. 

The above observation directly implies a simple polynomial time procedure to compute a minimal destructive reversal set: Iterate over all $y \in V(T) \setminus \{x\}$ and compute the cost of a minimal reversal set that makes the outdegree of $y$ higher than $x$. Up to the choice of the edge $(x,y)$, which we handle by a case distinction, we can do so by greedily choosing outgoing edges of $x$ and ingoing edges of $y$ of lowest cost until $y$ has higher outdegree than $x$. Among all choices of $y$ we select one which induces minimum cost. 

To see the correctness of this algorithm, note that, after we fixed $y$ and decided that $(x,y) \not\in R$, reversing an edge outgoing of $x$ or ingoing to $y$ reduces the difference $|D(x)|-|D(y)|$ by $1$, and $y$ has higher outdegree than $x$ exactly when this difference becomes negative. The same argument holds for the case that $(x,y) \in R$. 
\end{proof}

We remark that the implication from right to left in Lemma \ref{lem:xcut-kkings-destr} holds only for minimum $x$-cuts and not for general $x$-cuts. In \Cref{fig:exm:xcuts} we give an example where an $x$-cut does not correspond to a \drs. In the illustrated unweighted tournament, $x$ is a $3$-king and the edge set $C=\{b,e,c\}$ is a $3$-length bounded $x$-$y$-cut and, hence, a $3$-length bounded $x$-cut. However, $C$ is not a \drs as its reversal creates a new $x$-$y$-path of length three (namely, $(a, \overline{e},d)$) and all other nodes are also still reachable in three steps. 

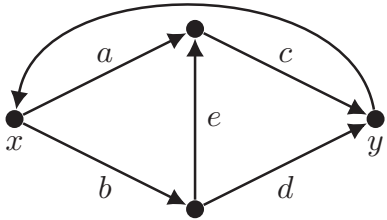
\begin{figure}[!h]
\centering
\scalebox{1.2}{
\begin{tikzpicture}

\node[circle,fill,inner sep=2pt](v1) at (0,0) {}; \node[below=2pt] at (v1){$x$}; 
\node[circle,fill,inner sep=2pt](v2) at (2,1) {}; 
\node[circle,fill,inner sep=2pt](v3) at (2,-1) {}; 
\node[circle,fill,inner sep=2pt](v4) at (4,0) {}; \node[below=2pt] at (v4){$y$}; 

\draw[->,thick] (v1) -- (v2) node [midway,above] {$a$};
\draw[->,thick] (v1) -- (v3)  node [midway,below] {$b$};
\draw[->,thick] (v2) -- (v4) node [midway,above] {$c$};
\draw[->,thick] (v3) -- (v4)  node [midway,below] {$d$};
\draw[->,thick] (v3) -- (v2)  node [midway,right] {$e$};
\draw[->,thick] (v4) to[bend right=80](v1)  node [midway,above] {};

\end{tikzpicture}}
\caption{Example showing that the implication of Lemma \ref{lem:xcut-kkings-destr} does not hold for general $k$-length bounded $x$-cuts. The edge set $\{b,e,c\}$ is a $3$-length bounded $x$-cut, but not a \drs for~$x$ with respect to $3$-kings.}\label{fig:exm:xcuts}
\end{figure}

The remainder of this section is devoted to proving \Cref{thm:destr:kkings:nphard}.
Recall that $\mathcal{P}_{x,y}(k)$ denotes the set of $x$-$y$-paths in tournament $T$ of length at most $k$. 

\restateKKingsReduction*

\begin{figure*}[t!]
\centering

\begin{tikzpicture}
\node[ultra thick,draw=red!80!black,circle,fill=white,inner sep=4pt, dash pattern = on 4pt off 1pt](v) at (0,4) {};\node[text=red!80!black,right=6pt] at (v){\Large$v$};

\node[circle,ultra thick,draw=blue,fill=white,inner sep=4pt](u) at (0,0) {};\node[blue,right=6pt] at (u){\Large$u$};

\draw[very thick] (u) -- (v);
\end{tikzpicture}\hspace{.1cm}
\vline \hspace{.1cm} \raisebox{0cm}{\scalebox{.6}{\begin{tikzpicture}

\fill [black!20, rounded corners=2ex] (-2,-1.3) rectangle (10,4);  \node[right=4pt] at (10,3){\huge$N_v$};

\node [circle, very thick, draw=black, fill=white, inner sep=1cm](d) at (0,2){}; \node[above right=1cm and 1cm] at (d){\huge$ \overline{v}_1$};
\node [circle, very thick, draw=black, fill=white, inner sep=1cm](d2) at (3.5,2){}; \node[above right=1cm and 1cm] at (d2){\huge$ \overline{v}_2$};
\node [circle, very thick, draw=black, fill=white, inner sep=1cm](d3) at (7,2){}; \node[above right=1cm and 1cm] at (d3){\huge$ \overline{v}_3$};

\node[circle,fill,inner sep=3pt](v1) at (0,-.3){}; \node[below=4pt] at (v1){\huge$ v_1$};
\node[circle,fill,inner sep=3pt](v2) at (3.5,-.3){}; \node[below=4pt] at (v2){\huge$ v_2$};
\node[circle,fill,inner sep=3pt](v3) at (7,-.3){}; \node[below=4pt] at (v3){\huge$ v_3$};
\node[circle,fill,inner sep=3pt](v4) at (9,-.3){}; \node[below=4pt] at (v4){\huge$ v_4$};

\node[circle,fill,inner sep=3pt](s) at (-3,-2.5){}; \node[below=4pt] at (s){\huge$x$};
\node[circle,fill,inner sep=3pt](t) at (11,-2.5){}; \node[below=4pt] at (t){\huge$y$};

\def \n {5}
\def \radius {1cm}
\def \margin {8} % margin in angles, depends on the radius 
\foreach \s in {1,...,\n}
{ 
      \node[circle,fill,inner sep=2pt,yshift=2cm](\s) at ({360/\n * (\s - 1)}:\radius) {}; 
}

\foreach \s in {6,...,10}
{ 
      \node[circle,fill,inner sep=2pt,xshift=3.5cm,yshift=2cm](\s) at ({360/\n * (\s - 1)}:\radius) {};
}

\foreach \s in {11,...,15}
{ 
      \node[circle,fill,inner sep=2pt,xshift=7cm,yshift=2cm](\s) at ({360/\n * (\s - 1)}:\radius) {};
}

\foreach \s/\t in {1/2,1/3,2/3,2/4,3/4,3/5,4/5,4/1,5/1,5/2,
			6/7,6/8,7/8,7/9,8/9,8/10,9/10,9/6,10/6,10/7,
			11/12,11/13,12/13,12/14,13/14,13/15,14/15,14/11,15/11,15/12}{
	\draw[<-,thick] (\s) -- (\t);}

\draw[->,very thick] (d) -- (v1);
\draw[->,very thick] (d) -- (d2);
\draw[->,very thick] (d2) -- (d3);
\draw[->,very thick] (d2) -- (v2);
\draw[->,very thick] (d3) -- (v3);
\draw[->,very thick] (d3) -- (v4);
\draw[->,very thick] (s) -- (v1);
\draw[->,very thick] (s) -- (d);
\draw[->,very thick] (v1) -- (v2);
\draw[->,very thick] (v2) -- (v3);
\draw[->,very thick] (v3) -- (v4);
\draw[->,very thick] (v4) -- (t);

\draw[->,very thick] (s) -- (v2);
\draw[->,very thick] (v3) -- (t);
 
\end{tikzpicture}}} \hspace{.1cm} \vline \hspace{.1cm} \scalebox{.6}{\begin{tikzpicture}
\tikzstyle{supernode}=[draw, ultra thick, circle,fill,inner sep=5pt,fill=white]
\tikzstyle{superarrow}=[ultra thick]

\fill [black!20, rounded corners=2ex] (1,1.4) rectangle (9,4); \node[right=4pt] at (9,3){\huge$N_v$};
\fill [black!20, rounded corners=2ex] (1,-1.4) rectangle (9,-4); \node[right=4pt] at (9,-2.4){\huge$N_u$};
%LEVEL 0 

\node[circle,fill,inner sep=3pt](s) at (-.5,0) {};\node[below=8pt] at (s){\huge$x$};

%LEVEL 1
\node[circle,fill,inner sep=3pt](v1) at (2,2) {}; \node[above right] at (v1){};%{\huge$v_1$};
\node[supernode](vb1) at (2,3.5) {}; \node[above=3pt] at (vb1){};%{\huge$\overline{v}_1$}; 
\node[circle,fill,inner sep=3pt](u1) at (2,-2) {}; \node[below right] at (u1){};%{\huge$ u_1$};
\node[supernode](ub1) at (2,-3.5) {}; \node[below=3pt] at (ub1){};%{\huge$ \overline{u}_1$};

%LEVEL 2
\node[circle,fill,inner sep=3pt](v2) at (4,2) {}; \node[above right] at (v2){};%{\huge$ v_2$};
\node[supernode](vb2) at (4,3.5) {}; \node[above=3pt] at (vb2){};%{\huge$ \overline{v}_2$}; 
\node[circle,fill,inner sep=3pt](u2) at (4,-2) {}; \node[below right] at (u2){};%{\huge$ u_2$};
\node[supernode](ub2) at (4,-3.5) {}; \node[below=3pt] at (ub2){};%{\huge$ \overline{u}_2$};

\node[circle,fill,inner sep=3pt](e1) at (4,.6) {}; \node[above right] at (e1){\huge$e_1$};
\node[circle,fill,inner sep=3pt](e2) at (4,-.6) {}; \node[below right] at (e2){\huge$e_2$};

%LEVEL 3
\node[circle,fill,inner sep=3pt](v3) at (6,2) {}; \node[above right] at (v3){};%{\huge$ v_3$};
\node[supernode](vb3) at (6,3.5) {}; \node[above=3pt] at (vb3){};%{\huge$ \overline{v}_3$}; 
\node[circle,fill,inner sep=3pt](u3) at (6,-2) {}; \node[below right] at (u3){};%{\huge$ u_3$};
\node[supernode](ub3) at (6,-3.5) {}; \node[below=3pt] at (ub3){};%{\huge$ \overline{u}_3$};

%LEVEL 4
\node[circle,fill,inner sep=3pt](v4) at (8,2) {}; \node[above right] at (v4){};%{\huge$ v_4$};
\node[circle,fill,inner sep=3pt](u4) at (8,-2) {}; \node[below right] at (u4){};%{\huge$ u_4$};

%LEVEL5
\node[circle,fill,inner sep=3pt](t) at (10.5,0) {};\node[below=4pt] at (t){\huge$y$};

%EDGES 
\draw[->,very thick] (s) -- (v1);
\draw[->,superarrow] (s) -- (vb1);

\draw[->,very thick] (s) -- (u1);
\draw[->,superarrow] (s) -- (ub1);
%\draw[->,very thick] (s) -- (u2);

\draw[->,superarrow] (vb1) -- (vb2);
\draw[->,superarrow] (vb2) -- (vb3);
\draw[->,superarrow] (vb3) -- (v4);
\draw[->,superarrow] (vb1) -- (v1);
\draw[->,superarrow] (vb2) -- (v2);
\draw[->,superarrow] (vb3) -- (v3);
\draw[->, very thick] (v1) -- (v2);
%\draw[->,very thick] (v2) -- (v3);
\draw[->,very thick] (v3) -- (v4);
%\draw[->,thick] (v3) -- (t);
\draw[->,very thick] (v4) -- (t);

\draw[->,superarrow] (ub1) -- (ub2);
\draw[->,superarrow] (ub2) -- (ub3);
\draw[->,superarrow] (ub3) -- (u4);
\draw[->,superarrow] (ub1) -- (u1);
\draw[->,superarrow] (ub2) -- (u2);
\draw[->,superarrow] (ub3) -- (u3);
\draw[->,very thick] (u1) -- (u2);
%\draw[->,thick] (u2) -- (u3);
\draw[->,very thick] (u3) -- (u4);
%\draw[->,very thick] (u3) -- (t);
\draw[->,very thick] (u4) -- (t);

\draw[->,very thick] (v2) -- (e1);
\draw[->,very thick] (u2) -- (e2);
\draw[->,very thick] (e2) to [bend left](v2);
\draw[->,very thick] (e1) to [bend right](u2);
\draw[->,very thick] (e1) -- (u3);
\draw[->,very thick] (e2) -- (v3);

\draw[->,line width=1.4mm,red!80!black,>={Latex[width=4mm,length=4mm]},dash pattern = on 4pt off 2pt] (s) -- node[below=4pt]{\huge$\boldsymbol{\ell_v}$}(v2); %lv
\draw[->,very thick,black] (v2) -- node[above=1pt]{\huge${m_v}$}(v3); %mv 
\draw[->,line width=1.4mm,red!80!black,>={Latex[width=4mm,length=4mm]},dash pattern = on 4pt off 2pt] (v3) -- node[below=4pt]{\huge$\boldsymbol{r_v}$}(t); %rv 

\draw[->,very thick,black] (s) -- node[above=2pt]{\huge${\ell_u}$}(u2); %lu 
\draw[->,line width=1.4mm,blue,>={Latex[width=4mm,length=4mm]}] (u2) -- node[below=4pt]{\huge$\boldsymbol{m_u}$}(u3); %mu 
\draw[->,very thick,black] (u3) -- node[above=4pt]{\huge${r_u}$}(t); %ru 

%LABELS

\end{tikzpicture}}

\repeatcaption{fig:kkingsReduction}{Illustration of the construction used in the proof of \Cref{thm:destr:kkings:nphard} for the case $k=4$. For any graph $G$ (left image), a tournament $T$ is constructed by introducing node gadgets and edge gadgets as follows. A \emph{node gadget} $N_v$ consists of four nodes $v_1, v_2, v_3, v_4$ and three supernodes $\overline{v}_1, \overline{v}_2, \overline{v_3}$, where the latter are tournaments themselves. The center image shows the node gadget for node $v$. An \emph{edge gadget} for $e=\{u,v\}$ consists of two nodes $e_1, e_2$ and edges connecting the node gadgets of $u$ and $v$; see the right image. Nodes $x$ and $y$ are connected to all node gadgets as illustrated. All omitted edges point ``backwards'' (from right to left) and the direction of vertical edges, if not specified, can be chosen arbitrarily. 
% We ask for the MoV of $x$ with respect to $4$-kings.
}
%\label{fig:kkingsReduction}
\end{figure*}

% \begin{proof}
Just as \citet{BaierEr10}, we reduce from the vertex cover problem. Given an undirected graph $G=(V,E)$, a subset of the nodes $U \subseteq V$ is called a \emph{vertex cover} if for each edge in $E$, at least one of its endnodes is contained in $U$. The problem is to determine the minimum cardinality of a vertex cover. %\citetAppendix{BaierEr10}

The proof is divided into three parts. We start by showing the reduction for $k=4$, then extend the construction to arbitrary constants $k$, and finally argue that the we can still carry out the reduction even when we restrict ourselves to cases in which $k \geq n^{1-\epsilon}$ for arbitrarily small $\epsilon>0$. 

\paragraph{Part I.}
% \textit{Part I.}\hspace{1ex}
Let $k=4$. From a given instance $G$ of the vertex cover problem, we construct an instance $(T,x)$ of the MoV for $4$-kings problem as follows. (An illustration of the construction can be found in the three images of \Cref{fig:kkingsReduction}.) 
For ease of presentation we define $n_G := |V(G)|$. For every node $v \in V(G)$ we introduce a \emph{node gadget}, indicated by a grey box in \Cref{fig:kkingsReduction} and consisting of four nodes $v_1,v_2,v_3$ and $v_4$ as well as three \emph{supernodes}, $\overline{v}_1, \overline{v}_2$ and $\overline{v}_3$. A supernode is itself a tournament consisting of $2 n_G + 1$ nodes which are arranged within a circle such that each node has outgoing edges towards the next $n_{G}$ nodes on the circle and ingoing edges from all other nodes. See the center image in \Cref{fig:kkingsReduction} for a close-up of a node gadget with $n_G=2$. Moreover, we introduce two nodes $x$ and $y$. Corresponding to the node gadget there exist edges 
\[(x,v_1),(x,v_2),(v_1,v_2),(v_2,v_3),(v_3,v_4),(v_4,y),(v_3,y)\] 
as well as \emph{superedges} (\ie edges that connect at least one supernode)
\[(x,\overline{v}_1),(\overline{v}_1,\overline{v}_2),(\overline{v}_2,\overline{v}_3),(\overline{v}_1,v_1),(\overline{v}_2,v_2),(\overline{v}_3,v_3),(\overline{v}_3,v_4).\] 
More precisely, a superedge $(\overline{u},\overline{v})$ is a set of edges going from all nodes in $\overline{u}$ to all nodes in $\overline{v}$. We extend the graph to a tournament by letting all unspecified edges point ``backwards'' (from right to left). To this end see the horizontal arrangement within \Cref{fig:kkingsReduction}. In case that two nodes are on the same vertical line and the edge between them has not been specified previously, it can be chosen arbitrarily. 

For every edge $e=(u,v) \in E$ we introduce an \emph{edge gadget}, consisting of two nodes $e_1$ and $e_2$ and the (super)edges $(v_2,e_1),(e_1,u_2),(e_1,u_3),(\overline{v}_2,e_1)$ and analogously $(u_2,e_2),(e_2,v_2),(e_2,v_3),(\overline{u}_2,e_2)$. For the sake of clarity we omit the superedges $(\overline{v}_2,e_1)$ and $(\overline{u}_2,e_2)$ in the illustration of \Cref{fig:kkingsReduction}. For all nodes $w \in V(G) \setminus \{u,v\}$ we add edges $(e_1,w_2)$ and $(e_2,w_2)$. Again, all unspecified edges points backwards and the direction of non-specified vertical edges can be chosen arbitrarily. 

Lemma \ref{lem:xcut-kkings-destr} implies that the cardinality of a minimum $4$-bounded $x$-cut is equal to the cardinality of a minimum \drs for $x$ with respect to $4$-kings. Hence, showing the following claim suffices to prove the theorem for $k=4$. 

\begin{claim}
For $c \leq n_G$ there exists a vertex cover of size $c$ in $G$ iff there exists a $4$-bounded $x$-cut of size $c+n_G$ in $T$.
\end{claim}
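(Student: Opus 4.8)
The plan is to prove both directions of the equivalence by carefully analyzing which cuts are relevant, using the structure of the node and edge gadgets. Throughout I will use \Cref{lem:xcut-kkings-destr} implicitly, so it suffices to reason directly about $4$-bounded $x$-cuts.

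\textbf{The easy direction (vertex cover $\Rightarrow$ cut).} First I would show that a vertex cover $U$ of size $c$ yields a $4$-bounded $x$-cut of size $c + n_G$. The construction is as indicated in the proof sketch: for each node $v$ I reserve one unit of ``budget'' spent inside the node gadget $N_v$, and if $v \in U$ I spend it by cutting both $\ell_v = (x,v_2)$ and $r_v = (v_3,y)$, while if $v \notin U$ I cut the single edge $m_v = (v_2,v_3)$. This gives $2c + (n_G - c) = c + n_G$ edges. I then need to verify that the resulting set disconnects $x$ from $y$ within length $4$. The key observation is that every length-$\leq 4$ path from $x$ to $y$ must route through the node/edge gadgets via the ``short'' horizontal corridor $x \to v_2 \to v_3 \to y$ (or a detour through an edge gadget $e_1,e_2$), and the supernodes are too large (each has $2n_G+1$ nodes) to help shorten any alternative route. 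For edge-gadget detours, the critical point is that a path using the edge gadget of $e=\{u,v\}$ requires \emph{both} endpoints' corridors to survive, so covering $e$ (cutting $v_2 \to v_3$ or the $\ell/r$ pair of one endpoint) kills these paths as well. This is where I would check the lengths explicitly against the bound $4$.

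\textbf{The hard direction (cut $\Rightarrow$ vertex cover).} The main obstacle is the converse: given any $4$-bounded $x$-cut $C$ of size $c + n_G$, I must extract a vertex cover of size $c$. The natural strategy is a normalization/exchange argument showing that $C$ can be transformed, without increasing its size, into a \emph{canonical} cut using only edges of type $\ell_v, r_v, m_v$. The supernodes are deliberately oversized precisely so that cutting a superedge is never cost-effective: a superedge is a complete bipartite set of at least $(2n_G+1)$ edges, so it can never appear in a minimum cut of size at most $2n_G$ (since $c \leq n_G$). Hence no superedge is cut, and every supernode remains fully traversable in one ``step'' for length-counting purposes. This forces the surviving short paths to be exactly the corridors, and I would argue that within each node gadget the cut must ``pay'' at least one unit (either $m_v$, or the pair $\ell_v, r_v$), and must pay two whenever the gadget's corridor must be fully severed to handle an incident edge gadget. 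Defining $U = \{v : C \text{ pays } 2 \text{ in } N_v\}$, a counting argument (total budget $c + n_G$, with $n_G$ units forced as the baseline one-per-gadget) yields $|U| = c$, and the path-blocking requirement through each edge gadget shows $U$ is a vertex cover.

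\textbf{Where the difficulty concentrates.} The genuinely delicate step is establishing that the normalized cut blocks precisely the edge-gadget paths iff the corresponding vertex-cover constraint is satisfied, i.e.\ correctly matching the length-$4$ reachability conditions through $e_1, e_2$ to the requirement that at least one endpoint of $e$ lies in $U$. I expect this to require enumerating the possible $x$-$y$-paths of length at most $4$ through an edge gadget and checking that such a path survives exactly when neither endpoint's corridor is fully cut. Since this mirrors \citet{BaierEr10}, I would lean on their analysis, noting the deviations forced by our tournament requirement (the backward-pointing completion edges and arbitrary vertical edges must be shown not to create new short paths). Once the claim is established for $k=4$, extending to general constant $k$ and to $k \geq n^{1-\epsilon}$ proceeds by lengthening the corridors and padding with additional supernodes, as in Parts II and III of the construction.
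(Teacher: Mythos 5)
Your proposal follows essentially the same route as the paper's proof: the same forward construction (cut $\ell_v,r_v$ for cover vertices and $m_v$ otherwise), the same normalization of an arbitrary cut into one using only edges of type $\ell_v,m_v,r_v$, the same budget argument based on supernodes having $2n_G+1$ members (exceeding the budget $c+n_G\leq 2n_G$), and the same extraction of the vertex cover (with padding to reach size exactly $c$). However, one statement in your forward direction is wrong as written, and it sits at the crux of the reduction: cutting $m_v=(v_2,v_3)$ does \emph{not} kill the edge-gadget detours. The only two length-$4$ detours through the gadget of $e=\{u,v\}$ are $x\to v_2\to e_1\to u_3\to y$ and $x\to u_2\to e_2\to v_3\to y$; they bypass \emph{both} middle edges $m_u$ and $m_v$, using $\ell$ of one endpoint and $r$ of the other. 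This is exactly why a vertex cover, rather than an arbitrary one-edge-per-gadget selection, is needed, and this path is what the paper uses for the final contradiction (if $u,v\notin U$, the path $\{\ell_v,(v_2,e_1),(e_1,u_3),r_u\}$ survives the cut). Your later formulation---a detour ``survives exactly when neither endpoint's corridor is fully cut''---is the correct condition, so your plan goes through provided the verification is carried out with that condition and not with the parenthetical.

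Two smaller points. First, the arbitrary vertical edges \emph{do} create new length-$4$ paths (e.g., $\{\ell_v,(v_2,u_2),m_u,r_u\}$), so the task is not to show that no new short paths arise, but to check that each such path is blocked by every canonical cut; the paper does this by observing that any such path contains either both $\ell_u$ and $m_u$ or both $m_u$ and $r_u$ for some $u$, and is hence blocked regardless of whether $u\in U$. Second, in the backward direction you should draw explicitly the conclusion that the given $4$-bounded $x$-cut must in fact be an $x$-$y$-cut: since every $z\neq y$ retains at least $2n_G+1$ disjoint $4$-bounded $x$-$z$-paths through the supernodes, no cut of size $c+n_G\leq 2n_G$ can separate $x$ from any vertex other than $y$. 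Your ``supernodes remain traversable'' observation is the right ingredient, but the normalization step (replacing cut edges by corridor edges while preserving the cut property) is only justified once the separated vertex is pinned down to be $y$.
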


\begin{proof}[Proof of Claim]
We start by showing the implication from left to right. Let $U \subseteq V(G)$ be a vertex cover in $G$. For every node $v \in V(G)$ we name the following three edges inside the node gadget $\ell_v := (x,v_2), m_v := (v_2,v_3)$ and $r_v:=(v_3,y)$, these are depicted by red and dashed edges in \Cref{fig:kkingsReduction}. 
We construct the edge set $C \subseteq E(T)$, which we will show is a $4$-bounded $x$-$y$-cut, by iterating over all $v \in V(G)$: If $v \in U$, we choose $\ell_v$ and $r_v$ to be in the set $C$. On the other hand, if $v \not\in U$, we include $m_v$ in the set $C$. It is easy to see that $|C| = n_{G} + c$ and it remains to show that $C$ is a $4$-bounded $x$-$y$-cut.

Since the tournament $T$ contains a very large number of $x$-$y$-paths, we argue in the following that, when caring about $\mathcal{P}_{x,y}(4)$ only, we can virtually restrict ourselves to the set of ``visible'' paths in \Cref{fig:kkingsReduction}, i.e., paths that do not contain omitted edges. See \Cref{tab:distx} and \Cref{tab:disty} which contain distances from $x$ and to $y$ within the visible subgraphs of $T$, respectively. The columns of the tables correspond to the positions of the nodes within the horizontal alignment in  \Cref{fig:kkingsReduction}. 

\newcolumntype{?}{!{\vrule width 1pt}}
\begin{table}[!h] \centering
  \begin{tabular}{? c ? c ? c ? c ?}
   \specialrule{.1em}{.05em}{.05em} 
    Pos. 1 & Pos. 2 & Pos. 3 & Pos. 4 \\ \specialrule{.1em}{.05em}{.05em} 
   $\overline{v}_1  \;| \; 1 $ & $\overline{v}_2  \;| \; 2$ & $\overline{v}_3  \;| \; 3$ &  \\  \hline %
       $v_1  \;| \; 1$ & $v_2  \;| \; 1 $ & $v_3  \;| \; 2$ & $v_4  \;| \; 3$ \\ \hline %
    & $e_1  \;| \; 2 $ && \\ \specialrule{.1em}{.05em}{.05em} 
  \end{tabular}
  \caption{Distances from $x$ to nodes in node gadget $N_v$.} \label{tab:distx}
  \end{table}
  
  \begin{table}[!h]\centering
    \begin{tabular}{? c ? c ? c ? c ?}
   \specialrule{.1em}{.05em}{.05em} 
    Pos. 1 & Pos. 2 & Pos. 3 & Pos. 4 \\ \specialrule{.1em}{.05em}{.05em} 
   $\overline{v}_1  \;| \; 4 $ & $\overline{v}_2  \;| \; 3$ & $\overline{v}_3  \;| \; 2$ &  \\ \hline% \specialrule{.1em}{.05em}{.05em} 
       $v_1  \;| \; 3$ & $v_2  \;| \; 2 $ & $v_3  \;| \; 1$ & $v_4  \;| \; 1$ \\ \hline %\specialrule{.1em}{.05em}{.05em} 
    & $e_1  \;| \; 2 $ && \\ \specialrule{.1em}{.05em}{.05em}  
  \end{tabular}
  \caption{Distances from nodes in node gadget $N_v$ to $y$.} \label{tab:disty}
\end{table}

We first claim that the introduction of backward and vertical edges does not change these distances. To see this, let $v_i$ and $v_j$ be (super)nodes at position $i$ and $j$ respectively, where $i \leq j$. For any choice of $v_i$ and $v_j$ it holds that $dist(x,v_j) + 1 \geq dist(x,v_i)$ and $dist(v_i,y) + 1 \geq dist(v_j,y)$, where $dist()$ denotes the distance between two nodes. Hence the backward and vertical edges cannot decrease these distances.

Second, we claim that no backward edge is included in a path in $\mathcal{P}_{x,y}(4)$. Assume for contradiction that there exists such a path with backward edge $e$. First, assume that the head of $e$ is at position $1$. Observe that $e$ cannot be the first edge in the path. However, the minimum distance from position $1$ to $y$ is $3$, a contradiction. Second, assume that the head of $e$ is at position $2$ and observe that $e$ needs to be the third or fourth edge in the path since the path needs to reach position $3$ or $4$ before using edge $e$. However, the minimum distance from position $2$ to $y$ is $2$, a contradiction. Lastly, assume that the head of edge $e$ is in position $3$ and observe that edge $e$ needs to already be the fourth edge in the path since position $4$ cannot be reached in less than $3$ steps, a contradiction. 

Third, we claim that there exist exactly two types of vertical omitted edges which are included in paths in $\mathcal{P}_{x,y}(4)$. A vertical edge $(u,v)$ is included in a path in $\mathcal{P}_{x,y}(4)$ iff $dist(x,u) + dist(v,y) \leq 3$. Looking at \Cref{tab:distx} and \ref{tab:disty} we see that the only candidates for such an edge are $\{v_2,u_2\}$, $\{v_2,e_1\}$, $\{v_2,e_2\}$ and $\{v_3,u_3\}$. Since the edges $(v_2,e_1)$ and $(e_2,v_2)$ have already been specified, there remain only $\{v_2,u_2\}$ and $\{v_3,u_3\}$. No matter how the directions of these edges are chosen, they create two new paths of length four, e.g., $\{\ell_v,(v_2,u_2),m_u,r_u\}$ and $\{\ell_u,m_u,(u_3,v_3),r_v\}$. 

Lastly, we claim that no node contained in a supernode can be used in any path of $\mathcal{P}_{x,y}(4)$. To see this, note that the distances to $x$ and from $y$ sum up to five for all supernodes.

Having this in mind, we show that every path in $\mathcal{P}_{x,y}(4)$ includes some edge in $C$. First consider all paths in $\mathcal{P}_{x,y}(4)$ which only use edges within one node gadget, say the one corresponding to $v \in V$. These are exactly the paths $\{\ell_v,m_v,r_v\}$, $\{\ell_v,m_v,(v_3,v_4),(v_4,y)\}$ and $\{(x,v_1),(v_1,v_2),m_v,r_v\}$. All of these contain at least one of the edges $\ell_v,m_v$ and $r_v$ and hence, independent of whether $v \in U$ or not, the paths include an edge in $C$. Second, consider paths in $\mathcal{P}_{x,y}(4)$ which use edges within two node gadgets $u$ and $v$ which are not neighboring in the graph $G$. How these paths look exactly depends on the direction of the edge betweeen $u_2$ and $v_2$ as well as the edge between $u_3$ and $v_3$. W.l.o.g. we assume that they are $\{\ell_v,(v_2,u_2),m_u,r_u\}$ and $\{\ell_u,m_u,(u_3,v_3),r_v\}$. They have the property that for one of the nodes, say $u$, they contain either both $\ell_u$ and $m_u$, or both $m_u$ and $r_u$. Hence, independent of whether $u \in U$ or not, the paths include an edge in $C$. Lastly, consider paths in $\mathcal{P}_{x,y}(4)$ which use two node gadgets corresponding to neighboring nodes in $G$, say $u$ and $v$. These paths are $\{\ell_v,(v_2,e_1),(e_1,u_3),r_u\}$ and $\{\ell_u,(u_2,e_2),(e_2,v_3),r_v\}$ as well as (with the same w.l.o.g. assumption as earlier in this paragraph), $\{\ell_v,(v_2,u_2),m_u,r_u\}$ and $\{\ell_u,m_u,(u_3,v_3),r_v\}$. All of them have the property that they contain either both $\ell_v$ and $r_u$, or both $\ell_u$ and $r_v$. Since $U$ is a vertex cover we know that at least one of the pairs $\ell_v,r_v$ and $\ell_u,r_u$ is included in $C$ and therefore the paths contain an edge in $C$. We summarize that $C$ is a $4$-bounded $x$-$y$-cut of size $|U| + n_G$. 

We turn to prove the direction of the Claim from right to left. Let $C \subseteq V$ be a $4$-bounded $x$-cut of size $n_{G} + c$ with $c \leq n_{G}$. For any node $z \in V(T)\setminus \{y\}$, the set $C$ cannot be a $4$-bounded $x$-$z$-cut because there exist at least $2n_{G} +1 > n_{G}+c$ disjoint $4$-bounded $x$-$z$-paths due to the introduction of the supernodes. Hence, $C$ is a $4$-bounded $x$-$y$-cut. In the following, we transform $C$ so that it only contains edges of type $\ell_v$, $m_v$ and $r_v$. 

In our first step, we ensure that no edges connecting node gadgets or having an endnode from an edge gadget are included in $C$. Assume that $C$ contains an edge $e$ connecting gadgets corresponding to the nodes $u$ and $v$, where we consider $e_2$ to correspond to node $u$ and $e_1$ to node $v$. If $e$ is not included in any path of $\mathcal{P}_{x,y}(4)$, we simply delete $e$ from $C$. Otherwise, $e$ is contained in exactly one path from $\mathcal{P}_{x,y}(4)$. (To see this, consult the complete characterization of $\mathcal{P}_{x,y}(4)$ in the previous part of the proof.) Moreover, this path contains exactly one of $\ell_u$ and $\ell_v$. We replace $e$ by the edge $\ell_u$ or $\ell_v$, respectively, and obtain another $4$-bounded $x$-$y$-cut of the same size. 

In our second step, we guarantee that within the node gadget for each node $v$, either edge $m_v$ or edges $\ell_v$ and $r_v$ are selected. If one edge from this gadget is selected, it needs to be the edge $m_v$ since otherwise there exists at least one path in $\mathcal{P}_{x,y}(4)$ which does not contain an edge in $C$. If two or more edges are selected, we instead select edges $\ell_v$ and $r_v$, since all paths that contain an edge from the node gadget either contain $\ell_v$ or $r_v$. Hence, we obtain a new $4$-bounded $x$-$y$-cut of size at most the size of the previous cut.

After the transformation of $C$, we derive a vertex cover $U$ in the graph $G$ of size $c$. For each node $v \in V(G)$ we include $v$ in $U$ iff $\ell_v$ and $r_v$ are included in $C$. Clearly $|U| = |C| - n_{G}$. In case we previously reduced the cardinality of $C$, we simply add nodes to $U$ until $|U| = c$. Now, assume for contradiction that $U$ is not a vertex cover, i.e., there exists an edge $\{u,v\} \in E(G)$ such that $u,v\not\in U$. We conclude that $C$ contains both $m_u$ and $m_v$ and no other edge from the node gadgets of $u$ and $v$. Then the path $\{\ell_v,(v_2,e_1),(e_1,u_3),r_u\}$ does not contain an edge in $C$, a contradiction to $C$ being a $4$-bounded $x$-$y$-cut. 
\end{proof}

\begin{figure*}[htb]
\centering
\scalebox{1}{%\begin{minipage}{.6\textwidth}
\begin{tikzpicture}

%\tikzstyle{supernode}=[draw, thick, regular polygon,regular polygon sides=8,fill,inner sep=3pt,fill=white]
\tikzstyle{supernode}=[draw, thick, circle,fill,inner sep=3.5pt,fill=white]
\tikzstyle{superarrow}=[thick]

\fill [black!10, rounded corners=2ex] (-7,0) rectangle (7,4);
\fill [black!20, rounded corners=2ex] (-1,0) rectangle (7,4);

\node[circle,fill,inner sep=2pt](s) at (-8,1) {};\node[below=2pt] at (s){$x$};
\node[circle,fill,inner sep=2pt](t) at (8,1) {};\node[below=2pt] at (t){$y$};

\node[supernode](svb1) at (-6,3) {}; \node[above right] at (svb1){${\overline{v}_1}^{(3)}$};
\node[circle,fill,inner sep=2pt](sv11) at (-6,2) {}; \node[above right] at (sv11){$ {v_1}^{(3)}$};
\node[circle, fill, inner sep = 2pt](sv21) at (-6,1) {}; \node[above right] at (sv21){$ {v_2}^{(3)}$}; 

\node[supernode](svb2) at (-4,3) {}; \node[above right] at (svb2){${\overline{v}_1}^{(2)}$};
\node[circle,fill,inner sep=2pt](sv12) at (-4,2) {}; \node[above right] at (sv12){${v_1}^{(2)}$};
\node[circle, fill, inner sep = 2pt](sv22) at (-4,1) {}; \node[above right] at (sv22){${v_2}^{(2)}$}; 

\node[supernode](svb3) at (-2,3) {}; \node[above right] at (svb3){${\overline{v}_1}^{(1)}$};
\node[circle,fill,inner sep=2pt](sv13) at (-2,2) {}; \node[above right] at (sv13){${v_1}^{(1)}$};
\node[circle, fill, inner sep = 2pt](sv23) at (-2,1) {}; \node[above] at (sv23){${v_2}^{(1)}$}; 

\node[circle,fill,inner sep=2pt](v1) at (0,2) {}; \node[above right] at (v1){$v_1$};
\node[supernode](vb1) at (0,3) {}; \node[above right] at (vb1){$\overline{v}_1$}; 

\node[supernode](vb2) at (2,3) {}; \node[above=3pt] at (vb2){$\overline{v}_2$}; 
\node[circle,fill,inner sep=2pt](v2) at (2,2) {}; \node[above right] at (v2){$v_2$};

\node[supernode](vb3) at (4,3) {}; \node[above right] at (vb3){$\overline{v}_3$}; 
\node[circle,fill,inner sep=2pt](v3) at (4,2) {}; \node[above right] at (v3){$v_3$};
\node[circle,fill,inner sep=2pt](v4) at (6,2) {}; \node[above right] at (v4){$v_4$};

%EDGES 
\draw[->,thick] (s) -- (sv11);
\draw[->,superarrow] (s) -- (svb1);
\draw[->,thick] (s) -- (sv21);

\draw[->,thick] (sv11) -- (sv12);
\draw[->,superarrow] (svb1) -- (svb2);
\draw[->,thick] (sv21) -- (sv22);

\draw[->,thick] (sv12) -- (sv13);
\draw[->,superarrow] (svb2) -- (svb3);
\draw[->,thick] (sv22) -- (sv23);

\draw[->,thick] (sv13) -- (v1);
\draw[->,superarrow] (svb3) -- (vb1);
\draw[->,line width=.8mm,red!80!black,>={Latex[width=3mm,length=3mm]},dash pattern = on 4pt off 1pt] (sv23) -- node[below]{$\boldsymbol{\ell_v}$}(v2); %lv

\draw[->,line width=.8mm,blue!80!black,>={Latex[width=3mm,length=3mm]}] (v2) -- node[below]{$\boldsymbol{m_v}$}(v3); %mv

\draw[->,line width=.8mm,red!80!black,>={Latex[width=3mm,length=3mm]},dash pattern = on 4pt off 1pt] (v3) -- node[below]{$\boldsymbol{r_v}$}(t); %rv

\draw[->,thick] (v1) -- (v2);
\draw[->,thick] (vb1) -- (v1);
\draw[->,thick] (vb2) -- (v2);
\draw[->,thick] (vb3) -- (v3);

\draw[->,thick] (vb1) -- (vb2);
\draw[->,thick] (vb2) -- (vb3);
\draw[->,thick] (vb3) -- (v4);

\draw[->,thick] (v3) -- (v4);
\draw[->,thick] (v4) -- (t);
\draw[->,thick] (sv23) -- (v1);
%LABELS

\end{tikzpicture}
%\end{minipage}
}
\caption{Example illustration of the extended node gadget for $k=7$ as introduced in the proof of \Cref{thm:destr:kkings:nphard}.}\label{fig:destr:kkings:reduction2}
\end{figure*}
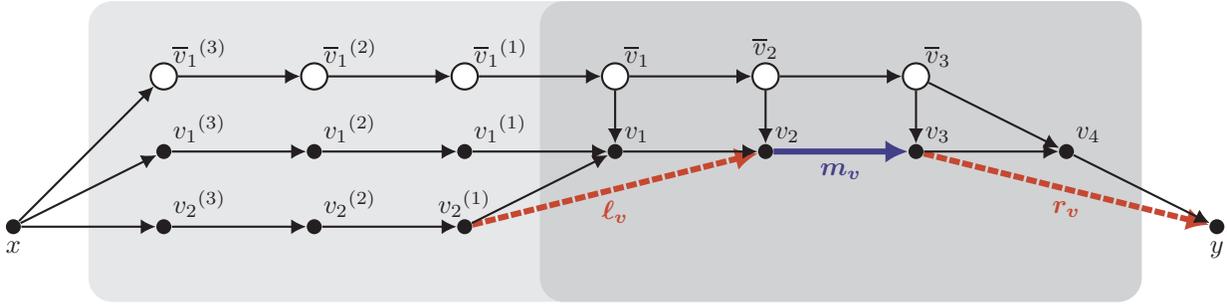

\paragraph{Part II.}
% \textit{Part II.}\hspace{1ex}
We turn to show how we can adjust the construction for fixed $k > 4$. We change the definition of a node gadget by extending it with a set of preceding (super)nodes \[\{{\overline{v}_1}^{(i)},{v_1}^{(i)},{v_2}^{(i)} \; | \; \forall \; i \in \{1,\dots, k-4\} \text{ and } \; \forall \; v \in V\}.\] 

(Super)edges go from $x$ to all nodes with superscript $(k-4)$ and more generally from a node with superscript $(i+1)$ to the node of the same type with superscript $(i)$ if they correspond to the same node $v \in V(G)$. Moreover, a superedge points from ${\overline{v}_1}^{(1)}$ towards $\overline{v}_1$, an edge from $v_1^{(1)}$ towards $v_1$ and edges from $v_2^{(1)}$ towards $v_1$ and $v_2$. For the connections among the nodes $v_1, v_2, v_3, v_4, \overline{v}_1, \overline{v}_2, \overline{v}_3,y$ we use the same edges as in the case $k=4$. Edge gadgets are defined exactly as in the case $k=4$. All non-specified edges point backwards and vertical edges can be chosen arbitrarily. For every $v \in V(G)$ we define $\ell_v := (v_2^{(1)},v_2)$, $m_v := (v_2,v_3)$ and $r_v:= (v_3, y)$. See \Cref{fig:destr:kkings:reduction2} for an illustration of the extended node gadget.  We show the same claim as previously and omit analogous arguments. 

\begin{claim}
For $c \leq n_G$ there exists a vertex cover of size $c$ in $G$ iff there exists a $k$-bounded $x$-cut of size $c+n_G$ in $T$.
\end{claim}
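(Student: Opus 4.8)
The plan is to piggyback on the $k=4$ analysis of Part I by showing that the prepended chain of $k-4$ preceding layers shifts every relevant quantity uniformly, so that the combinatorial core of the argument is left untouched. First I would pin down the distances. Following the specified (super)edges, the unique descending route $x \to v_2^{(k-4)} \to \dots \to v_2^{(1)} \xrightarrow{\ell_v} v_2$ (and the analogous routes on the $v_1$- and $\overline{v}_1$-tracks) forces $\mathrm{dist}(x,\cdot)$ for each \emph{core} node $v_1,v_2,v_3,v_4,\overline{v}_1,\overline{v}_2,\overline{v}_3$ to increase by exactly $k-4$ relative to \Cref{tab:distx}; for instance $\mathrm{dist}(x,v_2)=k-3$, $\mathrm{dist}(x,v_3)=k-2$, $\mathrm{dist}(x,v_4)=k-1$. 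Meanwhile the distances \emph{to} $y$ are unchanged from \Cref{tab:disty}, since the suffix structure on $v_1,\dots,v_4,\overline{v}_1,\dots,\overline{v}_3,y$ together with the edge gadgets is identical to the $k=4$ case. Consequently a core node lies on an $x$-$y$-path of length $\le k$ in the extended tournament exactly when it lay on one of length $\le 4$ in Part I.

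Next I would re-establish the path characterization. I would rerun the three Part~I arguments in the shifted regime: (i) no backward edge can occur on a path in $\mathcal{P}_{x,y}(k)$, because taking a backward step from a position with $x$-distance $d$ to one with smaller $x$-distance strictly inflates the total length beyond the budget $k$; (ii) no node inside a supernode lies on such a path, since its $x$- and $y$-distances still sum to $k+1$; and (iii) the only usable vertical omitted edges remain $\{v_2,u_2\}$ and $\{v_3,u_3\}$. This yields a length-preserving correspondence between $\mathcal{P}_{x,y}(k)$ and the core paths $\mathcal{P}_{x,y}(4)$ enumerated in Part~I: every short path is the unique descending prefix through the preceding chain concatenated with one of those core suffixes. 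Under this correspondence the redefined edges $\ell_v=(v_2^{(1)},v_2)$, $m_v=(v_2,v_3)$, $r_v=(v_3,y)$ occupy precisely the roles of their $k=4$ namesakes.

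With this correspondence in hand, both directions transfer verbatim. For the forward direction, given a vertex cover $U$ I would form $C$ by taking $\{\ell_v,r_v\}$ when $v\in U$ and $m_v$ otherwise; because the family of core paths and the pattern of which $\{\ell_v,m_v,r_v\}$ edge each hits is unchanged, $C$ is a $k$-bounded $x$-$y$-cut of size $n_G+|U|$. For the reverse direction, given a $k$-bounded $x$-cut of size $n_G+c$, I would first argue that the supernodes still supply more than $n_G+c$ internally disjoint $k$-bounded $x$-$z$-paths for every $z\neq y$ (so $C$ must in fact be an $x$-$y$-cut), and then apply the same two-step normalization---eliminating edges joining distinct gadgets or touching an edge gadget, then reducing each node gadget's contribution to either $m_v$ or $\{\ell_v,r_v\}$---and read off a vertex cover of size $c$.

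The main obstacle is the bookkeeping in the second step: I must confirm that prepending the $k-4$ layers creates \emph{no} unanticipated short paths. Concretely, I would verify that backward edges linking different tracks, or linking preceding layers of distinct node gadgets, cannot be chained into any $x$-$y$-path of length $\le k$, and that the supernode-connectivity bound guaranteeing $2n_G+1$ disjoint short $x$-$z$-paths survives the uniform distance shift. Once the exact $(k-4)$-shift of $x$-distances and the invariance of the $y$-distances are nailed down, everything else is a mechanical re-run of Part~I, which is exactly why the remaining arguments may be omitted as analogous; see \Cref{fig:destr:kkings:reduction2} for the case $k=7$.
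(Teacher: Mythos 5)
Your overall strategy---showing that the $k-4$ prepended layers shift every distance from $x$ by exactly $k-4$ while leaving distances to $y$ unchanged, and then replaying the Part~I analysis with the redefined $\ell_v,m_v,r_v$---is the same as the paper's, as is the two-step normalization in the reverse direction. However, your path characterization contains a genuine error, and it is precisely the verification you defer to the end that would fail. You assert in (iii) that the only usable vertical omitted edges remain $\{v_2,u_2\}$ and $\{v_3,u_3\}$, and hence that every path in $\mathcal{P}_{x,y}(k)$ is ``the unique descending prefix through the preceding chain concatenated with one of those core suffixes.'' This is false for $k>4$: a vertical edge joining preceding layers of two \emph{distinct} gadgets, say one oriented as $(v_2^{(i)},u_2^{(i)})$, lies on the path
\[
x \to v_2^{(k-4)} \to \dots \to v_2^{(i)} \to u_2^{(i)} \to \dots \to u_2^{(1)} \xrightarrow{\ell_u} u_2 \xrightarrow{m_u} u_3 \xrightarrow{r_u} y,
\]
whose length is $(k-3-i)+1+(i-1)+1+2=k$, so it belongs to $\mathcal{P}_{x,y}(k)$. (Within-gadget cross-track vertical edges such as $(v_1^{(j)},v_2^{(j)})$ yield analogous length-$k$ paths.) Since the construction leaves the orientation of all these vertical edges arbitrary, the characterization must accommodate them; your plan to ``verify that [edges] linking preceding layers of distinct node gadgets cannot be chained into any $x$-$y$-path of length $\le k$'' cannot succeed, and the claimed correspondence with $\mathcal{P}_{x,y}(4)$ breaks down.

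The gap matters in both directions. In the forward direction, such weaving paths \emph{are} cut by your $C$, but only because the length budget forces any of them to end with the complete tail $\ell_u,m_u,r_u$ of some gadget $u$, which $C$ hits whether or not $u\in U$; your argument checks only the paths in your (incomplete) correspondence and never establishes this. In the reverse direction the consequence is worse: believing such a vertical edge lies on no short path, your normalization would simply delete it from the cut, which can destroy the cut property---the displayed path would then be uncovered unless it happens to be cut elsewhere, which nothing guarantees. The paper's proof instead observes that each such vertical edge lies on \emph{at most one} path of $\mathcal{P}_{x,y}(k)$ (the prefix from $x$ and the suffix to $y$ are forced by the length bound to be unique shortest routes), that this path contains $\ell_u$ or $\ell_v$, and that the edge must therefore be \emph{replaced} by $\ell_u$ (resp.\ $\ell_v$) rather than removed. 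With that repair, and the corresponding completeness check of the path family in the forward direction, your argument goes through and coincides with the paper's.
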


\begin{proof}[Proof of Claim]
We start by showing the implication from left to right. Let $U \subseteq V(G)$ be a vertex cover. Analogously to before, we construct $C \subseteq E(T)$ by choosing for every $v \in V(G)$ the edges $\ell_v$ and $r_v$ whenever $v \in U$ and the edge $m_v$ when $v \not \in U$. While it is easy to see that $|C| = |U| + n_G$, we need to show that $C$ is a $k$-bounded $x$-$y$-cut. Hence, we are interested in the set $\mathcal{P}_{x,y}(k)$ and need to show that $C$ intersects each of its paths at least once. We aim to compare the set $\mathcal{P}_{x,y}(k)$ to the set of $4$-bounded $x$-$y$-paths which we characterized in the proof for the claim for $k=4$. To this end let $T_4$ be the tournament which we constructed in the first part of the proof. First, consider only the ``visible" subgraph of $T$ and note that, in comparison to the graph $T_4$, the nodes $v_1,v_2,v_3,v_4,e_1,\overline{v}_1,\overline{v}_2$ and $\overline{v}_3$ have a distance from $x$ which is increased by exactly $k-4$ while the distance to $y$ is the same as previously. Inserting backwards and vertical edges does not change these distances, due to the same arguments as before. Since the rest of the structure of the node gadget is equivalent, the subpaths of the paths in $\mathcal{P}_{x,y}(k)$ that are within the original node gadget (depicted by a darker grey box in \Cref{fig:destr:kkings:reduction2}) correspond to the $4$-bounded $x$-$y$-paths in $T_4$. Due to the same arguments as in the first part, $C$ is a $k$-bounded $x$-$y$-cut. 

We turn to prove the implication from right to left. Let $C$ be an $x$-cut of size $c + n_G$ with $c \leq n_G$; its existence is guaranteed by the previously shown implication from left to right and the fact that there always exists a vertex cover of size $n_G$. Analogously to the case $k=4$, we show that we can modify $C$ so that it only contains edges of type $\ell_v$, $m_v$ and $r_v$ and is still a $x$-cut with no greater cost. First, note that for every $z \in V(T) \setminus \{y\}$ there exist at least $2 n_G + 1$ disjoint $k$-bounded $x$-$z$-paths and hence $C$ is in particular an $x$-$y$-cut. 

In our first step, we ensure that no edges connecting node gadgets or having an endnode from an edge gadget are included in $C$. We follow a very similar argument as in the previous proof. First, assume that $C$ contains an edge $e$ connecting gadgets corresponding to the nodes $u$ and $v$, where we consider $e_2$ to correspond to node $u$ and $e_1$ to node $v$. If $e$ is a backwards edge it is not contained in any path in $\mathcal{P}_{x,y}(k)$ and we simply delete it from $C$. If $e$ is a vertical edge, then $e$ is included in at most one path in $\mathcal{P}_{x,y}(k)$. This is due to the fact that when moving from the gadget of $u$ to the gadget of $v$, the subpath from $x$ to the first node that is reached in node gadget $v$ is necessarily one step longer than the shortest possible path. Hence, the rest of the path is completely determined, as it needs to choose the unique shortest path from $x$ to $e$ as well as the unique shortest path from $e$ to $y$ in order to fulfill the length bound. Moreover, this unique path contains either edge $\ell_u$ or edge $\ell_v$, and we replace edge $e$ by $\ell_u$ or $\ell_v$, respectively. 

The second step, in which we guarantee that for every node gadget either the pair $\ell_v, r_v$ or the edge $m_v$ is chosen, proceeds completely analogously to the case $k=4$. We complete the proof by the same argument as before, showing that we can translate $C$ to a vertex cover $U \subseteq E(G)$ of size $|U| = c$. 
 \end{proof}

\paragraph{Part III.}

It remains to argue that even if we restrict ourselves to the problem with $k \geq n^{(1-\epsilon)}$, for any fixed $\epsilon>0$, we can still carry out the previously explained reduction in polynomial time. Let $\epsilon >0$ be given and the size of the vertex cover instance $G$ be denoted by $n_G := |V(G)|$ and $m_G := |E(G)|$. Moreover, we define $n_k$ to be the number of nodes of the tournament which we construct for a given $k$. We obtain 
\begin{align*}
    n_k & = (k-1)(2n_G + 1)n_G + (2k-4)n_G + 2m_G + 2\\
    &= (2 n_G^2 + 3n_G) k + (- 2n_G^2 + - 5n_G + 2m_G +2)\\
    &\leq \alpha k,
\end{align*}
where $\alpha := (2n^2_G + 3m_G)$.  Choose the smallest $k \in \mathbb{N}_{\geq 4}$ such that \[k \geq \alpha^{(1-\epsilon)/\epsilon}.\] This is still polynomial in $n_G$ and $m_G$ while it implies that \[k = (k^{\epsilon/(1-\epsilon)}k)^{(1-\epsilon)} \geq (\alpha k)^{(1-\epsilon)} \geq n_k^{(1-\epsilon)}.\] This concludes the proof of \Cref{thm:destr:kkings:nphard}. 

\paragraph{Comparison to the Reduction of \citet{BaierEr10}} %\citetAppendix{BaierEr10}
The previous proof is strongly based on a proof presented by \citet{BaierEr10} showing that computing the size of minimum $\ell$-length bounded $s$-$t$-cuts is NP-hard to approximate for $\ell \in \{4, \dots, \lfloor n^{(1-\epsilon)} \rfloor\}$, where $n$ is the size of the length bounded cut instance constructed in the reduction and $\epsilon>0$ can be arbitrarily small. In the following we discuss our main adjustments. % \citeAppendix{BaierEr10}

First, note that the minimum $\ell$-length bounded $s$-$t$-cut problem does not require the graph to be a tournament, which is why we needed to alter the construction by introducing backwards and vertical edges. This increased the number of paths significantly. 

Second, the problem discussed by \citet{BaierEr10} specifies two nodes $s$ and $t$ which ought to be separated, while our problem specifies only one node $x$ which should be separated from some node in $V(T) \setminus \{x\}$. To this end we introduced supernodes to help guarantee that all $x$-$z$-cuts for $z \in V(T)\setminus \{y,x\}$ are significantly more expensive than a minimum $x$-$y$-cut. In this way we artificially fix the node $y$ to be separated from $x$. The introduction of supernodes in turn lead to a different extension of the node gadget for $k>4$. Beyond that, the construction is very similar to the one by \citet{BaierEr10}. % \citetAppendix{BaierEr10} \citetAppendix{BaierEr10}

\subsection{Omitted Proofs from \Cref{sec:compNonWinners}}

\restateCopelandConstructive*

\begin{figure*}[t!]
\centering
\scalebox{.98}{
\begin{tikzpicture}
%SOURCE 

\node[circle,fill,inner sep=2pt](s) at (-3,7.8) {};\node[below=2pt] at (s){$s$}; \node[above=15pt] at (s){$-n(n-1)/2$};

%LEFTSIDE
\node[circle,fill,inner sep=2pt](e1) at (0,10) {}; \node[below=4pt] at (e1){$(u,v)$}; \node[above=2pt] at (e1){$0$};
\node[circle,fill,inner sep=2pt](e2) at (0,8.7) {}; \node[below=4pt] at (e2){$(u,y)$}; \node[above=2pt] at (e2){$0$};
\node[circle,fill,inner sep=2pt](e3) at (0,7.4) {}; \node[above] at (e3){};
\node[circle,fill,inner sep=2pt](e5) at (0,6.1) {}; \node[above] at (e5){};

%dots 
\node[circle,fill,inner sep=.5pt](do1) at (0,5.6) {};
\node[circle,fill,inner sep=.5pt](do2) at (0,5.5) {};
\node[circle,fill,inner sep=.5pt](do3) at (0,5.4) {};

\node[circle,fill,inner sep=2pt](e4) at (0,4.6) {}; \node[above] at (e4){};

%RIGHTSIDE
\node[circle,fill,inner sep=2pt](u) at (5,9.5) {}; \node[below=2pt] at (u){$u$}; \node[above=2pt] at (u){$0$};
\node[circle,fill,inner sep=2pt](v) at (5,8.5) {}; \node[below=2pt] at (v){$v$}; \node[above=2pt] at (v){$0$};
\node[circle,fill,inner sep=2pt](y) at (5,7.5) {}; \node[below=2pt] at (y){$y$}; \node[above=2pt] at (y){$0$};

%dots 
\node[circle,fill,inner sep=.5pt](d1) at (5,7) {};
\node[circle,fill,inner sep=.5pt](d2) at (5,6.9) {};
\node[circle,fill,inner sep=.5pt](d3) at (5,6.8) {};

\node[circle,fill,inner sep=2pt](v4) at (5,6.2) {};\node[above] at (v4){};
\node[circle,fill,inner sep=2pt](x) at (5,5) {}; \node[below=2pt] at (x){$x$}; \node[above=3pt] at (x){$c$};

%SINK
\node[circle,fill,inner sep=2pt](t) at (8,7.8) {};\node[below=2pt] at (t){$t$}; \node[above=14pt] at (t){$n(n-1)/2- c$};

%EDGES 
\draw[->,thick] (e1) -- (u) node [midway,above] {$0$};
\draw[->,thick] (e1) -- (v)  node [pos=.85,above] {$w{(u,v)}$};
\draw[->,thick] (e2) -- (u) node [near start,above] {$0$};
\draw[->,thick] (e2) -- (y)  node [pos=.85,above] {$w{(u,y)}$};
%SMALL EDGES 

%LAST EDGES 
\draw[->,thick] (u) -- (t);
\draw[->,thick] (v) -- (t);
\draw[->,thick] (y) -- (t);
\draw[->,thick] (v4) -- (t);

\draw[<-,thick] (e1) -- (s);
\draw[<-,thick] (e2) -- (s);
\draw[<-,thick] (e3) -- (s);
\draw[<-,thick] (e4) -- (s);
\draw[<-,thick] (e5) -- (s);

%LABELS
\node at (0,11.2) {$E(T)$};
\node at (5,11.2) {$V(T)$};

\end{tikzpicture}
}
\caption{An illustration of the construction in Theorem~\ref{thm:constructive-CL}. Each node has two labels, i.e., its name below and its balance above the corresponding circle. The cost of an edge is notated above its associated arrow and capacities of edges are omitted.}
\label{fig:constructive-CL}
\end{figure*}
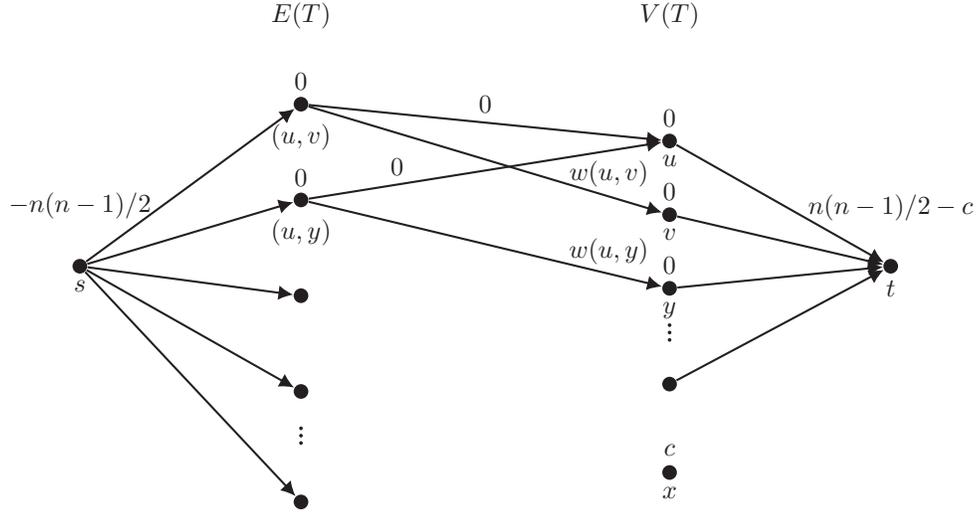

\begin{proof}
We aim to compute the MoV of the Copeland non-winner $x$. Any member of the Copeland set needs to have an outdegree of at least $\lceil (n-1)/2 \rceil$. We iterate over all $c \in \{\lceil (n-1)/2 \rceil, \dots, n\}$ and compute the minimum cost of making $x$ a Copeland winner given that the outdegree of all Copeland winners is $c$ after the reversal. To this end we construct a network $G$ where $V(G) = E(T) \cup V(T) \cup \{s,t\}$ with $E(T)$ being the edge set of the tournament $T$. 

We consider a slightly non-standard definition of a network in which edges have associated costs as well as capacities and nodes have balances. This definition allows to search for $b$-flows of minimum cost and is discussed in detail, e.g., by \citet{KorteVy06}. Informally speaking, the balance of a node corresponds to the amount of flow this node absorbs (or rather produces in case of a negative value) and a $b$-flow is a flow which respects the induced constraints. We define the balances in our network construction as follows: %\citetAppendix{KorteVy06}
\[b_v = \begin{cases} -n(n-1)/2 & \text{if } v = s\\ c & \text{if } v=x \\ n(n-1)/2 - c &\text{if } v=t \\0 &\text{else} .\end{cases}\]
There exists an edge from $e \in E(T)$ to $v \in V(T)$ iff $v$ is one of the endnodes of the edge $e$ in the tournament graph $T$. The edge $(e,v)$ has cost $0$ if $v$ is the tail of edge $e$ and otherwise it is equal to the cost of reversing $e$ in the tournament graph. All of these edges have capacity $1$. In addition there exist edges from $s$ to each node in $E(T)$ with zero cost and capacity $1$, as well as edges from each node in $V(T)\setminus \{x\}$ to $t$ with zero cost and capacity $c$.

We claim that from an integral $b$-flow in $G$ with capacity threshold $c$, we can construct a constructive reversal set for $x$ of equal cost such that $x$ is a Copeland winner with outdegree $c$ in $T^R$ and vice versa. Consider the illustration of the construction in \Cref{fig:constructive-CL}, in which the nodes are arranged in four levels, where the first level contains the source $s$, the second-level nodes correspond to the edges in $T$, the third-level nodes correspond to the nodes in $T$ and the last level contains the sink $t$. The nodes in the second layer have two outgoing edges in the network, representing the choices between \emph{keeping the direction of the corresponding edge in $T$ as it currently is} and \emph{reversing the edge} in the tournament. Any feasible integral flow can only send flow along one of the edges. 

More precisely, given an integral $b$-flow, the reversal set $R$ is determined by the edges pointing from the second to the third layer with non-zero cost and flow value $1$. The amount of flow that reaches a node in the third level corresponds exactly to the outdegree of this node in the tournament $T^R$. Since edges from the third to the fourth level have capacity $c$, a feasible flow guarantees that any node in $V(T)$ has outdegree at most $c$. Moreover, the node $x$ has a balance of $c$ and therefore it has outdegree exactly $c$ in $T^R$. Hence, $R$ is a constructive reversal set with cost equal to the cost of the flow. The other direction follows due to similar arguments. 

Integral $b$-flows of minimum cost can be found in polynomial time, for example by the minimum mean cycle-cancelling algorithm \cite{Klein67,GoldbergTa89} \footnote{The minimum mean cycle-cancelling algorithm computes integral $b$-flows if $b$ and the capacities in the network are integers.}. After repeating the construction for all $c \in \{\lceil (n-1) \rceil / 2, \dots, n\}$, we choose a constructive reversal set with minimum cost. 
\end{proof}

\end{document}